\numberwithin{equation}{section} 
\newtheorem{theorem}{Theorem}
\numberwithin{theorem}{section}
\newtheorem{proposition}[theorem]{Proposition}
\newtheorem{example}[theorem]{Example} 
\newtheorem*{example*}{Example} 
\newtheorem{lemma}[theorem]{Lemma}
\newtheorem{corollary}[theorem]{Corollary}
\newtheorem{definition}[theorem]{Definition}
\newtheorem{notation}[theorem]{Notation}
\newtheorem*{remark*}{Remark}
\newcommand{\C}{\mathbb{C}}
\newcommand{\R}{\mathbb{R}}
\newcommand{\Z}{\mathbb{Z}}
\newcommand{\Q}{\mathbb{Q}}
\newcommand{\N}{\mathbb{Z}_{\geq 0}}
\newcommand{\PP}{\mathbb{P}}
\newcommand{\A}{\mathcal{A}}
\newcommand{\ideal}[1]{{\left\langle #1 \right\rangle}}
\newcommand{\conSet}[1]{{\left\llbracket #1 \right\rrbracket}}
\newcommand{\card}[1]{{\lvert #1 \rvert}}
\DeclareMathOperator{\Conv}{ConvexHull}
\DeclareMathOperator{\Div}{Div}
\DeclareMathOperator{\Spec}{Spec}
\DeclareMathOperator{\Hom}{Hom}
\DeclareMathOperator{\Cl}{Cl}
\DeclareMathOperator{\rank}{rank}
\DeclareMathOperator{\AffineSpan}{AffineSpan}
\DeclareMathOperator{\Proj}{Proj}
\let\div\relax
\DeclareMathOperator{\div}{div}
\DeclareMathOperator{\Aut}{Aut}
\DeclareMathOperator{\Faces}{Faces}
\DeclareMathOperator{\codim}{codim}
\journal{Journal of Algebra}
\begin{document}

\begin{frontmatter}

\title{
  Dimension Results for Extremal-Generic Polynomial Systems over Complete Toric
  Varieties
} 

\author[mb]{Matías Bender}
\ead{matias.bender@inria.fr}
 \affiliation[mb]{organization={Inria and CMAP, CNRS, École Polytechnique},
             addressline={Institut polytechnique de Paris},
             city={Palaiseau},
             country={France}}

\author[pjs]{Pierre-Jean Spaenlehauer}
\ead{pierre-jean.spaenlehauer@inria.fr}
 \affiliation[pjs]{organization={Université de Lorraine, Inria, CNRS},
             country={France}}

\begin{abstract}  
  We study polynomial systems with prescribed monomial
  supports in the Cox ring of a toric variety built from a complete
  polyhedral fan.
  We present combinatorial formulas for the dimension of their associated
  subvarieties under
  genericity assumptions on the coefficients of the polynomials.
  Using these formulas, we identify at which degrees generic systems in
  polytopal algebras form regular sequences.
  Our motivation comes from sparse elimination theory, where knowing
  the expected dimension of these subvarieties leads to specialized
  algorithms and to large speed-ups for solving sparse polynomial
  systems.
  As a special case, we classify the degrees at which regular sequences defined by weighted
  homogeneous polynomials can be found, answering an open question in the Gröbner
  bases literature.
  We also show that deciding whether a sparse system is generically a
  regular sequence in a polytopal algebra is hard from the point of view of
  theoretical computational complexity.
\end{abstract}

\begin{keyword}
\textbf{AMS classification (MSC2020)}: 68W30, 14M25, 13P15, 52B20.
\end{keyword}

\end{frontmatter}

\section*{Introduction}

Toric varieties constructed from complete polyhedral fans form a large
class of compact varieties. They can be thought of as compactifications
of the affine space which have good properties for compactifying
affine varieties defined by polynomials with specific monomial
structures. This observation is the backbone of many results on the
algebra and geometry of sparse polynomial systems. The classical
choice is to construct a toric compactification by using the Minkowski
sum of the Newton polytopes of the polynomials. This leads to the
theory of mixed volumes and to the Bernstein–Khovanskii–Kushnirenko's
theorem~\cite{bernshtein_number_1975}.
In this paper, our main objective is to study properties of
alternative toric compactifications. A typical application and a
strong motivation for this work is the special case of weighted
polynomial systems: for these systems, the weighted projective space
is a natural toric compactification, but it is not always the one that
we get from the Newton polytopes of the polynomials. The weighted
projective space exhibits a typical difficulty: it is not smooth and
so we may need to deal with the intersection of non-Cartier divisors.

\smallskip

{\bf Main results.} We investigate the dimension of varieties defined
by generic homogeneous polynomials in the Cox ring of a toric variety
associated to a complete polyhedral fan. Such systems can be regarded
as homogenizations of affine polynomial systems with respect to a
toric compactification. The Cox ring has a canonical grading by the
divisor class group of the toric variety. We shall see that the
behavior of generic elements in the Cox ring whose degrees correspond
to non-Cartier divisor classes can be complicated: their associated
hypersurfaces may intersect non-properly torus orbits.  This
phenomenon may also occur for systems with prescribed monomial support
and generic coefficients.  This is featured in the following example:
\begin{example*}\label{ex:hyperelliptic}
Let us consider a complex genus-$g$ hyperelliptic curve given by an
equation of the form $y^2 - f(x) = 0$, where $f$ is a generic
univariate polynomial of degree $2g+1$. Homogenizations of
hyperelliptic equations are particularly important for cryptologists,
see. e.g.~\cite[Sec.~10.1.1]{galbraith_mathematics_nodate}. It is
well-known that the classical projective compactification obtained by
homogenizing this equation as $y^2 z^{2g-1} - z^{2g+1}f_i(x/z)=0$
introduces the point $(0:1:0) \in\PP^2$ at infinity. On the other
hand, the compactification with weight $g+1$ on the variable $y$
produces a curve with a point $(1:0:0)$ at infinity. The curve's
weighted homogeneous equation is $y^2 - z^{2g+2}f_i(x/z) = 0$. In both
cases, the point at infinity is a torus orbit: It is invariant by the
action of $\C^\times$.   
  \Cref{figure:hyperelliptic} illustrates the relation between the
  defining polynomials and the torus orbits of the toric
  compactifications in terms of polytopes.
  In both cases, we observe that the Newton polytope of the equation
  (dashed) is properly included in the polytope (solid gray)
  corresponding to the homogenization. The faces of the polytope in
  solid gray which do not intersect the dashed polytope correspond to
  the torus orbits which are contained in the closure of the curve
  with respect to the compactification. By the \emph{orbit-cone
    correspondence}, the vertex $(0, 5)$ on the left corresponds to
  the projective point $(0:1:0)$. On the right, the vertex $(6,0)$
  yields the weighted projective point $(1:0:0)$.
\end{example*}
The main thesis of this paper is that the torus
orbits that are generically included in toric compactifications of
algebraic varieties can be read off from the relationship between the
Newton polytope of the defining polynomials, the polyhedral fan
associated to the compactification, and the polytope associated to the
homogenization process.

\begin{figure}
  \centering
  \begin{tikzpicture}[scale=1]
    \draw[step=1cm,gray,thin, dotted] (-0.4,-0.1) grid (6.2,6.2);
\draw[thin, -latex] (0,0) -- (5.5,0);
\draw[thin, -latex] (0,0) -- (0,5.5);
\draw[thin, -latex] (0,0) -- (-1.2,-1.2);
  \filldraw[thick, fill= gray!30!white] (0,0) node[anchor=east] {$z^5$} -- 
  (1, 0) node[anchor=north] {$x z^4$} -- 
  (2, 0) node[anchor=north] {$x^2 z^3$} -- 
  (3, 0) node[anchor=north] {$x^3 z^2$} -- 
  (4, 0) node[anchor=north] {$x^4 z$} -- 
  (5, 0) node[anchor=north] {$x^5$} -- 
  (0, 5) node[anchor=east] {$y^5$} -- 
  (0, 2) node[anchor=east] {$y^2z^3$} -- 
  (0, 0);
  \filldraw[thick, pattern=north east lines] (0,0) -- 
  (5,0)  -- 
  (0, 2)  -- 
  (0, 0);
    \fill[fill=black] 
    (0, 0)  circle (3pt) -- 
    (1, 0)  circle (3pt)-- 
    (2, 0)  circle (3pt)-- 
    (3, 0)  circle (3pt)-- 
    (4, 0)  circle (3pt)-- 
    (5, 0)  circle (3pt)-- 
    (0, 2)  circle (3pt) -- 
  (0,0);
\end{tikzpicture}\quad\quad\quad
  \begin{tikzpicture}[scale=1]
    \draw[step=1cm,gray,thin, dotted] (-0.4,-0.1) grid (6.2,6.2);
\draw[thin, -latex] (0,0) -- (6.5,0);
\draw[thin, -latex] (0,0) -- (0,5.5);
\draw[thin, -latex] (0,0) -- (-0.25, -1);
  \filldraw[thick, fill= gray!30!white] (0,0) node[anchor=north] {$z^6$} -- 
  (1, 0) node[anchor=north] {$x z^5$} -- 
  (2, 0) node[anchor=north] {$x^2 z^4$} -- 
  (3, 0) node[anchor=north] {$x^3 z^3$} -- 
  (4, 0) node[anchor=north] {$x^4 z^2$} -- 
  (5, 0) node[anchor=north] {$x^5 z$} -- 
  (6, 0) node[anchor=north] {$x^6$} -- 
  (0, 2) node[anchor=east] {$y^2$} -- 
  (0, 0);
  \filldraw[thick, pattern=north east lines] (0,0) -- 
  (5,0)  -- 
  (0, 2)  -- 
  (0, 0);
    \fill[fill=black] 
    (0,0)   circle (3pt) -- 
    (1, 0)  circle (3pt)-- 
    (2, 0)  circle (3pt)-- 
    (3, 0)  circle (3pt)-- 
    (4, 0)  circle (3pt)-- 
    (5, 0)  circle (3pt)-- 
    (0, 2)  circle (3pt) -- 
  (0,0);
\end{tikzpicture}
  \caption{Two ways of homogenizing a imaginary genus-$2$ hyperelliptic
equation $y^2=f(x)$. On the left, the
projective homogenization in $\PP^2$; on the right, the homogenization with
respect to the weighted projective space, assigning weight $g+1$ on the variable
$y$.\label{figure:hyperelliptic}}
\end{figure}
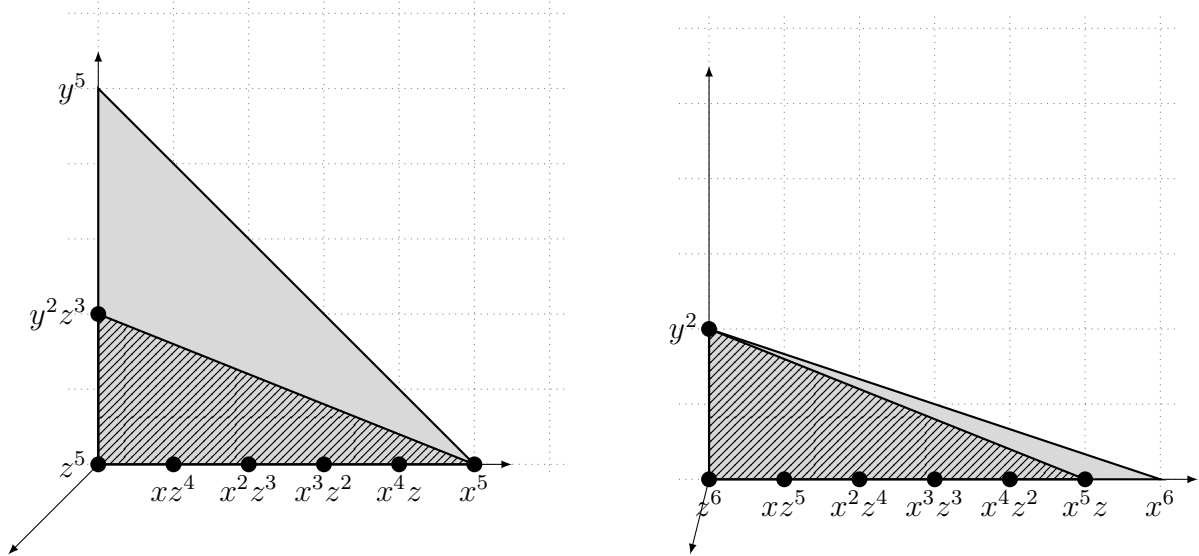

In Theorem~\ref{thm:dimensionGenericToricSystem},
we show that this relation can be controlled by a combinatorial criterion,
which provides a formula for the dimension of the subscheme defined by
generic homogeneous elements in the Cox ring in terms of their
degrees.
Our criterion extends to generic homogeneous polynomial systems with a
prescribed monomial support.
We also study the case of polytopal algebras constructed over rational
polytopes, which are $\N$-graded
homogeneous coordinate rings (not necessarily generated in degree one)
for projective toric varieties.
In this context, Theorem~\ref{thm:regSeqInPolytopalAlg} shows that our dimension results provide a
criterion to identify systems of polynomial equations which form
regular sequences under genericity assumptions on their coefficients.

An important case of study is the weighted projective space associated
to a sequence of weights $(w_0,\ldots, w_n)\in\Z_{> 0}^{n+1}$. For
$(d_1,\ldots, d_r)\in\Z_{> 0}^r$, Theorem~\ref{thm:weighthomregseq}
determines the dimension of the variety associated to a generic
sequence of weighted homogeneous polynomials
$f_1,\ldots, f_r\in \C[X_0,\ldots, X_n]$, where $\deg(X_i)=w_i$ and
$\deg(f_j)=d_j$. In particular, deciding when such systems are
regular sequences is an important question to estimate the complexity
of computing Gr\"obner bases~ \cite{faugere_complexity_2016}. This
question remained open before our work; we provide a complete
combinatorial answer to it; see \Cref{thm:weighthomregseq}.

Finally, we investigate whether our combinatorial criterion can be
checked easily from the point of view of computational complexity. In
Section~\ref{sec:complexity}, we
show that it is related to an NP-hard problem, which indicates
that automating the search for suitable toric compactifications might
not be an easy task, as previous works on multi-homogeneous systems
also suggest~\cite{malajovich2007computing}.

\smallskip

{\bf Applications and motivations.}
Our main motivation comes from symbolic computation and from the theory
of Gröbner bases. A fundamental problem in this area is to design
computational methods and tools for complexity analysis for solving
polynomial systems with specific monomial structures. A classical
approach to this question is to consider the toric compactification
associated to the normal fan of the Minkowski sum of the Newton
polytopes of the polynomials in the system.

In this context, solving methods based on numerical homotopies
\cite{morgan1987homotopy, huber1995polyhedral}, resultants
\cite{sturmfels1991sparse, canny1993efficient} and others
\cite{herrero2013affine, bender2022toric} have been developed and have
given birth to fruitful lines of work. During the last decades, new
methods related to Gröbner bases have been proposed, expanding the
symbolic toolbox for manipulating and solving sparse systems
\cite{faugere2014sparse, mourrain2014toric, bender_groebner_2019}. Our
long-term goal is to expand the toolbox to the case of toric
compactifications which do not necessarily correspond to the Newton
polytopes of the equations.
This line of work is currently very active in the context of homotopy
continuation algorithms: on the one hand, some works, e.g.
\cite{bihan2019criteria,Chen2019}, focused on constructing
polytopes with mixed volume equal to that of the
input Newton polytopes; on the other hand, when the input systems have
less solutions than the mixed volume of
their Newton polytopes, tools as toric deformations
\cite{burr2023numerical} and tropical geometry
\cite{helminck2022generic} were used in order to construct
more suitable compactifications.
In the context of Gr\"obner basis algorithms, there are complexity
bounds for the special case of weighted homogeneous systems
\cite{faugere_complexity_2016}.  Our new results on polytopal algebras
(Theorem~\ref{thm:regSeqInPolytopalAlg}) enhance the approach in
\cite{faugere2014sparse} and suggest that it might lead to more
general complexity bounds.

Deciding whether a compactification is good for a specific
application --- for instance deciding if the compactification of an
affine variety obtained via homogenization does not introduce
high-dimensional components, or deciding if the number of solutions
equals generically a number obtained combinatorially --- is also of
first importance. Unfortunately, this is a hard problem. For example,
in \cite{malajovich2007computing}, the authors study a simpler class
of compactifications restricted to multi-projective spaces and they
prove that finding the one that minimizes the multi-homogeneous Bézout
bound is an NP-hard problem.
Our results in Section~\ref{sec:complexity} complement this line of
work by showing that, in general, it is hard to decide whether
homogenizations create unwanted high-dimensional components.

\smallskip

{\bf Organization of the paper.} First, we introduce notations that we shall use
throughout this paper. In Section~\ref{sec:dimensionres}, we prove the main general dimension
result for extremal-generic systems in the Cox ring of a complete toric
variety.
Section~\ref{sec:completeinter} is devoted to the study of complete
intersections, showing in particular that the dimension
results can be strengthened in the more favorable case when the
degrees correspond to $\Q$-ample divisors.
In Section~\ref{sec:polytopalAlgebra}, we consider the special case of
systems in polytopal algebras, which correspond to homogeneous
coordinate rings for projective toric varieties.
We also identify the degrees at which regular sequences occur over weighted projective
spaces.
Finally, in Section~\ref{sec:complexity}, we end this paper by
investigating the theoretical complexity of computing generic
dimensions.

\section*{Notation}

Unless explicitly stated otherwise, we shall use the same (standard) notation as in \cite{cox2011toric}. In order to
avoid any problem related to the base field and positive characteristic, we
assume throughout this paper that every variety is defined over $\C$. Indeed,
most known results that we shall invoke are stated for toric varieties over
$\C$.

In what follows, we let $N$ be a Euclidean lattice, i.e. $N$ is a free abelian group
of finite rank $n$ endowed with a symmetric positive-definite bilinear form on
$N_\R := N\otimes_\Z\R$ defining a Euclidean norm. We let $M$ denote the dual lattice for the Euclidean
norm, and we set $M_\R:=M\otimes_\Z \R$. We use the notation $\C[M]$ to denote the Laurent polynomial ring with
exponents in $M$ and coefficients in $\C$. For $m\in M$, we let $\chi^m$ denote
the associated character in $\C[M]$. For an affine semigroup $\mathcal S \subset M$, we
let $\C[\mathcal  S]$ denote the corresponding subalgebra of $\C[M]$.

The letter $X$ denotes the $n$-dimensional normal toric variety
associated to the lattices $(N,M)$ and to a complete rational fan
$\Sigma$ over $N_\R$.
We let $\Sigma(1)$ denote the rays of $\Sigma$ and
$S = \C[x_\rho : \rho\in \Sigma(1)]$ denote the Cox ring of $X$ ---
called the \emph{total coordinate ring} of $X$
in~\cite[Ch.~5]{cox2011toric} --- with its natural grading by the
\emph{class group} $\Cl(X)$ of $X$~\cite[Def.~4.0.13]{cox2011toric}.
By slight abuse of notation, we say that a divisor class $\alpha\in\Cl(X)$ is
\emph{Cartier} if it is the class of a Cartier divisor; this corresponds to
elements in the Picard group of $X$ \cite[Thm.~6.0.20]{cox2011toric}. Similarly, we say that a divisor class is
\emph{effective} if it is the class of an effective divisor.
Given a cone $\sigma \in \Sigma$, we denote by $\sigma(1)$ the rays of
$\sigma$. For a ray $\rho\subset N_\R$, we let $u_\rho\in N$ denote its
primitive vector, i.e. the only element in $N$ satisfying $\Z_{\geq 0}
\cdot u_\rho =
\rho\cap N$.
We let
$B = \langle \prod_{\rho\notin \sigma(1)} x_\rho : \sigma\text{ cone
  of }\Sigma\rangle\subset S$ denote the irrelevant ideal.  Since
$\Sigma$ is complete, for each $\alpha\in\Cl(X)$, the $\C$-vector
space $S_{\alpha}$ of homogeneous elements of degree
$\alpha$ in $S$ is finite-dimensional
\cite[Prop.~4.3.8]{cox2011toric}.  For simplicity, we call
$T$-divisors the divisors on $X$ which are invariant under the action
of the torus~\cite[Exercise~4.1.1]{cox2011toric}. 
The class group $\Cl(X)$ is generated by the $T$-divisors
$\{D_\rho : \rho \in \Sigma(1)\}$, where $D_\rho$ is the divisor
defined by the closure of the $T$-orbit $O(\rho)$ associated to
$\rho \in \Sigma(1)$ \cite[Thm.~3.2.6]{cox2011toric}.

Given a finite-dimensional vector space $V \subset \C^n$, we say that
a property holds generically on $V$ if it holds on a dense subset
for the Zariski topology.
In this paper, $V$ is often a subspace of polynomial systems in
$S_{\alpha_1} \times \dots \times S_{\alpha_r}$, for degrees
$\alpha_1,\dots,\alpha_r \in\Cl(X)$. When $V$ is such a
finite-dimensional space of polynomials, we say that a property holds
for a generic system (in $V$) if it holds generically on $V$.
Given a degree $\alpha \in \Cl(X)$, a monomial set $\A \in S_\alpha$,
and a homogeneous polynomial $f \in S_\alpha$, we say that $f$ has
support $\A$ if all the monomials appearing in $f$ with nonzero coefficient
belong to $\A$ (monomials in $\A$ are allowed to appear with zero coefficient, which is important for the
proofs in Section~\ref{sec:polytopalAlgebra}).

Throughout this paper, all polytopes are supposed to be convex and bounded. 
We use the notation $\conSet{1, n}$ as a shorthand for
the interval of integers $\{1,\ldots, n\}$. The word \emph{family} is used to denote finite
multisets.

\section{Dimension results for extremal-generic
systems}\label{sec:dimensionres}

The aim of this section is to study the expected dimension of a
closed subscheme $Y$ associated to a generic system of homogeneous
polynomials of respective degrees
$\alpha_1,\ldots, \alpha_r \in \Cl(X)$ in the Cox ring of $X$.
To this end, we study the expected dimension of the intersection $Y$
with torus orbits.
Our study goes beyond generic systems and can be applied to a broader
class of systems that we call extremal-generic
(\Cref{def:extremal-generic}).

The main property that we exploit is that toric varieties associated
to complete fans can be decomposed as a disjoint union of finitely many tori.
This decomposition is
called the \emph{orbit-cone correspondence}, and it describes the
toric variety $X$ as the union of the orbits of the points in $X$
under the action of the torus. Moreover, there is a bijection between
the cones of $\Sigma$ and these torus orbits, see
e.g.~\cite[Chapter~3]{cox2011toric}.
In order to compute the dimension of the subvariety $Y$, we study how
such a subvariety intersects with torus orbits $O(\sigma)\subset X$
associated to cones $\sigma$ in $\Sigma$. Then, the dimension of $Y$
will correspond to the maximum among the dimensions of these
intersections.

\subsection{Hypersurfaces}

In this section, we will focus on hypersurfaces $Y$ defined by
$f \in S_\alpha$, for some fixed $\alpha \in \Cl(X)$. Given a cone
$\sigma$ in $\Sigma$, we shall see that one of the following three
situations can happen, and this can be discriminated by investigating
combinatorial properties of $\sigma$:

\begin{itemize}
  \item (incompatibility case) All hypersurfaces of $X$ associated to
    polynomials in $S_{\alpha}$ contain the torus orbit $O(\sigma)$;
  \item (non-essential case) The orbit $O(\sigma)$ does not intersect any hypersurface
    associated to a polynomial in $S_{\alpha}$;
  \item (essential case) a generic polynomial in
    $S_{\alpha}$ defines a hypersurface in $O(\sigma)$.
\end{itemize}

We will compute the intersection of closed subschemes in $X$ with
a torus orbit $O(\sigma)$ in a specific ambient affine scheme given
by $U_\sigma = \cup_{\tau\subset\sigma} O(\tau)\subset X$ where $\tau$
ranges over the subcones of $\sigma$. Then $U_\sigma$ is an affine
toric variety isomorphic to $\Spec(\C[\sigma^\vee \cap M])$
\cite[Thm.~1.2.18]{cox2011toric}.
An enlightening way of thinking about the coordinate ring of
$U_\sigma$ is by using the isomorphism between
$\C[\sigma^\vee \cap M]$ and the degree-zero part
$(S_{x^{\hat{\sigma}}})_0$ of the localization of $S$ at
$x^{\hat{\sigma}} :=
\prod_{\substack{\rho\in\Sigma(1)\\\rho\notin\sigma(1)}}x_\rho$.
This isomorphism is given by
\begin{align}\label{eq:isom_local}
  \begin{array}{l c c l}
\pi_\sigma^* : & \C[\sigma^\vee \cap M] & \xrightarrow{\sim} &
(S_{x^{\hat{\sigma}}})_0, \\
& \chi^m & \mapsto & \pi_\sigma^*(\chi^m) = \displaystyle\prod_{\rho \in \Sigma(1)} x_\rho^{\langle m ,
  u_\rho \rangle}.
  \end{array}
\end{align}
For more details, see \cite[Thm~5.1.11]{cox2011toric}.
Let $V(\sigma)$ be the closure of the torus orbit $O(\sigma)$ in $X$.
According to the orbit-cone correspondence,
$V(\sigma) \cap U_\sigma = O(\sigma)$.
Hence, $O(\sigma)$ is closed in $U_\sigma$, and
by~\cite[Thm~3.2.6]{cox2011toric}, the ideal in $\C[\sigma^\vee
\cap M]$ defining $O(\sigma)$ is
\begin{align*}
J_\sigma = \ideal{\chi^{m} : m \in \sigma^\vee \cap M \text{ but } m \not\in
  \sigma^\perp}.
\end{align*}

A useful remark is that the exponent vectors $m$ of the monomials
$\chi^m$ in $\C[\sigma^\vee \cap M]$ which do not belong to $J_\sigma$
correspond to the points in $\sigma^\perp \cap M$, so they lie on a
face of $\sigma^\vee$, and we have that
$$\C[\sigma^\vee \cap M] / J_\sigma \cong \C[\sigma^\perp \cap M].$$

The isomorphism from Equation~\eqref{eq:isom_local} provides us with a
canonical map which sends an ideal in $S$ to an ideal in
$\C[\sigma^\vee \cap M]$ by considering the zero-degree part of its
localization at $x^{\hat{\sigma}}$. It would be convenient if this map
sends principal ideals to principal ideals, since in this case this
map could be thought of as a restriction morphism. For instance, this
happens in the case of the projective plane where a polynomial $f$ of
degree $d$ in $\C[X,Y]$ can be thought of as the restriction to the
affine space $\{(x:y:z)\mid z\neq 0\}$ of the function $f(X/Z, Y/Z)$
on $\mathbb P^2$. Unfortunately, this nice property is not satisfied
in general, as illustrated by the following example.

\begin{example} \label{ex:nonPrincipal} Set $M = \mathbb Z^2$ and let
  $\Sigma$ be the complete fan in $\mathbb R^2$ with rays
  $\rho_1=\mathbb R_{\geq 0} (2,3)$,
  $\rho_2 = \mathbb R_{\geq 0}(-1, 0)$,
  $\rho_3=\mathbb R_{\geq 0} (0,-1)$. The associated toric variety is
  the weighted projective space $\PP(1,2,3)$, with Class group
  $\Cl(\PP(1,2,3))\simeq\mathbb Z$ and Cox ring $S=\C[X,Y,Z]$. The
  grading $S$ is given by $\deg(X^i Y^j Z^k) = i+2j +3k$. Then the
  polynomials of degree $2$ in $S$ form a $2$-dimensional
  vector space generated by $X^2$ and $Y$. Let $\sigma$ be the
  $2$-dimensional cone generated by $\rho_1$ and $\rho_2$. The
  intersection
  $\langle Y\rangle\cap (S_{Z})_0 = \langle XY/Z, Y^3/Z^2\rangle$ of
  the localized principal ideal $\langle Y\rangle \subset S_{Z}$ with
  the (weighted) degree-zero subring is not principal in the ring
  $(S_{Z})_0$.
\end{example}

However, the following two lemmas show that the situation is in fact
better when we focus on ideals belonging to
$\C[\sigma^\perp \cap M]$: principality is preserved
under a non-degeneracy condition. This condition, which is not
satisfied by the example above, asks for the existence of an effective
$T$-divisor with valuation zero at the rays of the cone.
When this condition is not satisfied, the restriction of the ideal to
$\C[\sigma^\perp \cap M]$ corresponds to the zero ideal.

\begin{lemma}\label{lem:compatible}
 Let $\sigma$ be a cone in $\Sigma$ and
  $D=\sum_{\rho\in\Sigma(1)} a_{\rho}D_{\rho}$ be an effective
  $T$-divisor such that $a_\rho = 0$ for $\rho\in \sigma(1)$. Let
  $\alpha:=[D]\in\Cl(X)$ be the class of $D$.
 Then for any $f\in S_\alpha$, the set $(f\cdot (S_{x^{\hat\sigma}})_{-\alpha})$ is a principal ideal in $(S_{x^{\hat\sigma}})_0$
 generated by $\frac{f}{m}$ where
 $ m  := \prod_{\rho\in(\Sigma(1)\setminus\sigma(1))}x_\rho^{a_\rho}.$
\end{lemma}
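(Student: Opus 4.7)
The plan is to reduce the lemma to verifying that the monomial $m$ is a unit in $S_{x^{\hat\sigma}}$ of degree $\alpha$; once this is established, multiplication by $m$ interchanges $(S_{x^{\hat\sigma}})_{-\alpha}$ and $(S_{x^{\hat\sigma}})_0$, and the claimed ideal equality follows formally.

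First, I would check that $m$ is a unit in $S_{x^{\hat\sigma}}$. The hypothesis $a_\rho = 0$ for every $\rho \in \sigma(1)$ ensures that $m$ involves only variables $x_\rho$ with $\rho \notin \sigma(1)$. Since $x^{\hat\sigma} = \prod_{\rho \notin \sigma(1)} x_\rho$ is invertible in the localization, and each such $x_\rho$ divides $x^{\hat\sigma}$, we have $1/x_\rho = \bigl(\prod_{\rho' \notin \sigma(1),\,\rho' \neq \rho} x_{\rho'}\bigr)/x^{\hat\sigma} \in S_{x^{\hat\sigma}}$. Hence every factor appearing in $m$ is invertible, and therefore so is $m$ itself.

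Next, I would compute $\deg(m) \in \Cl(X)$ using that $\deg(x_\rho) = [D_\rho]$ under the Cox-ring grading:
\[
\deg(m) \;=\; \sum_{\rho \notin \sigma(1)} a_\rho\, [D_\rho] \;=\; \sum_{\rho \in \Sigma(1)} a_\rho\, [D_\rho] \;=\; [D] \;=\; \alpha,
\]
where the middle equality again uses the vanishing $a_\rho = 0$ on $\sigma(1)$. Consequently $f/m$ is a well-defined homogeneous element of degree $0$ in $S_{x^{\hat\sigma}}$, i.e.\ $f/m \in (S_{x^{\hat\sigma}})_0$. The final step is then essentially formal: since $m$ is a unit of degree $\alpha$, the map $g \mapsto m g$ is an isomorphism of $(S_{x^{\hat\sigma}})_0$-modules $(S_{x^{\hat\sigma}})_{-\alpha} \xrightarrow{\sim} (S_{x^{\hat\sigma}})_0$, with inverse $h \mapsto h/m$. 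For any $g \in (S_{x^{\hat\sigma}})_{-\alpha}$ I can write $f g = (f/m)(m g)$ with $m g \in (S_{x^{\hat\sigma}})_0$; conversely any product $(f/m)\cdot h$ with $h \in (S_{x^{\hat\sigma}})_0$ equals $f \cdot (h/m)$ with $h/m \in (S_{x^{\hat\sigma}})_{-\alpha}$. This proves $f \cdot (S_{x^{\hat\sigma}})_{-\alpha} = (f/m) \cdot (S_{x^{\hat\sigma}})_0$, the desired principal ideal.

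There is no serious obstacle here: the entire substance of the lemma is packaged in the hypothesis $a_\rho = 0$ for $\rho \in \sigma(1)$, which is precisely the condition that prevents $m$ from containing variables that fail to be invertible in $S_{x^{\hat\sigma}}$. Example~\ref{ex:nonPrincipal} illustrates exactly what breaks without this assumption: over $\PP(1,2,3)$ no effective $T$-divisor of class $\alpha = 2$ has support disjoint from the rays of $\sigma$ (since $[D_{\rho_3}] = 3 > 2$), so no candidate $m$ exists and the restricted ideal is genuinely non-principal.
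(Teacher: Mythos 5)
Your proof is correct and follows essentially the same route as the paper's: observe that $m$ involves only variables dividing $x^{\hat\sigma}$, hence is a unit of degree $\alpha$ in $S_{x^{\hat\sigma}}$, so that multiplication by $m$ identifies $(S_{x^{\hat\sigma}})_{-\alpha}$ with $(S_{x^{\hat\sigma}})_0$ and yields the ideal equality. If anything, you are slightly more complete, since you verify both inclusions explicitly where the paper spells out only one.
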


\begin{proof}
  Consider any polynomial $g$ of degree zero belonging to the principal ideal
  generated by $f$ in the localized ring $S_{x^{\hat\sigma}}$. Then, we can write $g$ as
  $\hat{g} \cdot f$, where
  $\hat{g} \in (S_{x^{\hat\sigma}})_{-\alpha}$.
  As $\alpha = [D]$ and $a_\rho = 0$ for $\rho\in \sigma(1)$, we have
  that the monomial
  $m
  \in (S_{x^{\hat\sigma}})_{\alpha}$ is invertible in
  $S_{x^{\hat\sigma}}$.
  Hence, $g = \frac{m}{\hat{g}} \cdot \frac{f}{m}$ belongs to the principal ideal generated by $\frac{f}{m} \in (S_{x^{\hat\sigma}})_{0}$.
\end{proof}

When the requirements of
\Cref{lem:compatible} are not satisfied, the following lemma shows that all
hypersurfaces in $X$
associated to polynomials in $S_{\alpha}$ contain $O(\sigma)$. This corresponds to the \emph{incompatibility case} described at the
beginning of the section.

\begin{lemma}\label{lem:noncompatible}
  Let $\alpha\in\Cl(X)$ be a divisor class and $\sigma$ be a cone in $\Sigma$.  If there does
  not exist an effective $T$-divisor of the form
  $D=\sum_{\rho\in(\Sigma(1)\setminus\sigma(1))} a_{\rho}D_{\rho}$ in
  the class $\alpha$, then for any $f \in S_{\alpha}$ the ideal
  $(f\cdot (S_{x^{\hat\sigma}})_{-\alpha})\subset (S_{x^{\hat\sigma}})_0$ is contained in $\pi_\sigma^*(J_\sigma)$.
\end{lemma}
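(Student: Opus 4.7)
The plan is to reduce the statement to a monomial-level comparison of exponents in the localized Cox ring $S_{x^{\hat\sigma}}$, using that $\pi_\sigma^*(J_\sigma)$ is a monomial ideal in $(S_{x^{\hat\sigma}})_0$ and that the hypothesis combinatorially forbids certain monomial supports in $S_\alpha$. The first step is to translate the generators of $\pi_\sigma^*(J_\sigma)$ through the isomorphism in~\eqref{eq:isom_local}. A character $\chi^m$ with $m\in\sigma^\vee\cap M$ is sent to $\prod_\rho x_\rho^{\langle m,u_\rho\rangle}$, and since $\langle m,u_\rho\rangle\geq 0$ for every $\rho\in\sigma(1)$ with simultaneous equality exactly when $m\in\sigma^\perp$, the image $\pi_\sigma^*(J_\sigma)$ is generated by the Laurent monomials $\prod_\rho x_\rho^{c_\rho}\in(S_{x^{\hat\sigma}})_0$ for which $c_{\rho_0}>0$ for some $\rho_0\in\sigma(1)$. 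In particular, a degree-zero Laurent monomial lies \emph{outside} $\pi_\sigma^*(J_\sigma)$ iff its exponents on every ray of $\sigma(1)$ vanish.

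Next, I would use the hypothesis to constrain the monomials appearing in any $f \in S_\alpha$. Pick a monomial $x^a=\prod_\rho x_\rho^{a_\rho}$ of $S_\alpha$, so that $a_\rho\in\N$ and $\sum_\rho a_\rho[D_\rho]=\alpha$. If $a_\rho=0$ for every $\rho\in\sigma(1)$, then $\sum_{\rho\in\Sigma(1)\setminus\sigma(1)} a_\rho D_\rho$ would be an effective $T$-divisor of class $\alpha$ supported outside $\sigma(1)$, contradicting the assumption. Hence \emph{every} monomial of $S_\alpha$, and in particular every monomial appearing in $f$, admits at least one ray $\rho_a\in\sigma(1)$ with $a_{\rho_a}>0$.

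To conclude, since $\pi_\sigma^*(J_\sigma)$ is a monomial ideal and $f\cdot(S_{x^{\hat\sigma}})_{-\alpha}$ is $\C$-spanned by products $x^a \cdot x^b$ of a monomial $x^a$ of $f$ with a monomial $x^b\in(S_{x^{\hat\sigma}})_{-\alpha}$, it suffices to show each such $x^{a+b}$ lies in $\pi_\sigma^*(J_\sigma)$. Because $\rho_a\in\sigma(1)$, the variable $x_{\rho_a}$ is not inverted in $S_{x^{\hat\sigma}}$, so $b_{\rho_a}\geq 0$; therefore $(a+b)_{\rho_a}\geq a_{\rho_a}>0$, which by the first paragraph places $x^{a+b}$ in $\pi_\sigma^*(J_\sigma)$. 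The statement then follows by linearity.

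The only real subtlety I anticipate is setting up the monomial description of $\pi_\sigma^*(J_\sigma)$ carefully through $\pi_\sigma^*$ — one must correctly identify the generators as Laurent monomials in the $x_\rho$ with a strictly positive exponent on some ray of $\sigma(1)$. Once that dictionary is in place, the rest of the proof is straightforward exponent bookkeeping on the two factors $x^a$ and $x^b$.
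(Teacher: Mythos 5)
Your proof is correct and is essentially the paper's argument: the paper proves the contrapositive by extracting a monomial $m_1$ of $f$ and a cofactor $m_2$ with $m_1m_2\notin\pi_\sigma^*(J_\sigma)$ and then noting that $\div(m_2)$ has nonnegative valuation on the rays of $\sigma$, which are exactly the two facts you use directly (membership in $J_\sigma$ is detected by a strictly positive exponent on some ray of $\sigma$, and the variables $x_\rho$ for $\rho\in\sigma(1)$ are not inverted in $S_{x^{\hat\sigma}}$). The only difference is direct versus contrapositive phrasing, plus your more explicit verification of the monomial description of $\pi_\sigma^*(J_\sigma)$, which the paper leaves implicit.
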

\begin{proof}
  We prove the lemma by contraposition. Let $f\in S_\alpha$ be such that
  $(f\cdot (S_{x^{\hat\sigma}})_{-\alpha}) \not\subset \pi_\sigma^*(J_\sigma)$.
  Then there exists a monomial $m_1$ of $f$ such that $(m_1\cdot
  (S_{x^{\hat\sigma}})_{-\alpha}) \not\subset \pi_\sigma^*(J_\sigma)$.
  Therefore, there exists a monomial $m_2\in (S_{x^{\hat\sigma}})_{-\alpha}$
  such that $m_1m_2 \not\in \pi_\sigma^*(J_\sigma)$.
  This means that
  the principal $T$-divisor associated to the character $\chi^{m_1 \, m_2}$ has valuation zero
  at all the rays of $\sigma$.
  Since $\div(m_2)=\sum_{\tau\in\Sigma(1)} \lambda_\tau D_\tau$ is such that
  $\lambda_\tau
  \geq 0$ for all $\tau\in\sigma(1)$,
  $\div(m_1)=\sum_{\tau\in\Sigma(1)}\mu_\tau D_\tau$ must satisfy $\mu_\tau=0$
  for all $\tau\in\sigma(1)$. Moreover, $\div(m_1)$ is in the class $\alpha$, so
  it satisfies the desired properties. 
\end{proof}

Our next step is to understand the intersection of $O(\sigma)$ and of a
hypersurface defined by an element $f \in S_\alpha$ when the requirements of
\Cref{lem:compatible} are satisfied. As we
will see, this intersection is either empty (\emph{non-essential
  case}) or it defines a hypersurface in $O(\sigma)$ (\emph{essential
  case}).
In order to study this intersection, we consider the rings
$\C[\sigma^\vee \cap M]$ and $\C[\sigma^\perp \cap M]$ and we give
a concrete construction of the inverse of the map in
Eq.~\eqref{eq:isom_local}.
For a cone $\sigma \in \Sigma$ and an
effective Weil divisor $D$ of the form
$\sum_{\rho \in \Sigma(1)\setminus\sigma(1)} v_\rho \, D_\rho$ ---
i.e. $D$ has valuation zero at all rays in $\sigma(1)$ --- we define
a polytope together with its set of lattice points:
\begin{align}
  \label{eq:defPolytopeDivisor}
  \begin{array}{rcl}
    P_{D,\sigma} &:=& \left\{
 \begin{array}{l | l  l}
 m \in M_\R  & \langle m, u_\rho \rangle \geq - v_\rho & \text{for all } \rho \in \Sigma(1),  \\
 & \langle m, u_\rho \rangle = 0  & \text{for all } \rho \in \sigma(1)
 \end{array}
\right\}\subset \sigma^\perp.\\
A_{D,\sigma} &:=& (P_{D,\sigma}\cap M) \subset (M\cap\sigma^\perp).
  \end{array}
\end{align}

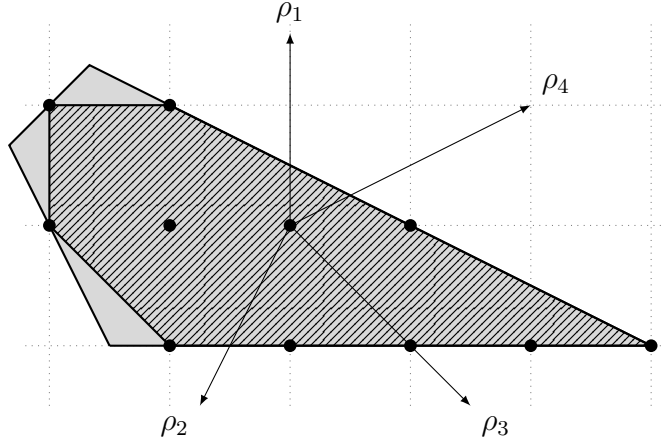
\begin{figure}
  \centering
  \begin{tikzpicture}[scale=1.6]
\draw[step=1cm,gray,thin, dotted] (-2.2,-1.5) grid (3.2,1.7);
    \filldraw[thick, fill= gray!30!white] (-3/2, -1) -- (-7/3, 2/3) -- 
    (-5/3, 4/3) --  (3, -1) -- (-3/2, -1);
\draw[thin, -latex] (0,0) -- (3/2,-3/2) node[below right] {$\rho_3$};
\draw[thin, -latex] (0,0) -- (0,1.6) node[above] {$\rho_1$};
\draw[thin, -latex] (0,0) -- (2, 1) node[above right] {$\rho_4$};
\draw[thin, -latex] (0,0) -- (-3/4,-3/2)  node[below left] {$\rho_2$};
\fill[fill=black] (-2,0) circle (1.5pt);
\fill[fill=black] (-2,1) circle (1.5pt);
\fill[fill=black] (-1,-1) circle (1.5pt);
\fill[fill=black] (-1,0) circle (1.5pt);
\fill[fill=black] (-1,1) circle (1.5pt);
\fill[fill=black] (0,0) circle (1.5pt);
\fill[fill=black] (0,-1) circle (1.5pt);
\fill[fill=black] (1,0) circle (1.5pt);
\fill[fill=black] (1,-1) circle (1.5pt);
\fill[fill=black] (2,-1) circle (1.5pt);
\fill[fill=black] (3,-1) circle (1.5pt);
  \filldraw[thick, pattern=north east lines] 
    (-2,1) -- (-1,1) -- (3,-1) -- (-1,-1) -- (-2, 0) -- (-2, 1); 
\end{tikzpicture}
  \caption{This picture represents the polytope $P_{D, \{0\}}$ and the set
  $A_{D,\{0\}}$ for the effective divisor
  $D=D_{\rho_1}+D_{\rho_2}+3\,D_{\rho_3}+4\,D_{\rho_4}$
  on the toric variety $X$ defined by the $2$-dimensional complete fan whose
  rays are $\rho_1=\R_>\cdot(0,1), \rho_2=\R_>\cdot(-1, -2), \rho_3=\R_>\cdot(1, -1),
  \rho_4=\R_>\cdot(2,1)$. This example shows that the convex hull of
  $A_{D,\{0\}}$ need not be equal to $P_{D,\{0\}}$.\label{fig:examplePA}}
\end{figure}

Since $\Sigma$ is complete, $P_{D,\sigma}$ is a polytope and the set
$A_{D,\sigma}$ is finite (see e.g.~\Cref{fig:examplePA}). Note that $P_{D,\{0\}}$ is the classical polytope
associated to a divisor, see e.g.~\cite[Eq.~4.3.2]{cox2011toric}.
An important remark is that if $D$ and $D'$ are linearly equivalent $T$-divisors with
valuation zero at all rays of $\sigma$, then
there exists a unique $m\in M\cap \sigma^\perp$
such that $D'= D+\div(\chi^m)$, and therefore $A_{D', \sigma} = A_{D, \sigma} +
m$.
This observation leads to the following proposition which is essentially a
restatement of \cite[Prop.~4.3.3]{cox2011toric}, see also \cite[Sec.~3.4]{fulton1993introduction}.

\begin{proposition}(\cite[Prop.~4.3.3]{cox2011toric})\label{prop:bij_pol_Sa}
  Let $D$ be a $T$-divisor of class $\alpha=[D]\in\Cl(X)$. Then $D$ defines a bijection
  between the monomials in $S_\alpha$ and $A_{D,\{0\}}$ by sending $\prod_{\rho \in \Sigma(1)}
  x_\rho^{a_\rho}\in S_\alpha$ to the uniquely defined $m\in A_{D,\{0\}}$ such
  that $\sum_{\rho \in \Sigma(1)} a_\rho D_\rho =  D +\div(\chi^m)$.
\end{proposition}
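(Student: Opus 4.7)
The plan is to exhibit two mutually inverse maps between the set of monomials in $S_\alpha$ and the set $A_{D,\{0\}}$, using only the standard divisor-of-a-character formula $\div(\chi^m) = \sum_{\rho\in\Sigma(1)} \langle m, u_\rho\rangle D_\rho$ and the short exact sequence $0 \to M \to \Div_T(X) \to \Cl(X) \to 0$. Write $D = \sum_{\rho} v_\rho D_\rho$; note that for $\sigma = \{0\}$ the orthogonality condition in the definition of $P_{D,\sigma}$ is vacuous, so $A_{D,\{0\}} = \{m \in M : \langle m, u_\rho\rangle \geq -v_\rho \text{ for all } \rho \in \Sigma(1)\}$.

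First I would go from monomials to lattice points. Given a monomial $\prod_\rho x_\rho^{a_\rho} \in S_\alpha$, the effective $T$-divisor $D' := \sum_\rho a_\rho D_\rho$ has class $\alpha = [D]$, so by the exact sequence above there exists $m \in M$ with $D' = D + \div(\chi^m)$. Uniqueness of $m$ follows from the injectivity of $M \to \Div_T(X)$, which is where I would invoke completeness of $\Sigma$: since the rays span $N_\R$, the pairing $m \mapsto (\langle m, u_\rho\rangle)_\rho$ has trivial kernel. Comparing coefficients ray-by-ray yields $a_\rho = v_\rho + \langle m, u_\rho\rangle$, and the nonnegativity $a_\rho \geq 0$ translates directly to $\langle m, u_\rho\rangle \geq -v_\rho$, placing $m$ in $A_{D,\{0\}}$.

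Next I would construct the inverse map. Given $m \in A_{D,\{0\}}$, set $a_\rho := v_\rho + \langle m, u_\rho\rangle$; these are nonnegative integers by the defining inequalities of $A_{D,\{0\}}$, so $\prod_\rho x_\rho^{a_\rho}$ is a bona fide monomial in $S$, and its associated divisor equals $D + \div(\chi^m)$, whose class is $\alpha$. The two assignments are inverse by construction: going one way substitutes the formula $a_\rho = v_\rho + \langle m, u_\rho\rangle$ into itself, and going the other way recovers the same $m$ via uniqueness in the exact sequence.

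The main obstacle is really bookkeeping rather than substance: one has to be careful that the $m$ attached to a monomial is uniquely defined in $M$ (so that the assignment is well defined as a function to $A_{D,\{0\}}$), and this is precisely where the completeness hypothesis on $\Sigma$ enters. Everything else is a direct translation between the additive language of $T$-divisors and the multiplicative language of monomials in the Cox ring.
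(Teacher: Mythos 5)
Your proof is correct and follows essentially the same route as the paper: both rest on the exact sequence $0\to M\to\Div_T(X)\to\Cl(X)\to 0$ (valid because completeness rules out torus factors), translate nonnegativity of the exponents into the inequalities defining $A_{D,\{0\}}$, and establish bijectivity — the paper via injectivity plus surjectivity, you via an explicit two-sided inverse, which is an immaterial difference.
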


\begin{proof}
Write $D = \sum_{\rho \in \Sigma(1)} v_\rho \, D_\rho$. Since $\Sigma$ is
  complete, it has no torus factor, hence there exists a unique $m\in M$ such that $\sum a_\rho D_\rho =
  D +\div(\chi^m) = D +\sum \langle m, u_\rho\rangle
  D_\rho$~\cite[Thm.~4.1.3]{cox2011toric}.
  As the coefficients $a_\rho$ are nonnegative, we must have $\langle m,
  u_\rho\rangle\geq -v_\rho$, which shows that $m\in A_{D, \{0\}}$. This map is
  injective since $\div(\chi^{m_1})=\div(\chi^{m_2})$ if and only if $m_1 =
  m_2$. Surjectivity comes from the fact that for any $m\in
  A_{D,\{0\}}$, the divisor $D +\div(\chi^m) = \sum_{\rho\in\Sigma(1)} a_\rho
  D_\rho $ is
  effective and belongs to the class $\alpha$, so it is the image of
  $\prod_\rho
  x_\rho^{a_\rho}\in S_\alpha$.
\end{proof}

\begin{proposition}\label{prop:map_compatible}
 Let $\sigma$, $D$, $\alpha$ be as in Lemma~\ref{lem:compatible}. We consider
  the $\C$-linear map $S_\alpha\rightarrow
  \C[\sigma^\perp\cap M]$ which maps a monomial in $S_\alpha$ to its image via
  the bijection in Proposition~\ref{prop:bij_pol_Sa} if it belongs to $A_{D,
    \sigma}$, or to zero if it does not.
    This map sends monomials to monomials, and every point in $A_{D,\sigma}$ is
    the image of a monomial in $S_\alpha$.
\end{proposition}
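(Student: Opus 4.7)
The plan is to unpack Proposition~\ref{prop:bij_pol_Sa} together with the defining inequalities of $A_{D,\sigma}$; no new geometric input is needed beyond the bijection already established, and the entire argument is bookkeeping.

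First I would record two inclusions which do the heavy lifting. Since $D = \sum_{\rho} v_\rho D_\rho$ satisfies $v_\rho = 0$ for $\rho \in \sigma(1)$, the equality $\langle m, u_\rho\rangle = 0$ imposed for $\rho \in \sigma(1)$ in the definition of $A_{D,\sigma}$ in particular implies the inequality $\langle m, u_\rho\rangle \geq -v_\rho = 0$ required for membership in $A_{D,\{0\}}$. Hence $A_{D,\sigma} \subset A_{D,\{0\}}$. Moreover, since $\sigma^\perp$ is cut out by the vanishing of $\langle \cdot , u_\rho\rangle$ for $\rho \in \sigma(1)$, the same equality yields $A_{D,\sigma} \subset \sigma^\perp \cap M$.

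For the first assertion, I would take a monomial $f = \prod_{\rho} x_\rho^{a_\rho} \in S_\alpha$ and let $m \in A_{D,\{0\}}$ be its image under the bijection of Proposition~\ref{prop:bij_pol_Sa}. If $m \in A_{D,\sigma}$, then by the second inclusion $\chi^m$ is a bona fide monomial of $\C[\sigma^\perp \cap M]$, and the map sends $f$ to $\chi^m$; otherwise the map sends $f$ to zero. In either case the image is a monomial (possibly zero).

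For the surjectivity statement, I would run the bijection in reverse. Given $m \in A_{D,\sigma}$, the inclusion $A_{D,\sigma} \subset A_{D,\{0\}}$ and Proposition~\ref{prop:bij_pol_Sa} produce a unique monomial $g = \prod_{\rho} x_\rho^{v_\rho + \langle m, u_\rho\rangle} \in S_\alpha$ whose associated lattice point in $A_{D,\{0\}}$ is $m$. Since $m$ already belongs to $A_{D,\sigma}$ by hypothesis, the map sends $g$ to $\chi^m$ by definition, establishing the second assertion. The only delicate point---indeed the only potential obstacle---is keeping careful track of which constraints come from $A_{D,\{0\}}$, which from $A_{D,\sigma}$, and which from the hypothesis $v_\rho = 0$ on $\sigma(1)$; no technical difficulty beyond this arises.
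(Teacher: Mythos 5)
Your proof is correct and takes essentially the same route as the paper's: both reduce the claim to the identity $A_{D,\sigma}=A_{D,\{0\}}\cap\sigma^\perp\subset A_{D,\{0\}}$ together with the bijection of \Cref{prop:bij_pol_Sa}. The only additional point in the paper's proof is the observation that $A_{D,\{0\}}\subset\sigma^\vee\cap M$, which identifies the combinatorially defined map with localization at $x^{\hat\sigma}$ followed by reduction modulo $J_\sigma$ --- an identification used in later results but not required by the literal statement you proved.
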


\begin{proof}
    Notice that $A_{D,\{0\}}\subset \sigma^\vee\cap M$ since $D$ has valuation
  zero at all the rays in $\sigma$. This inclusion represents the map
  $S_\alpha\rightarrow (S_{x^{\hat\sigma}})_0$ on monomials.
  Finally, we check that $A_{D,\{0\}}\cap \sigma^\perp = A_{D,\sigma}$, which
concludes the proof. 
\end{proof}

For $f$ in $S_\alpha$, we let $f^\sigma$ denote the image of $f$ via
the map defined in Proposition~\ref{prop:map_compatible}, where for
simplicity we omit the dependency on the divisor $D$. We emphasize
that choosing another divisor $D$ amounts to multiplying $f^\sigma$ by
a monomial in $\C[\sigma^\perp\cap M]$, so the ideal
$\langle f^\sigma\rangle\subset\C[\sigma^\perp\cap M] = \C[\sigma^\vee
\cap M] / J_\sigma$ is independent of the choice of $D$. For
convenience, we set $f^\sigma = 0$ if we are in the case of
Lemma~\ref{lem:noncompatible}, i.e. when there does not exist any
effective $T$-divisor in $\alpha$ with valuation $0$ at all rays of
$\sigma$, and hence $f \in \pi_\sigma^*(J_\sigma)$.

\begin{lemma}\label{lem:disting_cases}
  Let $f \in S$ be a homogeneous polynomial and $Y$ be the subscheme of $X$
  associated to $\ideal{f}$. Let $\sigma$ be a cone in $\Sigma$. Then:
  \begin{itemize}
  \item (non-essential case) If $f^\sigma = 0$, then
    $O(\sigma) \subset Y$;
  \item (incompatibility case) If $f^\sigma$ involves exactly one monomial with
    nonzero coefficient,
    then $Y \cap O(\sigma)$ is empty;
  \item (essential case) If $f^\sigma$ involves more
    than one monomial with nonzero coefficient, then $Y \cap O(\sigma)$ is a hypersurface in
    $O(\sigma)$.
  \end{itemize}
\end{lemma}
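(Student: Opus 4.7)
The plan is to compute $Y \cap O(\sigma)$ scheme-theoretically inside the affine toric chart $U_\sigma \cong \Spec \C[\sigma^\vee \cap M]$, where $O(\sigma)$ is the closed subscheme cut out by $J_\sigma$. Using the isomorphism \eqref{eq:isom_local}, the ideal defining $Y \cap U_\sigma$ is $I_f := f \cdot (S_{x^{\hat{\sigma}}})_{-\alpha} \subset (S_{x^{\hat{\sigma}}})_0$, so $Y \cap O(\sigma)$ is cut out in $O(\sigma) \cong \Spec \C[\sigma^\perp \cap M]$ by the image $\overline{I_f}$ of $I_f$ under the quotient map $\C[\sigma^\vee \cap M] \twoheadrightarrow \C[\sigma^\vee \cap M] / J_\sigma \cong \C[\sigma^\perp \cap M]$. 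My first step is to establish the identification $\overline{I_f} = \langle f^\sigma \rangle$: when the hypothesis of \Cref{lem:compatible} is satisfied we have $I_f = \langle f/m \rangle$, and tracing the bijection of \Cref{prop:bij_pol_Sa} shows that the reduction of $f/m$ modulo $J_\sigma$ coincides with $f^\sigma$ as defined in \Cref{prop:map_compatible}; in the complementary regime, \Cref{lem:noncompatible} yields $I_f \subset \pi_\sigma^*(J_\sigma)$, so $\overline{I_f} = 0$, matching the convention $f^\sigma = 0$.

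Granted the identification $\overline{I_f} = \langle f^\sigma \rangle$, the three cases will follow by inspecting what this ideal looks like in the Laurent polynomial ring $\C[\sigma^\perp \cap M]$ coordinatizing the torus $O(\sigma)$. If $f^\sigma = 0$, then $\overline{I_f} = 0$ and $Y \cap O(\sigma) = O(\sigma)$, i.e.\ $O(\sigma) \subset Y$. If $f^\sigma$ involves exactly one monomial with nonzero coefficient, then it is of the form $c\,\chi^{m'}$ for some $c \in \C^\times$ and $m' \in \sigma^\perp \cap M$, which is a unit in the Laurent polynomial ring; hence $\langle f^\sigma \rangle$ is the unit ideal and $Y \cap O(\sigma) = \emptyset$. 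If $f^\sigma$ involves at least two monomials with nonzero coefficient, then it is a nonzero non-unit (units in $\C[\sigma^\perp \cap M]$ are scalar multiples of characters); since $\C[\sigma^\perp \cap M]$ is an integral domain, Krull's principal ideal theorem gives that $V(f^\sigma) \subset O(\sigma)$ has pure codimension one, and $V(f^\sigma)$ is nonempty because $\langle f^\sigma \rangle$ is a proper ideal. Hence $Y \cap O(\sigma)$ is a hypersurface in $O(\sigma)$.

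The main obstacle will be to justify the identification $\overline{I_f} = \langle f^\sigma \rangle$, since \Cref{ex:nonPrincipal} warns us that restrictions of principal ideals from the Cox ring to the coordinate ring of an affine chart need not be principal in general. The way around this is first to descend to $(S_{x^{\hat{\sigma}}})_0$, where \Cref{lem:compatible} supplies the explicit principal generator $f/m$ in the compatible case, and then to match this generator with the combinatorial description of $f^\sigma$ in \Cref{prop:map_compatible}; the incompatible case requires no such matching, since $f$ itself already lies in $\pi_\sigma^*(J_\sigma)$ by \Cref{lem:noncompatible}.
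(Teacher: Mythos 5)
Your proof is correct and follows the same route as the paper, whose own proof is a one-line appeal to Proposition~\ref{prop:map_compatible} and Krull's principal ideal theorem; you simply fill in the details that the paper leaves implicit (the identification of the restricted ideal with $\langle f^\sigma\rangle$ via Lemmas~\ref{lem:compatible} and~\ref{lem:noncompatible}, and the observation that monomials are units in the Laurent polynomial ring $\C[\sigma^\perp\cap M]$).
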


\begin{proof}
  This is a direct consequence of
  Proposition~\ref{prop:map_compatible} and Krull's principal ideal
  theorem.
\end{proof}

Lemma~\ref{lem:disting_cases} states that distinguishing between the
cases depends only on the support of $f$, and therefore
generically it is determined by $A_{D,\sigma}$.

\subsection{General case}

In this section, we focus on the expected dimension of
$Y \cap O(\sigma)$, where $Y$ is defined by a sequence of
\emph{extremal-generic} polynomials. Before defining this notion of
genericity, we start by noting that, given a homogeneous
ideal in $S$ defining a closed subscheme $Y$, there is a simple
description of the ideal in $\C[\sigma^\perp \cap M]$ defining
$Y \cap O(\sigma)$.

\begin{corollary}\label{coro:scheme_intersection}
Let $f_1,\ldots, f_r\in S$ be homogeneous elements, and let $Y\subset X$ denote
the closed subscheme associated to the ideal $\langle f_1,\ldots, f_r\rangle$. Then
the ideal in $\C[\sigma^\perp\cap M]$ defining $Y\cap O(\sigma)$ is $\langle
f_1^\sigma,\ldots,f_r^\sigma\rangle$.
\end{corollary}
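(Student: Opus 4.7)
The plan is to restrict everything to the affine chart $U_\sigma = \Spec(\C[\sigma^\vee \cap M])$ and then intersect with the closed subscheme $O(\sigma) \hookrightarrow U_\sigma$ cut out by the ideal $J_\sigma$, so that the defining ideal ends up in the quotient $\C[\sigma^\vee \cap M]/J_\sigma \cong \C[\sigma^\perp \cap M]$. Concretely, using the isomorphism $\pi_\sigma^*$ of Eq.~\eqref{eq:isom_local}, the ideal of $Y\cap U_\sigma$ in $(S_{x^{\hat\sigma}})_0$ is the sum over $i$ of the degree-zero parts $(f_i\cdot (S_{x^{\hat\sigma}})_{-\alpha_i})$, where $\alpha_i = \deg(f_i)$. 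The ideal of $Y\cap O(\sigma)$ is then obtained by passing to the quotient modulo $\pi_\sigma^*(J_\sigma)$.

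The main step is to identify, for each $i$, the image of $(f_i\cdot (S_{x^{\hat\sigma}})_{-\alpha_i})$ in $\C[\sigma^\perp\cap M]$ with the principal ideal $\langle f_i^\sigma \rangle$. I would split into the two cases encoded in the definition of $f_i^\sigma$. In the compatible case, \Cref{lem:compatible} applied to an effective $T$-divisor $D_i$ of class $\alpha_i$ with vanishing valuation along $\sigma(1)$ yields that $(f_i\cdot (S_{x^{\hat\sigma}})_{-\alpha_i})$ is principal, generated by $f_i/m_i$ with $m_i = \prod_{\rho\notin\sigma(1)} x_\rho^{a_\rho}$. Then \Cref{prop:map_compatible} identifies the reduction of $f_i/m_i$ modulo $\pi_\sigma^*(J_\sigma)$ with exactly $f_i^\sigma$: indeed, monomials of $f_i$ whose exponent vector under the bijection of \Cref{prop:bij_pol_Sa} lies in $A_{D_i,\sigma} = A_{D_i,\{0\}}\cap \sigma^\perp$ survive in the quotient, while the others are killed by $J_\sigma$. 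In the incompatible case, \Cref{lem:noncompatible} gives $(f_i\cdot (S_{x^{\hat\sigma}})_{-\alpha_i}) \subset \pi_\sigma^*(J_\sigma)$, which matches the convention $f_i^\sigma = 0$.

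Putting the two cases together and summing over $i$, the ideal of $Y\cap O(\sigma)$ in $\C[\sigma^\perp\cap M]$ is $\langle f_1^\sigma,\ldots,f_r^\sigma\rangle$, as claimed. The statement that $Y\cap O(\sigma) = (Y\cap U_\sigma)\cap O(\sigma)$ holds set-theoretically because $O(\sigma)\subset U_\sigma$ by the orbit-cone correspondence, and scheme-theoretically because restricting to an open subscheme and then cutting by a closed subscheme of it computes the correct scheme-theoretic intersection.

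The only subtlety I expect is the well-definedness: $f_i^\sigma$ depends on the choice of $D_i$, but only up to multiplication by an invertible monomial in $\C[\sigma^\perp\cap M]$, as already noted in the excerpt. This is harmless since it leaves the principal ideal $\langle f_i^\sigma\rangle$ unchanged, so the sum $\langle f_1^\sigma,\ldots,f_r^\sigma\rangle$ is intrinsically attached to the ideal $\langle f_1,\ldots,f_r\rangle$ and the cone $\sigma$. No further computation is required beyond assembling these lemmas.
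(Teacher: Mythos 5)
Your proof is correct and follows essentially the same route as the paper: the paper reduces to the case $r=1$ by distributing the intersection with $O(\sigma)$ over the hypersurfaces and then invokes Proposition~\ref{prop:map_compatible}, which is exactly the per-index identification of the image of $(f_i\cdot (S_{x^{\hat\sigma}})_{-\alpha_i})$ in $\C[\sigma^\perp\cap M]$ with $\langle f_i^\sigma\rangle$ that you carry out (with the compatible/incompatible case split via Lemmas~\ref{lem:compatible} and~\ref{lem:noncompatible} made explicit). You simply fill in the details the paper leaves implicit; no discrepancy.
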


\begin{proof}
  First, we notice that if $H_1,\ldots, H_r\subset X$ denote the hypersurfaces
  associated to $f_1,\dots, f_r$, then $(H_1\cap\dots\cap H_r)\cap O(\sigma) =
  (H_1\cap O(\sigma))\cap \dots\cap (H_r\cap O(\sigma))$. Hence, it is
  sufficient to prove the result for $r=1$, which is a direct consequence of Proposition~\ref{prop:map_compatible}.
\end{proof}

Following classical results on sparse polynomial systems, e.g.
\cite{canny1991optimal}, we
will only assume that the genericity conditions are satisfied for the
coefficients which correspond to the vertices of the convex hull of the
monomial support. We will call these systems
\emph{extremal-generic}. We will first define this notion for
polynomials in $\C[M]$ and show how we can use it to predict the
codimension of their zero locus over the torus. Later,
in \Cref{def:Coxringextremal}, we extend this definition for
polynomials in the Cox ring of $X$. We will use this notion to
study the generic codimension of $Y$.

\begin{definition} \label{def:extremal-generic}
Given a finite subset $A\subset M$, the vertices of the convex hull of $A$ in
  $M_\R$ are called the \emph{extremal points} of $A$.
  Given a Laurent polynomial $f$ in $\C[M]$ with support $A$, we call
  \emph{extremal coefficients} of $f$ the coefficients of the monomials
  corresponding to the extremal points of $A$. Given $\bm A=(A_1,\ldots, A_r)$,
  $A_i\subset M$,
  with respective extremal points $(V_1,\ldots, V_r)$, we say that a
property holds for an \emph{extremal-generic} system if for any values
$(\alpha_u^{(i)})\in\C^{\sum_i\card{A_i\setminus V_i}}$ of
the non-extremal coefficients, there is a Zariski-dense subset $W\subset
\C^{\sum_i\card{V_i}}$ such that systems with non-extremal coefficients equal
  to $(\alpha_u^{(i)})$ and extremal coefficients in $W$ satisfy this property.
\end{definition}

The notion of extremal-genericity is a generalization of the classical
genericity.
Given supports $\bm A=(A_1,\ldots, A_k)$, we consider the space of all
systems with support $\bm A$ in $\C[M]$. This space is isomorphic to
$\C^{\sum_i \card{A_i}}$ endowed with the Zariski topology. Separating
the coefficients corresponding to the vertices of the convex hulls of
$A_1,\ldots, A_k$ in $M_\R$, we can write this space as a product
$\C^{n_1}\times \C^{n_2}$, where the first factor corresponds to the
coefficients of the extremal coefficients. The next lemma shows that a property that holds for extremal-generic
systems also holds for generic systems (i.e. systems for which the
genericity assumption holds on all coefficients).

\begin{lemma}\label{lem:genericity}
  Let $W_1\times W_2$ be a topological space such that $\{x\}\times W_2$ is
  closed for all $x\in W_1$. For each $x\in W_1$, let
  $\{x\}\times O_x$ be a dense subset of $\{x\}\times W_2$ for the induced topology. Then
  $\cup_{x\in W_1} (\{x\}\times O_x)$ is dense in $W_1\times W_2$.
\end{lemma}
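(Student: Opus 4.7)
The strategy is to verify density directly from the definition: I would show that every nonempty open subset $U\subseteq W_1\times W_2$ meets $\bigcup_{x\in W_1}(\{x\}\times O_x)$. Pick any point $(x_0,y_0)\in U$. Then the intersection $U\cap(\{x_0\}\times W_2)$ is, by the very definition of the induced (subspace) topology on $\{x_0\}\times W_2$, an open subset of that subspace, and it is nonempty because it contains $(x_0,y_0)$.

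Next, I would invoke the hypothesis that $\{x_0\}\times O_{x_0}$ is dense in $\{x_0\}\times W_2$ for the induced topology. Density means that $\{x_0\}\times O_{x_0}$ meets every nonempty relatively open subset of $\{x_0\}\times W_2$; applying this to the relatively open set $U\cap(\{x_0\}\times W_2)$ produces a point
\[
  (x_0,y_1) \;\in\; U\cap\bigl(\{x_0\}\times O_{x_0}\bigr)
  \;\subseteq\; U\cap \bigcup_{x\in W_1}\bigl(\{x\}\times O_x\bigr),
\]
so the union is dense, as required.

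The argument is essentially a one-line unfolding of definitions, and I do not anticipate any real obstacle: the only subtle point is keeping straight that density in the subspace is equivalent to meeting every relatively open set, and that relatively open sets of $\{x_0\}\times W_2$ are exactly traces of open sets of $W_1\times W_2$. Notably, the hypothesis that each $\{x\}\times W_2$ is closed in $W_1\times W_2$ does not enter this direct proof; I expect it is included either to make the statement feel natural in the Zariski setting used elsewhere in the paper, or because it is automatically available in the target applications (where $W_1$ has closed points) and clarifies the geometric picture of fibering $W_1\times W_2$ over $W_1$.
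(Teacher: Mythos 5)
Your direct argument is correct and is essentially the contrapositive of the paper's proof, which assumes the union lies in a proper closed subset $V$ and restricts $V$ to a fiber $\{y\}\times W_2$ to contradict fiberwise density. Your side remark is also accurate: the hypothesis that each $\{x\}\times W_2$ is closed is not actually needed in either version, since a closed subset of the ambient space always traces out a relatively closed subset of the fiber.
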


\begin{proof}
  Assume by contradiction that $\cup_{x\in W_1} (\{x\}\times O_x)$ is not dense,
  then it is contained in a proper closed subset $V\subset W_1\times W_2$. Therefore, there exists
  $y\in W_1$ such that $V\cap (\{y\}\times W_2)\neq (\{y\}\times W_2)$. Since $V\cap
  (\{y\}\times W_2)$ is closed, we get that $\{y\}\times O_y$ is not
  dense, which contradicts our assumption.
  \end{proof}

Our next goal is to understand how extremal-generic systems with given
monomial support intersect over the torus.
Here, our analysis relies on classical tools related to the theory of
sparse resultants, see e.g.~\cite[Sec.~4.3]{sturmfels2002solving}. A
central concept is the notion of \emph{essential family}, which
discriminates whether the monomial supports of the system lead
generically to an empty intersection, or if generic hypersurfaces
intersect properly.

\begin{definition}[Essential family] \label{def:essentialFam}
A family of finite subsets $A_1,\ldots, A_{r} \subset M$ is
\emph{essential} if $\dim_\R\left(\AffineSpan_\R(\sum_{i\in E} A_i)\right) \geq
  \lvert E\rvert$ for every subset $E\subseteq\conSet{1, r}$, where
$\AffineSpan_\R(\sum_{i\in E} A_i)$ denotes the smallest affine subspace in $M_\R$
containing $\sum_{i\in E} A_i$.
\end{definition}

The following lemma is a variant of \cite[Thm.~1.1]{sturmfels1994newton}, which
will be a useful technical tool to compute the expected dimension of extremal-generic systems over the torus.

\begin{lemma}\label{lem:technical_lemma_sturmfels}
  Let $\bm A=(A_1,\ldots, A_r)$ be a family of $r$ nonempty finite subsets of
  $M$. For $i\in\conSet{1, r}$, let $V_i\subseteq A_i$ be the subset
  of vertices of the convex hull of $A_i$. Let
  $f_1,\dots,f_r \in \C[M]$ be polynomials with
  supports $A_1,\ldots, A_r$ and indeterminate vertex coefficients:
  $$f_i = \sum_{u\in
  V_i}\mathfrak v_u^{(i)} \chi^u + \sum_{u\in A_i\setminus V_i}c^{(i)}_u \chi^u,$$
  for some fixed $c\in \C^{\sum_i\lvert A_i\setminus V_i\rvert}$. Let $Z\subseteq \mathbb \C^{\lvert
  V_1\rvert}\times\cdots\times\mathbb \C^{\lvert
  V_r\rvert}$ denote the set of such systems which have complex solutions in
  $(\C^*)^n$. The codimension of its Zariski closure $\overline Z$ is
  $$\max_{E\subseteq\conSet{1,
  r}}\left(\card{E}-\dim_\R\left(\AffineSpan_\R\left(\sum_{i\in E} A_i\right)
  \right)\right).$$
\end{lemma}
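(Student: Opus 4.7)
My strategy is an incidence-variety argument in the spirit of the sparse resultant machinery of~\cite{sturmfels1994newton}. Introduce
\begin{equation*}
  W := \left\{(x,\mathfrak v) \in (\C^*)^n \times \prod_{i=1}^r \C^{\card{V_i}} : f_i(x;\mathfrak v) = 0 \text{ for all } i \right\},
\end{equation*}
together with its two projections $\pi_1 \colon W \to (\C^*)^n$ and $\pi_2 \colon W \to \prod_i \C^{\card{V_i}}$, so that $Z = \pi_2(W)$. The fiber $\pi_1^{-1}(x)$ over $x\in(\C^*)^n$ is cut out by $r$ affine hyperplanes, the $i$-th one involving only the block $(\mathfrak v_u^{(i)})_{u\in V_i}$ through the linear form $\sum_{u\in V_i}\chi^u(x)\,\mathfrak v_u^{(i)} = -\sum_{u\in A_i\setminus V_i}c^{(i)}_u\chi^u(x)$, which is nontrivial because $V_i\neq\emptyset$ and $\chi^u(x)\neq 0$ on the torus. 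Hence $\pi_1$ is an affine bundle of rank $\sum_i\card{V_i}-r$ and $\dim W = n + \sum_i\card{V_i} - r$; in particular, the formula to be proved is automatically insensitive to the fixed non-vertex coefficients~$c^{(i)}_u$.

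Write $\mu := \max_{E}\bigl(\card{E}-\dim_\R\AffineSpan_\R(\sum_{i\in E}A_i)\bigr)$ for the codimension predicted by the lemma. For the bound $\codim\overline{Z}\geq \mu$, fix a subset $E^\star$ achieving the maximum and set $d^\star := \dim_\R\AffineSpan_\R(\sum_{i\in E^\star}A_i)$. Translating each support $A_i$, $i\in E^\star$, by a vector of $M$ amounts to multiplying $f_i$ by an invertible character and therefore does not change $Z$, so I may assume the affine span in question is a saturated sublattice $L\subset M$ of rank $d^\star$. Each $f_i$ with $i\in E^\star$ then lies in $\C[L]$ and factors through the quotient torus $(\C^*)^n\twoheadrightarrow \Hom_\Z(L,\C^*)\cong(\C^*)^{d^\star}$. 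Running the same incidence construction for the $E^\star$-subsystem on this smaller torus yields an incidence variety of dimension $d^\star+\sum_{i\in E^\star}\card{V_i}-\card{E^\star}$, whose image in $\prod_{i\in E^\star}\C^{\card{V_i}}$ therefore has codimension at least $\card{E^\star}-d^\star = \mu$. Because $\overline{Z}$ projects into this image, $\codim\overline{Z}\geq \mu$.

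The main obstacle is the reverse inequality $\dim\overline{Z}\geq \sum_i\card{V_i}-\mu$, which amounts to showing that the generic fiber of $\pi_2$ has dimension at most $n-r+\mu$. My plan is to choose $E^\star$ \emph{minimal} among maximizers of $\card{E}-d_E$: minimality, together with an inductive use of the same incidence argument on the quotient torus, should imply that for a generic tuple of vertex coefficients in the image of the $E^\star$-subsystem, the equations $(f_i)_{i\in E^\star}$ admit only finitely many common zeros in $(\C^*)^{d^\star}$. Pulling back through the surjection $(\C^*)^n\twoheadrightarrow (\C^*)^{d^\star}$, whose fibers are $(n-d^\star)$-dimensional, the $E^\star$-subsystem then carves out a subvariety of $(\C^*)^n$ of dimension exactly $n-d^\star$. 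The remaining $r-\card{E^\star}$ polynomials, whose vertex coefficients may be assigned independent generic values within $\overline{Z}$, cut this subvariety transversally by Bertini, bringing the total dimension down to $n-d^\star-(r-\card{E^\star})=n-r+\mu$. The hard step is therefore the finiteness claim for the $E^\star$-subsystem at a generic point of its image, where minimality of $E^\star$ is crucially used---without it, a strict subfamily could force extra dimension---and where one has to adapt carefully the essential-family analysis underlying the classical theory of sparse resultants~\cite{sturmfels1994newton}, transposed to the quotient torus $\Hom_\Z(L,\C^*)$.
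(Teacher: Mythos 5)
Your lower bound $\codim\overline{Z}\geq\mu$ is complete and correct: the incidence variety $W$ is an irreducible affine-subspace bundle of dimension $n+\sum_i\card{V_i}-r$ over the torus, the same count applies to the $E^\star$-subsystem over the quotient torus, and $Z$ projects into the image of that subsystem's incidence variety. The genuine gap is the reverse inequality, which is the entire content of the lemma, and which you explicitly leave as a plan (``should imply'', ``the hard step is therefore the finiteness claim''). Note that the finiteness claim you defer is not a consequence of minimality of $E^\star$ alone: asking that the generic fiber of the $E^\star$-subsystem's coefficient projection be finite is \emph{equivalent} to the codimension formula for that subsystem over $(\C^*)^{d^\star}$, so when $E^\star=\conSet{1,r}$ and $d^\star=n$ your reduction is circular, and in the other cases it requires an induction that you have not set up. The missing ingredient is precisely the one the paper imports: Bernstein's theorem in the version where genericity is imposed only on the coefficients of the vertices of the Newton polytopes (Canny--Emiris), fed into the essential-subfamily analysis of Sturmfels' Theorem~1.1. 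Without some form of BKK you cannot rule out that the $E^\star$-subsystem has positive-dimensional fibers over a dense subset of its coefficient image.

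The final ``Bertini'' step also needs real work rather than a one-line appeal. The coefficients of the remaining $r-\card{E^\star}$ polynomials are not free ``within $\overline{Z}$''; what you actually need is to exhibit a single point of $Z$ whose fiber has dimension at most $n-r+\mu$ (upper semicontinuity of fiber dimension then does the rest, since $W$ is irreducible). For that you must (i) restrict each $f_j$, $j\notin E^\star$, to a coset of the subtorus $\Hom(M/L,\C^*)$, where monomials of $A_j$ with the same image in $M/L$ collapse and their coefficients combine; (ii) check that the projected family $\{\bar A_j\}_{j\notin E^\star}$ is essential in $M/L$ --- this follows from $E^\star$ being a \emph{maximizer} (not from minimality), via $\dim\AffineSpan(\sum_{j\in F}\bar A_j)=d_{F\cup E^\star}-d^\star\geq\card{F}$; and (iii) invoke the extremal-coefficient BKK statement again to conclude that the restricted system is nonempty of the expected dimension, after verifying that the collapsed extremal coefficients remain generic. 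Steps (ii)--(iii) are absent from your sketch, and step (iii) is again the Bernstein-type input. So the architecture is the right one (it is essentially Sturmfels' incidence-variety proof, which the paper cites wholesale), but as written the proposal proves only one of the two inequalities.
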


\begin{proof}
  The proof is the same as the proof of
  \cite[Thm.~1.1]{sturmfels1994newton}, but replacing Bernstein's
  theorem by its version with generic extremal coefficients
  \cite{canny1991optimal}.
\end{proof}

The dimension of an extremal-generic sparse system is determined by
the essentiality of its exponent sets.

\begin{proposition}[Extremal-generic dimension over the torus] \label{prop:genericDimension}
  Let $\bm A=(A_1,\ldots, A_r)$ be a family of $r$ nonempty finite subsets of
  $M$. Let
  $f_1,\dots,f_r \in \C[M]$ be an extremal-generic
  system. Then one of the
  two following propositions holds true:
  \begin{itemize}
  \item The family $\mathbf A$ is \emph{essential}
    and
    $\dim(\C[M] / \ideal{f_1,\dots,f_r}) = n - r$;
  \item The family $\mathbf A$ is \emph{not essential} and
    $\ideal{f_1,\dots,f_r} = \C[M]$.
  \end{itemize}
\end{proposition}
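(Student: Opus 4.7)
The plan is to deduce both cases from Lemma \ref{lem:technical_lemma_sturmfels}, according to which of its two possible outcomes applies. Fix non-vertex coefficients $c \in \C^{\sum_i \card{A_i \setminus V_i}}$ and let $\mathcal V = \prod_{i=1}^r \C^{\card{V_i}}$ denote the space of vertex coefficients; by Definition \ref{def:extremal-generic}, it suffices to produce a dense Zariski open subset of $\mathcal V$ on which the appropriate conclusion holds.

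If $\bm A$ is not essential, some $E \subseteq \conSet{1, r}$ witnesses $\card{E} > \dim_\R \AffineSpan_\R(\sum_{i \in E} A_i)$, so Lemma \ref{lem:technical_lemma_sturmfels} yields $\codim \overline Z \geq 1$. On the dense open complement of $\overline Z$, the system has no zero in $\Spec \C[M] = (\C^*)^n$, and the Nullstellensatz for the coordinate ring $\C[M]$ of the torus forces $\ideal{f_1, \ldots, f_r} = \C[M]$.

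If $\bm A$ is essential, Lemma \ref{lem:technical_lemma_sturmfels} instead yields $\overline Z = \mathcal V$. To upgrade this from mere existence of zeros to a dimension count, I will work with the incidence variety
\[ \mathcal X_c := \{(\mathfrak v, x) \in \mathcal V \times (\C^*)^n : f_i(x; \mathfrak v, c) = 0 \text{ for all } i \in \conSet{1,r}\}, \]
with projections $\pi_1 : \mathcal X_c \to \mathcal V$ and $\pi_2 : \mathcal X_c \to (\C^*)^n$. For each $x \in (\C^*)^n$ the defining equations are affine linear in $\mathfrak v$, with the $i$-th equation involving only the block $(\mathfrak v_u^{(i)})_{u \in V_i}$ and with coefficients $(x^u)_{u \in V_i}$ which never vanish on the torus; hence one can solve for one chosen coordinate in each block to exhibit a global isomorphism $\mathcal X_c \simeq (\C^*)^n \times \prod_i \C^{\card{V_i} - 1}$. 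In particular $\mathcal X_c$ is irreducible of dimension $n + \dim \mathcal V - r$, and the dominance of $\pi_1$ (provided by Lemma \ref{lem:technical_lemma_sturmfels}) combined with the theorem on dimension of fibers produces a dense open $U \subseteq \mathcal V$ with $\dim \pi_1^{-1}(\mathfrak v) = n - r$ for every $\mathfrak v \in U$. Since $\pi_1^{-1}(\mathfrak v)$ is in bijection with $V(f_1, \ldots, f_r) \cap (\C^*)^n$, this gives $\dim \C[M] / \ideal{f_1, \ldots, f_r} \leq n - r$, and the matching lower bound is Krull's height theorem.

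The main obstacle is isolating the dimension of $\mathcal X_c$ and its dominance properties; once these are in place, the remaining steps are standard. A minor subtlety is that the open set $U$ is produced for each fixed $c$ independently, but this is precisely the form of genericity required by Definition \ref{def:extremal-generic}.
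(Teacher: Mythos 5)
Your proof is correct, and it diverges from the paper's in the essential case. For the non-essential case you and the paper do essentially the same thing: Lemma~\ref{lem:technical_lemma_sturmfels} forces $\overline Z$ to be a proper closed subset, and the Nullstellensatz in the Jacobson ring $\C[M]$ yields the unit ideal; the paper merely detours through the sublattice $M_E$ generated by $\cup_{j\in E}A_j$ before invoking the same lemma, whereas you apply it directly over $M$, which is a little cleaner. For the essential case the paper argues via mixed volumes: it completes the convex hulls of $A_1,\dots,A_r$ with $n-r$ standard simplices, notes that the resulting mixed volume is positive, and invokes the BKK theorem in its vertex-generic form to conclude that slicing the zero set with a generic affine subspace of dimension $r$ gives finitely many torus points, whence $\dim = n-r$. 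You instead trivialize the incidence variety $\mathcal X_c$ as $(\C^*)^n\times\prod_i\C^{\card{V_i}-1}$ by solving each equation for one vertex coefficient (legitimate because the monomials $x^u$ are units on the torus), then combine the theorem on the dimension of fibers, the dominance of $\pi_1$, and Krull's height theorem. Your route reuses Lemma~\ref{lem:technical_lemma_sturmfels} uniformly in both branches and replaces the mixed-volume computation by an elementary fiber-dimension count, which is a genuine simplification of the exposition; two small caveats are that the dominance of $\pi_1$ still rests on Bernstein-type results through Lemma~\ref{lem:technical_lemma_sturmfels} (so the dependence on BKK is rerouted rather than removed), and that the lower bound via Krull requires $\ideal{f_1,\dots,f_r}$ to be proper, which is supplied by your equality $\dim\pi_1^{-1}(\mathfrak v)=n-r\geq 0$ once one observes that essentiality of the full family forces $r\leq n$.
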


\begin{proof}
  First, assume that $\bm{A}$ is essential.  By noticing that the
  mixed volume of the convex hulls of $A_1,\ldots, A_r$ completed with
  $n-r$ standard simplices is
  nonzero~\cite[Thm.~5.1.7]{schneider2014convex}, we deduce from the
  BKK theorem~\cite{bernshtein_number_1975} that the intersection of
  the variety associated to $f_1,\ldots, f_r$ with a generic affine
  space of dimension $r$ is a finite number of points in $(\C^*)^n$.
  We emphasize that the reason why the genericity condition needs only
  to be put on the coefficients of the system which correspond to
  vertices of the Newton polytopes follows from the works \cite[Thm.~B,
  (a)]{bernshtein_number_1975} and \cite{canny1991optimal}.  This
  implies that $\dim(\ideal{f_1,\dots,f_r}) = n - r$ since the
  intersection of a variety $V$ in $(\C^*)^n$ with a generic linear
  space of dimension $d$ must either be empty or have dimension
  $\dim(V)-n+d$.

  Conversely, assume that $\bm{A}$ is not essential. Then there is a subset
  $E\subseteq \conSet{1, r}$ such that
  $\dim(\AffineSpan_\R(\sum_{j \in E} A_j)) < \lvert E\rvert$ which implies that the
  rank of the lattice $M_E\subseteq \Z^n$ spanned
  by $\cup_{j\in E} A_j$ is smaller than $\lvert E\rvert$.
  Up to reordering, we can assume without loss of generality that $E =
  \conSet{1,k}$ for some $k\leq r$.
  Since $M_E$ is
  isomorphic to $\Z^{\rank(M_E)}$ as a $\Z$-module,
  Lemma~\ref{lem:technical_lemma_sturmfels} implies that for any values of the
  non-extremal coefficients, the set of systems
  in $\C[M_E]$ with solutions in $(\C^*)^{\rank(M_E)}$ and support
  $(A_1,\dots,A_k)$ is contained in a proper hypersurface of
  $\A^{\lvert V_1\rvert}\times\dots\times\A^{\lvert V_k\rvert}$, where $V_i$ is
  the set of extremal points in $A_i$. Said
  otherwise, for an extremal-generic system $f_1,\ldots, f_k\in \C[M_E]$ with support
  $(A_1,\dots,A_k)$, by Hilbert's Nullstellensatz there exist polynomials
  $\{g_j\}_{j\in E}$ in $\C[M_E]$ such that $\sum_{j\in E} f_j g_j = 1$.
  This relation also holds in $\C[M]$, which
  shows that $\ideal{f_1,\ldots, f_r} = \ideal{f_1,\ldots, f_k} = \C[M]$ for extremal-generic
  systems.
\end{proof}

\begin{remark*}
  A similar criterion as the one in
  Proposition~\ref{prop:genericDimension} appeared in
  \cite{yu2016most}, where the authors studied when these ideals are
  prime. Our result strengthens their criterion by providing the Krull
  dimension of the associated quotient ring.
\end{remark*}

We will now extend the notion of extremal-generic polynomials to the
Cox ring.

\begin{definition}\label{def:Coxringextremal}
  Let $\alpha\in \Cl(X)$ be a divisor class, and $D$ be a $T$-divisor with class
  $\alpha$. Then, by \Cref{prop:bij_pol_Sa}, the monomials in
  $S_\alpha$ are in bijection with $A_{D,\{0\}}$. Via this bijection,
  the \emph{extremal coefficients} of a
  polynomial with monomial support $\mathcal A\subset S_\alpha$ are
  those corresponding to the extremal points in $A_{D,\{0\}}$.
  We say that a property holds for an extremal-generic system of
  homogeneous polynomials $f_1,\dots,f_r$ with supports
  $\A_1,\dots,\A_r$ if for any fixed values of the non-extremal coefficients the
  set of values of the extremal coefficients satisfying the property is
  Zariski-dense.
\end{definition}

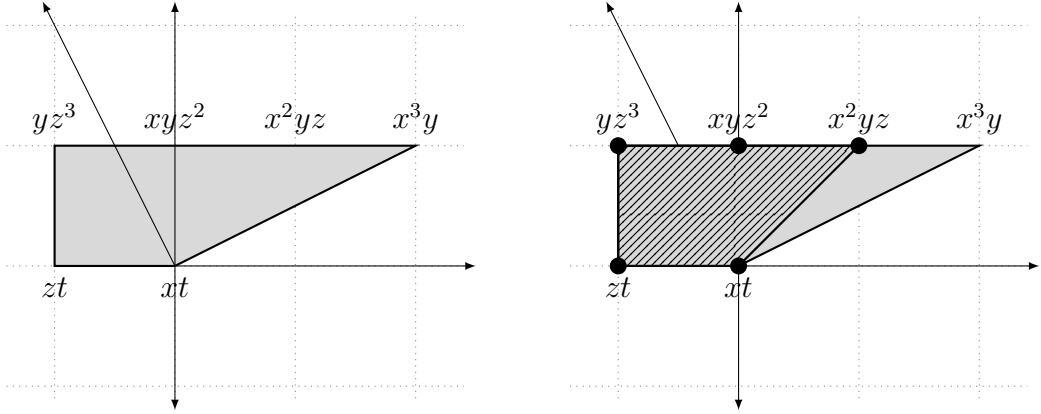
\begin{figure}
  \centering
  \begin{tikzpicture}[scale=1.6]
\draw[step=1cm,gray,thin, dotted] (-1.4,-1.1) grid (2.4,2.1);
  \filldraw[thick, fill= gray!30!white] (0,0) node[anchor=north] {$xt$} -- 
  (2, 1) node[anchor=south] {$x^3y$} -- 
  (1, 1) node[anchor=south] {$x^2yz$} -- 
  (0, 1) node[anchor=south] {$xyz^2$} -- 
  (-1, 1) node[anchor=south] {$yz^3$}-- 
  (-1, 0) node[anchor=north] {$zt$} -- (0, 0);
\draw[thin, -latex] (0,0) -- (2.5,0);
\draw[thin, -latex] (0,0) -- (0,2.2);
\draw[thin, -latex] (0,0) -- (0,-1.2);
\draw[thin, -latex] (0,0) -- (-1.1,2.2);
\end{tikzpicture}\quad\quad\quad
  \begin{tikzpicture}[scale=1.6]
\draw[step=1cm,gray,thin, dotted] (-1.4,-1.1) grid (2.4,2.1);
\draw[thin, -latex] (0,0) -- (2.5,0);
\draw[thin, -latex] (0,0) -- (0,2.2);
\draw[thin, -latex] (0,0) -- (0,-1.2);
\draw[thin, -latex] (0,0) -- (-1.1,2.2);
  \filldraw[thick, fill= gray!30!white] (0,0) node[anchor=north] {$xt$} -- 
  (2, 1) node[anchor=south] {$x^3y$} -- 
  (1, 1) node[anchor=south] {$x^2yz$} -- 
  (0, 1) node[anchor=south] {$xyz^2$} -- 
  (-1, 1) node[anchor=south] {$yz^3$}-- 
  (-1, 0) node[anchor=north] {$zt$} -- (0, 0);
  \filldraw[thick, pattern=north east lines] (0,0) -- 
  (1, 1)  -- 
  (0, 1)  -- 
  (-1, 1)  -- 
  (-1, 0)  -- (0, 0);
    \fill[fill=black] (0,0) circle (2pt) -- 
  (1, 1)  circle (2pt)-- 
  (0, 1)  circle (2pt)  -- 
  (-1, 1)  circle (2pt)-- 
  (-1, 0)  circle (2pt)-- (0, 0);
\end{tikzpicture}
  \caption{The two-dimensional fan associated to the Hirzebruch surface
  $\mathcal H_2$. On the left, the
  polytope $P_{D,\{0\}}$ associated to the divisor $D = D_1 + D_4$ where $D_1$
  is the $T$-divisor of the ray $\R_{\geq 0}(1,0)$ and $D_4$ is the $T$-divisor of the ray
  $\R_{\geq 0}(0, -1)$. The lattice points of $P_{D,\{0\}}$ are in bijection with the
  monomials in the Cox ring $\C[x,y,z,t]$ of degree $(1,1)$, where $\deg(x^a
  y^bz^ct^d)=(c+a-2b, d+b)$. On the right, the Newton polytope of the
  polynomial in \Cref{example:Coxring} inside $P_{D,\{0\}}$.\label{figure:hirzebruch}}
\end{figure}

\begin{example}\label{example:Coxring}
  The toric variety associated to the fan in \Cref{figure:hirzebruch} defines the Hirzebruch surface
  $\mathcal H_2$. Its Cox ring is $\C[x,y,z,t]$ with a
  $\Z^2$-grading given by $\deg(x^a\,y^b\,z^c\,t^d) = (c+a-2b, d+b)$ and
  irrelevant ideal $B=\langle xy, yz, zt, xt\rangle$. Consider the
  homogeneous polynomial $f = yz^3+2zt+3xt-xyz^2+4x^2yz$ of degree $(1,1)$ and
  support $\mathcal A =\{yz^3, zt, xt, xyz^2, x^2yz\}$, which is a proper subset
  of the six monomials
  of degree $(1,1)$ in $S$ which correspond to the lattice points in the polytope in
  \Cref{figure:hirzebruch}. Then the extremal coefficients
  of $f$ are those associated to the monomials $\{yz^3, zt, xt, x^2yz\}$, which are the vertices
  of the convex hull of the support.
\end{example}

Our next objective is to extend the notion of \emph{essential family}
to monomial sets $\{\mathcal A_1, \ldots, \mathcal A_r\}$ such that
$\mathcal A_i \subset S_{\alpha_i}$ for all $i$. For this, we
introduce the following notation.

\begin{notation} 
  Consider a monomial set $\mathcal A \subset S_\alpha$ and let
  $\sigma$ be a cone in $\Sigma$. We define the set of points
  $\mathcal A^\sigma$ as the exponents in $M$ of the monomial set
  $\{f^\sigma : f \in \mathcal A\} \subset \C[M\cap \sigma^\perp]$,
  where $f^\sigma$ is the image of $f$ via the map in
  \Cref{prop:map_compatible}.  We recall that $\mathcal A^\sigma$ is only
  defined up to translation by elements in $\sigma^\perp\cap M$, see
  the discussion after \Cref{prop:map_compatible}.
  We emphasize that $\mathcal A^\sigma$ might be empty, even
  though $\mathcal A$ is not.
\end{notation}

\begin{theorem}\label{thm:dimIntersectionToricTorusOrbit}
  For $j\in\conSet{1,r}$, fix a divisor class $\alpha_j \in \Cl(X)$ and a
  monomial subset $\mathcal A_j\subset S_{\alpha_j}$. Let $Y$ be a closed
      subscheme of $X$ defined by a homogeneous ideal
      $\ideal{f_1,\dots,f_r}\subset S$, where each $f_i \in S_{\alpha_i}$ is
      extremal-generic of degree $\alpha_i$ with support $\mathcal A_i$. For
      a cone $\sigma$ in $\Sigma$, set $E_\sigma := \{i\in \conSet{1, r} \mid
  \mathcal A_i^\sigma \ne\emptyset\}$. Then
      \begin{itemize}
      \item if the family $\{\mathcal A_{i}^\sigma : i\in E_\sigma \}$
        is not essential, then
        $Y\cap O(\sigma)$ is empty;
	\item otherwise, the dimension of $Y \cap O(\sigma)$ is
          $n - \dim(\sigma) - \lvert E_\sigma \rvert$.
  \end{itemize}
\end{theorem}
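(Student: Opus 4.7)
The plan is to reduce the statement to an application of Proposition~\ref{prop:genericDimension} on the torus $O(\sigma)$, which is isomorphic to the spectrum of the Laurent polynomial ring $\C[M \cap \sigma^\perp]$ and has dimension $n - \dim(\sigma)$. First, by Corollary~\ref{coro:scheme_intersection}, the ideal cutting out $Y \cap O(\sigma)$ inside $\C[M \cap \sigma^\perp]$ is $\langle f_1^\sigma, \ldots, f_r^\sigma\rangle$. By the definition of $f^\sigma$ (cf.\ the discussion after \Cref{prop:map_compatible}), we have $f_i^\sigma = 0$ whenever $\mathcal A_i^\sigma = \emptyset$, i.e.\ for $i \notin E_\sigma$. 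These trivial generators can be discarded, so the problem reduces to computing the dimension of the vanishing locus of $\{f_i^\sigma : i \in E_\sigma\}$ inside an algebraic torus of dimension $n - \dim(\sigma)$.

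Next I would check that the restricted system $(f_i^\sigma)_{i \in E_\sigma}$ is itself extremal-generic, as a system of Laurent polynomials in $\C[M \cap \sigma^\perp]$ with supports $\mathcal A_i^\sigma$. By \Cref{prop:map_compatible}, the map $f_i \mapsto f_i^\sigma$ is a coordinate projection onto the coefficients indexed by $\mathcal A_i^\sigma$; in particular, each coefficient of $f_i^\sigma$ is a coefficient of $f_i$. Since extremal-genericity of $f_i$ guarantees that the coefficients indexed by vertices of $\mathrm{Conv}(\mathcal A_i)$ range over a Zariski-dense set for any fixed choice of the remaining coefficients, one concludes that the vertex coefficients of $f_i^\sigma$ also range over a Zariski-dense set after fixing the rest. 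Combining the extremal-genericity of the individual $f_i^\sigma$'s in the sense of \Cref{def:extremal-generic} (applied in the lattice $M \cap \sigma^\perp$) then follows by iterating \Cref{lem:genericity}.

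At this point I would invoke \Cref{prop:genericDimension} on the system $(f_i^\sigma)_{i \in E_\sigma}$ inside $\C[M \cap \sigma^\perp]$. If the family $\{\mathcal A_i^\sigma : i \in E_\sigma\}$ is not essential, the proposition yields $\ideal{f_i^\sigma : i \in E_\sigma} = \C[M \cap \sigma^\perp]$, so $Y \cap O(\sigma) = \emptyset$. If it is essential, it yields
\[
\dim\bigl(Y \cap O(\sigma)\bigr) \;=\; (n - \dim(\sigma)) - |E_\sigma|,
\]
which is exactly the claimed formula.

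The main obstacle is the middle step: verifying that extremal-genericity is preserved under the restriction $f_i \mapsto f_i^\sigma$. The delicate point is that a vertex of $\mathrm{Conv}(\mathcal A_i^\sigma) = \mathrm{Conv}(\mathcal A_i \cap \sigma^\perp)$ need not, in general, be a vertex of $\mathrm{Conv}(\mathcal A_i)$, so extremal-genericity of $f_i$ does not automatically control all the extremal coefficients of $f_i^\sigma$. To handle this, I would argue that the nongeneric coefficients of $f_i^\sigma$ which become extremal after restriction correspond to \emph{non-extremal} coefficients of $f_i$; these are arbitrary fixed scalars in \Cref{def:extremal-generic}, and the condition of \Cref{prop:genericDimension} (i.e.\ the essentiality test) is purely combinatorial in the support $\mathcal A_i^\sigma$ and thus insensitive to the specific nonzero values of these coefficients, so that the conclusion still holds for a Zariski-dense subset of extremal coefficients of the original $f_i$.
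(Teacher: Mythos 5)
Your overall route is the same as the paper's: identify the ideal of $Y\cap O(\sigma)$ in $\C[M\cap\sigma^\perp]$ via \Cref{coro:scheme_intersection}, discard the generators with $i\notin E_\sigma$, and apply \Cref{prop:genericDimension} in the rank-$(n-\dim(\sigma))$ lattice $M\cap\sigma^\perp$. The paper's proof is exactly this, with the middle step (the $f_i^\sigma$ are extremal-generic with supports $\mathcal A_i^\sigma$) attributed to \Cref{prop:map_compatible} without further comment.

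The one place you diverge is in how you justify that middle step, and the justification you give does not work. You are right to worry whether a vertex of the convex hull of $\mathcal A_i^\sigma$ could fail to come from a vertex of the convex hull of $\mathcal A_i$; but your proposed fix --- that such coefficients are fixed scalars and that the essentiality test is ``purely combinatorial and thus insensitive to their values'' --- is not sound. \Cref{prop:genericDimension} requires \emph{all} extremal coefficients of the restricted system to range over a Zariski-dense set for fixed values of the others; a fixed value at a vertex of $\Conv(\mathcal A_i^\sigma)$ (for instance zero, which changes the support, or a value on the degenerate locus) is exactly what extremal-genericity is designed to exclude, and genericity of the remaining coefficients does not repair it. Fortunately the worry never materializes. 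Identify $\mathcal A_i$ with a subset of $A_{D,\{0\}}\subset P_{D,\{0\}}$ via \Cref{prop:bij_pol_Sa}, choosing $D$ with valuation zero on $\sigma(1)$ as in \Cref{lem:compatible}. By Eq.~\eqref{eq:defPolytopeDivisor}, on $P_{D,\{0\}}$ each functional $m\mapsto\langle m,u_\rho\rangle$ with $\rho\in\sigma(1)$ is nonnegative, and $P_{D,\sigma}$ is the locus where these are all zero; hence $\sigma^\perp$ meets $P_{D,\{0\}}$, and therefore also the subpolytope $\Conv(\mathcal A_i)\subseteq P_{D,\{0\}}$, in a \emph{face}. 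It follows that $\Conv(\mathcal A_i^\sigma)=\Conv(\mathcal A_i)\cap\sigma^\perp$ is a face of $\Conv(\mathcal A_i)$, so its vertices are vertices of $\Conv(\mathcal A_i)$, and the extremal coefficients of $f_i^\sigma$ are among the extremal coefficients of $f_i$. Extremal-genericity then passes to the restricted system directly, and with that correction your argument coincides with the paper's.
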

    
\begin{proof}
    By Corollary~\ref{coro:scheme_intersection}, the ideal
  $\ideal{ f^\sigma_i : i \in E_\sigma} \subset \C[M \cap
  \sigma^\perp]$ defines $Y \cap O(\sigma)$. Moreover, the polynomials
  $f^\sigma_i\in \C[\sigma^\perp\cap M]$ have monomials with exponents
  in $\mathcal A_i^\sigma$ and extremal-generic coefficients by
  Proposition~\ref{prop:map_compatible}.
  As the dimension of $M \cap \sigma^\perp$ is $n - \dim(\sigma)$, by
  \Cref{prop:genericDimension} either $Y \cap O(\sigma)$ is empty or
  $\{\mathcal A_{i}^\sigma : i \in E_\sigma\}$ is essential.
  In the latter case, as each $\mathcal A_{i}^\sigma$ belongs to
  $M \cap \sigma^\perp$, which has dimension
  $n - \dim(\sigma)$, we have that 
  $n - \dim(\sigma) \geq \lvert E_\sigma\rvert$.
  Hence, if $\{\mathcal A_{i}^\sigma : i \in E_\sigma\}$ is essential,
  $Y \cap O(\sigma)$ has dimension
  $n - \dim(\sigma) - \lvert E_\sigma\rvert \geq 0$.
    \end{proof}

    Using~\Cref{thm:dimIntersectionToricTorusOrbit}, we can study the dimension of a closed
    subscheme $Y$ generated by extremal-generic homogeneous
    polynomials of degrees $\alpha_1,\ldots,\alpha_r$ by looking at
    which families $\{\mathcal A_{i}^\sigma\}_{i \in E_\sigma}$ are essential
    as $\sigma$ ranges through all the cones of $\Sigma$. The
    following theorem is the technical core of the paper.

\begin{theorem}\label{thm:dimensionGenericToricSystem}
  For $j\in\conSet{1,r}$, fix a divisor class $\alpha_j \in \Cl(X)$ and a
  monomial subset $\mathcal A_j\subset S_{\alpha_j}$. Let $f_j\in S_{\alpha_j}, E_\sigma, Y$ be as in
  Theorem~\ref{thm:dimIntersectionToricTorusOrbit}.
  For each cone $\sigma$ in $\Sigma$, let
   $E_\sigma = \{i\in \conSet{1, r} \mid
  \mathcal A_i^\sigma \ne\emptyset\}$.
  By abuse of notation, we say that $E_\sigma$ is essential if
  $\{\mathcal A_i^\sigma : i \in E_\sigma\}$ is essential.
    Let $Y$ be a closed subscheme of $X$ defined by a homogeneous ideal
    $\ideal{f_1,\dots,f_r}\subset S$, where each
    $f_i \in S_{\alpha_i}$ is extremal-generic of degree $\alpha_i$
    with support $\mathcal A_i$.
  Then the dimension of $Y$ is
  $$
  \dim(Y)
  =
  \max_{\substack{\sigma \in \Sigma \\ E_\sigma \text{ is essential}}} \left(n
  - \dim(\sigma) - \lvert E_\sigma\rvert \right).
  $$
  If none of the sets $E_\sigma$ is essential, then $Y$ is empty.
\end{theorem}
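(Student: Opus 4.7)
The plan is to reduce \Cref{thm:dimensionGenericToricSystem} to a per-cone application of \Cref{thm:dimIntersectionToricTorusOrbit} and then combine the results via the orbit-cone correspondence.

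First, since $\Sigma$ is a complete fan, the orbit-cone correspondence gives the set-theoretic decomposition $X = \bigsqcup_{\sigma\in\Sigma} O(\sigma)$, and therefore $Y = \bigsqcup_{\sigma\in\Sigma}\bigl(Y\cap O(\sigma)\bigr)$. As $\Sigma$ has only finitely many cones, it follows that $\dim(Y) = \max_{\sigma\in\Sigma}\dim\bigl(Y\cap O(\sigma)\bigr)$, with the convention that the empty scheme has dimension $-\infty$.

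Second, I would apply \Cref{thm:dimIntersectionToricTorusOrbit} separately to each $\sigma\in\Sigma$. That result shows that $Y\cap O(\sigma)$ is empty whenever $E_\sigma$ is not essential, and has dimension $n - \dim(\sigma) - \lvert E_\sigma\rvert$ whenever $E_\sigma$ is essential. Substituting into the previous maximum yields the claimed formula, and in particular shows that $Y$ is empty when none of the $E_\sigma$ is essential.

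The only subtlety, and hence the step I expect to be the main obstacle, is to ensure that the extremal-genericity hypothesis on $(f_1,\dots,f_r)$ implies the conclusions of \Cref{thm:dimIntersectionToricTorusOrbit} for all cones simultaneously. For each fixed $\sigma$, that theorem only guarantees its conclusion on a Zariski-dense subset $W_\sigma$ of the space of extremal coefficients (once the non-extremal coefficients are fixed arbitrarily). Since $\Sigma$ is finite, the intersection $\bigcap_{\sigma\in\Sigma} W_\sigma$ is a finite intersection of Zariski-dense sets and is itself Zariski-dense, so a single extremal-generic system realizes all the per-cone statements at once. Beyond this bookkeeping, which is routine once finiteness of $\Sigma$ is invoked, no further obstacle is expected.
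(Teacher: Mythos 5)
Your proposal is correct and follows essentially the same route as the paper: decompose $X$ into torus orbits via the orbit-cone correspondence, take the maximum of $\dim(Y\cap O(\sigma))$ over the finitely many cones, and invoke Theorem~\ref{thm:dimIntersectionToricTorusOrbit} for each cone. Your extra remark about intersecting the finitely many dense loci of extremal coefficients is a point the paper leaves implicit; it is harmless here because the relevant dense sets (coming from the Bernstein--Canny genericity conditions) contain dense open subsets, so their finite intersection is still dense.
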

    \begin{proof}
      Following the orbit-cone correspondence, we write
      $X = \bigcup_{\sigma \in \Sigma} O(\sigma)$ as a finite union of
      torus orbits. Since $Y$ is a subscheme of $X$, it can be written as a
      finite union of locally closed subschemes of $X$:
      $$Y =\bigcup_{\sigma\in\Sigma} \left(Y \cap O(\sigma)\right).$$
      Consequently,
      $$\dim(Y) = \dim \left(Y \cap \left(\bigcup_{\sigma \in \Sigma}
          O(\sigma) \right)\right) = \max_{\sigma \in \Sigma} \left(\dim
      \left(Y \cap O(\sigma)\right)\right).
      $$
      By \Cref{thm:dimIntersectionToricTorusOrbit},
      $\dim(Y \cap O(\sigma)) = n - \dim(\sigma) - \lvert
      E_\sigma\rvert \geq 0$ if $\{\mathcal A_{i}^\sigma : i \in E_{\sigma}\}$
      is essential,
      otherwise $Y \cap O(\sigma)$ is empty.
    \end{proof}
    
\section{Subsystems and complete intersections}\label{sec:completeinter}

In the classical polynomial algebra $\C[x_0,\dots,x_n]$, the
codimension of the zero set of a generic homogeneous system
$f_1,\dots,f_r$ is $r$. Such a variety $Y$ is called a \emph{complete
  intersection}. Complete intersections have been greatly studied
from a computational viewpoint since this structural property provides
tools to speed-up computations and to estimate the complexity of
Gr\"obner basis algorithms \cite{bardet2015complexity}.

Motivated by the classical projective case, in this section we study
subschemes $Y$ of $X$ associated to ideals generated by $\codim(Y)$ elements in
the Cox ring of $X$.

\begin{definition}[Complete intersection]
  Let $r\leq n$. We say that a system
  $(f_1,\dots,f_r) \in S_{\alpha_1} \times \dots \times S_{\alpha_r}$
  defines a \emph{complete intersection} if the subscheme $Y$ of $X$
  associated to the ideal generated by $f_1,\dots,f_r$ is nonempty and has dimension
  $n - r$.
\end{definition}

We would like to emphasize that some care is required with this
definition; indeed, $X$ need not be smooth, and this implies that
classical results on complete intersections might not apply to our
context. However, our complete intersections are, in particular,
set-theoretic complete intersections.
In order to avoid part of this difficulty, in this section we restrict the
analysis to effective divisor classes
$\alpha_1,\dots,\alpha_r$ which are $\Q$-Cartier, i.e. there exist multiples of these classes which are Cartier.
Complete intersections defined by Cartier divisors had been extensively
studied; for example, we can think about the BKK bound
\cite{bernshtein_number_1975} as the degree of a zero-dimensional
complete intersection \cite[Sect. 5.4]{fulton1993introduction}.

We start this section by characterizing the extremal-generic systems
leading to complete intersections via 
\Cref{thm:dimensionGenericToricSystem}.

\begin{corollary}\label{coro:complete_inter_criterion}
  Let $r \leq n$. For $j\in\conSet{1, r}$, fix an effective
  $\Q$-Cartier divisor class $\alpha_j$, and a monomial subset
  $\mathcal A_i\subset S_{\alpha_i}$. Let $Y$ be a closed subscheme of
  $X$ defined by a homogeneous ideal $\ideal{f_1,\dots,f_r}\subset S$,
  where each $f_i \in S_{\alpha_i}$ is extremal-generic of degree
  $\alpha_i$ with support $\mathcal A_i$. For a cone $\sigma$ in
  $\Sigma$, set
  $E_\sigma = \{i\in \conSet{1, r} \mid \mathcal A_i^\sigma
  \ne\emptyset\}$. Then, $Y$ is a complete intersection in $X$ if and
  only if the following conditions are satisfied:
  \begin{itemize}
    \item for all $\sigma\in \Sigma$, either $E_\sigma$ is not essential or
      $\dim(\sigma) + \lvert E_\sigma \rvert \geq r$;
   \item there exists $\sigma\in\Sigma$ such that
  $E_{\sigma}$ is essential.
  \end{itemize}
\end{corollary}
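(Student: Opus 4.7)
The approach is a direct combination of the dimension formula in Theorem~\ref{thm:dimensionGenericToricSystem} with Krull's principal ideal theorem to secure the matching lower bound on $\dim(Y)$. By Theorem~\ref{thm:dimensionGenericToricSystem},
\[
  \dim(Y) = \max_{\substack{\sigma \in \Sigma \\ E_\sigma \text{ is essential}}} \bigl(n - \dim(\sigma) - \lvert E_\sigma\rvert\bigr),
\]
with the convention that $Y$ is empty if no $E_\sigma$ is essential. Being a complete intersection amounts to $Y$ being nonempty with $\dim(Y)=n-r$, so the plan is to translate these two conditions into the two bullets of the statement.

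For the forward implication, if $Y$ is a complete intersection, then it is nonempty, so by the theorem some $E_\sigma$ must be essential, yielding the second bullet. The equality $\dim(Y)=n-r$ combined with the formula forces $n-\dim(\sigma)-\lvert E_\sigma\rvert \leq n-r$ for every essential $E_\sigma$, equivalently $\dim(\sigma)+\lvert E_\sigma\rvert \geq r$, which is the first bullet.

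For the backward implication, the second bullet ensures $Y\neq\emptyset$ via the theorem, and the first bullet together with the dimension formula gives $\dim(Y)\leq n-r$. The remaining step is the lower bound $\dim(Y)\geq n-r$, which I would establish chart by chart: on each affine open $U_\sigma=\Spec(\C[\sigma^\vee\cap M])$, the $\Q$-Cartier hypothesis guarantees that a positive integer multiple of each $\alpha_i$ is Cartier, so $V(f_i)\cap U_\sigma$ is set-theoretically cut out by a single regular function in the Noetherian $n$-dimensional ring $\C[\sigma^\vee\cap M]$. Applying Krull's height theorem to the resulting $r$ local equations gives $\codim(Y\cap U_\sigma,U_\sigma)\leq r$, whence $\dim(Y)\geq n-r$, and combined with the upper bound this yields the equality. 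The main obstacle is precisely this lower bound: on a non-smooth toric variety, a homogeneous equation of non-Cartier degree need not locally reduce to a single regular function, and a naive application of Krull to the Cox ring $S$ would bound the dimension of $\Spec(S/\langle f_1,\ldots,f_r\rangle)$ rather than of $Y$ itself. The $\Q$-Cartier assumption is exactly what enables the local principal-ideal reduction.
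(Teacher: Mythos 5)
Your proposal is correct and follows the same overall strategy as the paper: the forward implication and the bound $\dim(Y)\le n-r$ (plus nonemptiness) come straight from Theorem~\ref{thm:dimensionGenericToricSystem}, and the matching lower bound $\dim(Y)\ge n-r$ comes from Krull's height theorem applied chart by chart, with the $\Q$-Cartier hypothesis supplying local principal equations. The one step where you genuinely diverge is the reduction from $\Q$-Cartier to locally principal: you replace each $f_i$ by $f_i^k$ (with $k\alpha_i$ Cartier), which cuts out the same set and is principal on each chart $U_\sigma$, so Krull applies to $r$ set-theoretic equations in the $n$-dimensional ring $\C[\sigma^\vee\cap M]$. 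The paper instead refines the lattice to $M'=\tfrac{1}{k}M$, obtaining a finite quotient morphism $\phi\colon X'\to X$ under which the $\alpha_i$ become Cartier and $\dim\phi^{-1}(Y)=\dim Y$, and then runs the Cartier argument on $X'$ via Lemma~\ref{lem:compatible}. Your power trick is more elementary and self-contained; the paper's lattice-refinement reduction is heavier but places the argument literally in the Cartier setting and is reused elsewhere (e.g.\ in the proof of Lemma~\ref{lem:intQample} and in Section~\ref{sec:polytopalAlgebra}). Both routes are valid, and you correctly identify why a naive application of Krull in the Cox ring itself would not suffice.
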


\begin{proof}
  If $Y$ is a complete intersection, then $\dim(Y)=n-r$.
  By Theorem~\ref{thm:dimensionGenericToricSystem}, for all
  $\sigma\in \Sigma$, either $E_\sigma$ is not essential or
  $\dim(\sigma) + \lvert E_\sigma \rvert \geq r$. Since $Y$ is
  not empty, there exists $\sigma\in\Sigma$ such that $E_{\sigma}$ is
  essential.
  
  To prove the converse implication we observe that the two conditions
  together with Theorem~\ref{thm:dimensionGenericToricSystem} imply
  that $Y$ is not empty and $\dim(Y) \leq n-r$.
  In what follows we show that, as the divisor classes as
  $\Q$-Cartier, then either $Y$ is empty or $\dim(Y) \geq
  n-r$. Putting all together, we conclude that $\dim(Y) = n-r$.
  
  First, let us assume that the divisor classes $\alpha_1,\dots,\alpha_r$
  are Cartier. Then at each affine chart
  $U_{\sigma}$, the subscheme
  $Y \cap U_{\sigma}$ is defined by at most $r$
  nonzero polynomials; see \Cref{lem:compatible}.
  By Krull's height
  theorem~\cite[Thm.~10.2]{eisenbud_commutative_2004},
  $\dim(Y \cap U_{\sigma}) \geq n-r$ for each maximal $U_\sigma$ such
  that $Y \cap U_{\sigma}$ is not empty. Therefore, if the divisor classes
  are Cartier, $\dim(Y) \geq n-r$ or $Y$ is empty.
  If the divisor classes are not Cartier, but $\Q$-Cartier, let
  $k \in \N$ be such that $k \, \alpha_1,\dots, k \, \alpha_r$ are
  Cartier. Let $X'$ be the toric variety defined by the same fan as
  $X$, but with respect to the lattice $M ' := \frac{1}{k} M$, instead
  of the original lattice $M$.
  By \cite[Prop.~3.3.7, Ex.~5.0.13]{cox2011toric}, there is a toric
  morphism $\phi: X' \rightarrow X$ which presents $X$ as a geometric
  quotient of $X'$ by a finite group. In particular, the dimensions of
  $Y$ and $\phi^{-1}(Y)$ are the same and we can regard the original
  divisors $\alpha_1,\dots,\alpha_r$ in $X$ as Cartier divisors on
  $X'$. Moreover, because $M' \subseteq M$, we have that each monomial
  set $\mathcal A_i \subset S_{\alpha_i}$ contains monomials of degree
  $k\alpha_i$ in the Cox ring of $X'$ and the notions of essentiality
  and extreme-genericity
  do not change.
  Hence, we can assume without loss of generality that
  $\alpha_1,\dots,\alpha_r$ are Cartier.
\end{proof}

In contrast with what happens in the classical projective case, being
a complete intersection does not imply that every subsystem is a
complete intersection. 

\begin{example} \label{ex:subsystemProblematic}
  Consider the toric variety $\PP^1 \times \PP^2$ and its Cox ring
  $\C[x_0,x_1] \otimes \C[y_0,y_1,y_2]$ equipped with its natural $\mathbb Z^2$
  grading. Set $f_1 = x_0$, $f_2 = x_1\, y_0$, $f_3 = x_1\, y_1$.
  Then, the sequence $(f_1, f_2, f_3)$ defines a complete intersection,
  but $(f_2, f_3)$ does not.
\end{example}

Under the $\Q$-Cartier assumption, the following statement characterizes in which cases every subsystem of a
extremal-generic complete intersection also defines a complete
intersection.

\begin{lemma}
  \label{lem:genericallyGoodSubsys}
For $j\in\conSet{1,r}$, fix an effective $\Q$-Cartier divisor class $\alpha_j \in \Cl(X)$ and a
  monomial subset $\mathcal A_j\subset S_{\alpha_j}$.  
Let $Y$ be a closed
  subscheme of $X$ defined by a homogeneous ideal
  $\ideal{f_1,\dots,f_r}\subset S$, where each $f_i \in S_{\alpha_i}$
  is extremal-generic of degree $\alpha_i$ with support
  $\mathcal A_i$.
  Assume that $Y$ is not empty.
  For each subset $I \subseteq \conSet{1, r}$, let $Y_I$ be the closed
  subscheme
  associated to $\ideal{f_{i}}_{i\in I}$.
  For each cone $\sigma$ in $\Sigma$, let
  $E_\sigma = \{ i\in\conSet{1, r}\mid \mathcal A_i^\sigma
  \neq \emptyset\}$.
  Then, for every $I \subseteq \conSet{1, r}$, $\dim(Y_I) = n - \lvert I\rvert$ if
  and only if,
  for each cone $\sigma \in \Sigma$,
  $r - \lvert E_\sigma\rvert \leq \dim(\sigma)$.
\end{lemma}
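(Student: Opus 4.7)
The plan is to apply the complete-intersection criterion of \Cref{coro:complete_inter_criterion} separately to each subsystem $(f_i)_{i\in I}$, viewed as an extremal-generic system with supports $(\mathcal A_i)_{i\in I}$. The property that $\dim(Y_I)=n-\lvert I\rvert$ depends only on the coefficients of the $f_i$ with $i\in I$, so extremal-genericity of the original system carries over to each subsystem (this is a straightforward product-of-dense-sets argument in the spirit of \Cref{lem:genericity}). Moreover, since $Y$ is nonempty and $Y\subseteq Y_I$ for every $I$, nonemptiness of each $Y_I$ is automatic, which in turn forces the existence of at least one cone $\sigma$ for which $E_\sigma\cap I$ is essential (otherwise \Cref{thm:dimIntersectionToricTorusOrbit} would make $Y_I$ empty). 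Thus in both directions only the inequalities of \Cref{coro:complete_inter_criterion} remain to be checked.

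For the forward direction, assume $\dim(Y_I)=n-\lvert I\rvert$ for every $I$, and fix an arbitrary cone $\sigma\in\Sigma$. Set $I:=\conSet{1,r}\setminus E_\sigma$. For every $i\in I$ we have $\mathcal A_i^\sigma=\emptyset$, i.e.\ $f_i^\sigma=0$, so by the case $f_i^\sigma=0$ of \Cref{lem:disting_cases} we get $O(\sigma)\subseteq V(f_i)$ for each such $i$. Intersecting, $O(\sigma)\subseteq Y_I$, whence
\[
\dim(Y_I)\ \geq\ \dim(O(\sigma))\ =\ n-\dim(\sigma).
\]
Combining with $\dim(Y_I)=n-\lvert I\rvert=n-(r-\lvert E_\sigma\rvert)$ yields $r-\lvert E_\sigma\rvert\leq\dim(\sigma)$, as required.

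For the backward direction, assume $r-\lvert E_\sigma\rvert\leq\dim(\sigma)$ for every cone $\sigma$, and fix $I\subseteq\conSet{1,r}$. We verify \Cref{coro:complete_inter_criterion} for the subsystem indexed by $I$, with $\lvert I\rvert$ in place of $r$ and $E_\sigma\cap I$ in place of $E_\sigma$. Nonemptiness and the existence of an essential cone have already been explained. For an arbitrary cone $\sigma$ for which $\{\mathcal A_j^\sigma:j\in E_\sigma\cap I\}$ is essential, we need $\dim(\sigma)+\lvert E_\sigma\cap I\rvert\geq\lvert I\rvert$, which rearranges to $\lvert I\setminus E_\sigma\rvert\leq\dim(\sigma)$. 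Since $I\setminus E_\sigma\subseteq\conSet{1,r}\setminus E_\sigma$, we get $\lvert I\setminus E_\sigma\rvert\leq r-\lvert E_\sigma\rvert\leq\dim(\sigma)$ by hypothesis, which concludes the verification.

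The only delicate point is guessing the right subsystem in the forward direction: the choice $I=\conSet{1,r}\setminus E_\sigma$ is precisely the maximal subsystem whose equations all vanish on $O(\sigma)$, which forces $O(\sigma)\subseteq Y_I$ and thereby translates the cone combinatorics directly into a dimension lower bound. Apart from this observation, the argument is bookkeeping built on \Cref{coro:complete_inter_criterion}, \Cref{thm:dimIntersectionToricTorusOrbit}, and the $f^\sigma=0$ case of \Cref{lem:disting_cases}.
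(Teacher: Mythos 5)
Your proof is correct and follows essentially the same route as the paper's: both directions hinge on the inequality $\lvert I\setminus E_\sigma\rvert\leq r-\lvert E_\sigma\rvert$ and on the choice $I=\conSet{1,r}\setminus E_\sigma$ for the forward implication. The only cosmetic differences are that you package the dimension count through \Cref{coro:complete_inter_criterion} rather than re-deriving it from \Cref{thm:dimensionGenericToricSystem} together with Krull's height theorem, and in the forward direction you obtain the lower bound $\dim(Y_I)\geq n-\dim(\sigma)$ from the containment $O(\sigma)\subseteq Y_I$ instead of from the essentiality of the empty family $E_\sigma\cap I=\emptyset$ --- these are equivalent observations.
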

\begin{proof}
  First, we show that $\dim(Y_I) = n-\lvert I\rvert$ holds true if and only if
  for all $\sigma\in\Sigma$ such that $E_\sigma \cap I$ is essential,
  the inequality
  $\lvert I\rvert-\lvert E_\sigma\cap I\rvert\leq \dim(\sigma)$ holds
  true.
  As $Y_I$ contains $Y$, it is nonempty; therefore, by
  Theorem~\ref{thm:dimensionGenericToricSystem}, there is at least one
  essential family $E_\sigma \cap I$.
  By the same theorem, $\dim(Y_I)$ is equal to the maximum of
  $n - \dim(\sigma) - \lvert E_\sigma \cap I \rvert$ among the sets
  $E_\sigma \cap I$ which are essential.
  Hence, if there is an essential $E_\sigma \cap I$ such that
  $\lvert I\rvert-\lvert E_\sigma\cap I\rvert > \dim(\sigma)$, then
  $\dim(Y_I) > n-\lvert I\rvert$.
  Otherwise, $\dim(Y_I) \geq n-\lvert I\rvert$.
  Moreover, in this case $Y_I$ is defined by $\lvert I\rvert$
  polynomials in $S$ with $\Q$-Cartier degrees, so by Krull's height
  theorem (see the proof of \Cref{coro:complete_inter_criterion}), we
  have that $\dim(Y_I) \geq n-\lvert I\rvert$. Therefore,
  $\dim(Y_I) = n-\lvert I\rvert$.
    
  To prove the if direction of the lemma, notice that for any subset
  $I\subseteq\conSet{1, r}$ and any $\sigma \in \Sigma$, we have that
  $E_\sigma \subseteq \conSet{1, r}$, and so
  $\lvert I\rvert-\lvert E_\sigma\cap I\rvert\leq r-\lvert
  E_\sigma\rvert$ (when $I = \conSet{1, r}$, this is an equality).
  Hence, if $r - \lvert E_\sigma\rvert \leq \dim(\sigma)$ for any
  cone $\sigma$ in $\Sigma$, then the inequality
  $\lvert I\rvert-\lvert E_\sigma\cap I\rvert\leq \dim(\sigma)$ holds
  for all $I\subseteq \conSet{1, r}$ and for all cone $\sigma$ in $\Sigma$. Therefore, $\dim(Y_I) = n-\lvert I\rvert$.

  To prove the only if direction, assume reciprocally that there is
  $\sigma \in \Sigma$ such that
  $r - \lvert E_\sigma\rvert > \dim(\sigma)$. Let $I := \conSet{1, r} \setminus E_\sigma$.
  Then, $\lvert I \rvert = r - \lvert E_\sigma\rvert > \dim(\sigma)$ and
  $E_\sigma \cap I = \emptyset$ is essential. Hence,
  by Theorem~\ref{thm:dimensionGenericToricSystem},
  $\dim(Y_I) \geq n - \dim(\sigma) - \lvert E_\sigma \cap I \rvert > n-\lvert I\rvert$.
\end{proof}

\begin{example}[Cont. Example~\ref{ex:subsystemProblematic}]
  Consider the $1$-dimensional cone $\sigma$ associated to the torus orbit
  $O(\sigma) = \{((1:0),(a : b)) \in \PP^1 \times \PP^2 \mid a\ne 0, b\ne 0\}$.
  Using the notation from \Cref{lem:genericallyGoodSubsys}, $\lvert
  E_\sigma\rvert = 1$ since $f_2$
  and $f_3$ vanish on $O(\sigma)$. Hence, $3 - 1 > \dim(\sigma) = 1$, and
  therefore \Cref{lem:genericallyGoodSubsys} states that there must exist a
  subsystem of $(f_1, f_2, f_3)$ which does not define a complete intersection.
\end{example}

An important special case of \Cref{lem:genericallyGoodSubsys} arises when the
divisors are
Cartier and nef. In that case, under the extremal-genericity assumption, every
subsystem has the expected
codimension. In our setting, nef Cartier divisor classes correspond to
isomorphism classes of basepoint-free line bundles on our toric variety,
see~\cite[Thm.~6.3.12]{cox2011toric}.

\begin{corollary}
  Let $r\leq n$ and $\alpha_1,\dots,\alpha_r \in \Cl(X)$ be nef
  Cartier divisor classes. For each $i\in\conSet{1, r}$, let $\A_i$ be the set of
  monomials in $S_{\alpha_i}$.
  Let $Y$ and $Y_I$ (for $I\subseteq \conSet{1,r}$) be as in Lemma~\ref{lem:genericallyGoodSubsys}, and
  assume that $Y$ is nonempty. Then
  $\dim(Y_I) = n - \lvert I\rvert$ for every
  $I \subseteq \conSet{1, r}$.
\end{corollary}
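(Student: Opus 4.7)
The plan is to reduce the claim to Lemma~\ref{lem:genericallyGoodSubsys}, which characterizes when every subsystem has the expected codimension via the combinatorial condition $r-\lvert E_\sigma\rvert\leq \dim(\sigma)$ for every cone $\sigma\in\Sigma$, where $E_\sigma=\{i\in\conSet{1,r}:\A_i^\sigma\neq\emptyset\}$. I would aim to establish the much stronger statement that $E_\sigma=\conSet{1,r}$ for every cone $\sigma$, which trivially gives $r-\lvert E_\sigma\rvert=0\leq\dim(\sigma)$.

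The main step is therefore to show that, for each $i\in\conSet{1,r}$ and each cone $\sigma\in\Sigma$, the set $\A_i^\sigma$ is nonempty. By the definition following Proposition~\ref{prop:map_compatible} together with Lemma~\ref{lem:noncompatible}, this amounts to producing an effective $T$-divisor in the class $\alpha_i$ whose valuation is zero on every ray of $\sigma$. This is where the nef Cartier hypothesis enters: I would pick a maximal cone $\tau\in\Sigma$ containing $\sigma$ and invoke the local Cartier data at $\tau$ to obtain $m_\tau\in M$ such that, if $D_i$ is a $T$-divisor representing $\alpha_i$, then $D_i+\div(\chi^{m_\tau})$ has valuation zero on all rays of $\tau$. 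The nef condition, via the characterization of basepoint-freeness in~\cite[Thm.~6.3.12]{cox2011toric}, forces this translate to be globally effective. Since $\sigma(1)\subseteq\tau(1)$, it also has valuation zero on all rays of $\sigma$, so Proposition~\ref{prop:bij_pol_Sa} shows that $0$ lies in the corresponding set $A_{D_i+\div(\chi^{m_\tau}),\sigma}$, and Proposition~\ref{prop:map_compatible} lifts it to a monomial in $S_{\alpha_i}$ --- and hence in $\A_i$, because $\A_i$ is taken to be the full monomial set in $S_{\alpha_i}$ --- whose image under $f\mapsto f^\sigma$ is a nonzero monomial.

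The hard part is conceptually only this interplay between the Cartier data and the nefness: Cartier alone yields the local trivializing character $m_\tau$ but cannot guarantee effectiveness of the translated divisor, while nefness alone does not produce such a character; it is the combination which ensures that the restriction $f\mapsto f^\sigma$ is nonzero at every cone simultaneously for every $\alpha_i$, and thus that $E_\sigma=\conSet{1,r}$ for all $\sigma\in\Sigma$. Once this is in place, the nonemptiness of $Y$ is already in our hypotheses, and a direct application of Lemma~\ref{lem:genericallyGoodSubsys} yields $\dim(Y_I)=n-\lvert I\rvert$ for every $I\subseteq\conSet{1,r}$.
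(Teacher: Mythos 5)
Your proof is correct and follows essentially the same route as the paper: both show that nefness plus the Cartier hypothesis forces $E_\sigma = \conSet{1,r}$ for every cone $\sigma$ (the paper cites basepoint-freeness via \cite[Thm.~6.3.12, Prop.~6.1.1]{cox2011toric} where you unpack the local Cartier data argument explicitly) and then conclude by Lemma~\ref{lem:genericallyGoodSubsys}.
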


\begin{proof}
  If $\alpha_i$ is a nef Cartier divisor class, then $\alpha_i$ is
  basepoint-free \cite[Thm. 6.3.12]{cox2011toric}, so for each
  $\sigma \in \Sigma$ there is a monomial in $m \in S_{\alpha_i}$ such
  that $m^\sigma \neq 0$ \cite[Prop.
  6.1.1]{cox2011toric}.
  Hence for every cone $\sigma$ in $\Sigma$, $E_\sigma = \conSet{1, r}$.
  As $Y$ is nonempty ,
   the corollary follows from
  Lemma~\ref{lem:genericallyGoodSubsys}.
\end{proof}

One could ask if the conditions of the previous lemma allow us to
derive any information about subsystems of arbitrary (non
extremal-generic) complete intersections. However, this is not true,
as can be checked by applying a (structured) generic linear change of
coordinates to the system from Example~\ref{ex:subsystemProblematic}.
However, when the degrees $\alpha_1,\dots,\alpha_r$ are associated to
$\Q$-ample divisors, we can drop the genericity assumptions.
By slight abuse of notation, we say that a divisor class $\alpha$ is
\emph{$\Q$-ample} if it is $\mathbb Q$-Cartier and the line bundle
associated to a Cartier multiple of $\alpha$ is ample.

\begin{lemma} \label{lem:intQample}
  Let $Y$ be a closed subscheme of $X$ of dimension $\geq 1$ and $\alpha\in\Cl(X)$ be a $\Q$-ample
  divisor class. Consider $f \in S_\alpha$ and let $Z$ be the closed
  subscheme of $X$ associated to $\langle f \rangle$.  Then the
  scheme-theoretic intersection $Y \cap Z$ is nonempty and its dimension satisfies the following
  inequalities:
  \[
    \dim(Y) \geq \dim(Y \cap Z) \geq \dim(Y) - 1.
  \]
\end{lemma}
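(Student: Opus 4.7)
The upper bound $\dim(Y\cap Z)\leq \dim(Y)$ is immediate from the inclusion $Y\cap Z\subseteq Y$, so the real content is the non-emptiness of the intersection together with the lower bound $\dim(Y\cap Z)\geq \dim(Y)-1$. We may assume $f\neq 0$, since otherwise $Z=X$ and there is nothing to prove.

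Since $\alpha$ is $\Q$-Cartier, pick $k\geq 1$ so that $k\alpha$ is Cartier; by the $\Q$-ampleness hypothesis, $k\alpha$ is then an ample Cartier divisor class. Replacing $f$ by $f^k$ changes the scheme structure on $Z$ but not its underlying topological space, and hence preserves both the non-emptiness and the dimension of $Y\cap Z$. So we may assume for the rest of the argument that $\alpha$ itself is an ample Cartier class; in particular $X$ is projective (as it is complete and carries an ample divisor), and $Z$ is an effective Cartier divisor, locally cut out by a single equation.

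For non-emptiness, let $W$ be an irreducible component of $Y$ of maximal dimension $\dim(W)=\dim(Y)\geq 1$. If $f$ vanishes identically on $W$, then $W\subseteq Z$, which already gives $\dim(Y\cap Z)\geq \dim(Y)$. Otherwise $f|_W$ is a nonzero section of the line bundle $\mathcal{O}_X(\alpha)|_W$, which is ample on the positive-dimensional projective variety $W$; such a section must have non-empty zero locus, because an ample line bundle on a positive-dimensional projective scheme has positive degree on every curve and therefore admits no nowhere-vanishing global section. In either case $W\cap Z\neq\emptyset$, and in particular $Y\cap Z\neq\emptyset$.

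For the lower bound, it remains to treat the case where $f|_W\not\equiv 0$. Pick a point $p\in W\cap Z$ and let $g\in\mathcal{O}_{X,p}$ be a local equation for the Cartier divisor $Z$ at $p$. Because $f|_W\not\equiv 0$, the image of $g$ in $\mathcal{O}_{W,p}$ is a non-zero divisor, so Krull's principal ideal theorem shows that every minimal prime over $g$ in $\mathcal{O}_{W,p}$ has height one; equivalently, every irreducible component of $W\cap Z$ passing through $p$ has dimension $\dim(W)-1=\dim(Y)-1$. The main subtlety is the reduction to the Cartier case via $f\mapsto f^k$; once this is in hand, the rest is a standard combination of ampleness (for non-emptiness) and Krull's height theorem (for the lower bound).
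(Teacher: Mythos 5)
Your proof is correct, but it follows a genuinely different route from the paper's. The paper also first passes to a Cartier multiple (by refining the lattice rather than replacing $f$ by $f^k$, but to the same effect), and then immediately globalizes: it takes the closed immersion $\phi : X \hookrightarrow \PP^N$ given by the global sections of the very ample sheaf $\mathscr O(Z)$, so that $\phi(Z)=\phi(X)\cap H$ for a hyperplane $H$ (Hartshorne, Thm.~II.7.1), and then both the non-emptiness and the two dimension inequalities drop out of a single application of the projective dimension theorem for hyperplane sections (Hartshorne, Thm.~I.7.2). You instead stay on $X$ and split the statement into two separate arguments: non-emptiness from the fact that an ample line bundle on a positive-dimensional projective variety admits no nowhere-vanishing section (so $f$ restricted to a top-dimensional component $W$ of $Y$ either vanishes identically or has a nonempty zero locus), and the lower bound $\dim(Y\cap Z)\geq\dim(Y)-1$ from Krull's Hauptidealsatz applied to a local equation of the Cartier divisor $Z$ at a point of $W\cap Z$, together with the catenarity/equidimensionality of the integral variety $W$. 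The paper's approach is shorter because the hyperplane-section theorem packages everything at once; yours is more self-contained at the level of local algebra and makes explicit exactly where ampleness is used (only for non-emptiness) versus where the Cartier property is used (only for the codimension bound). One small point worth making explicit in your write-up: in the Krull step you implicitly replace $W$ by its reduced (integral) structure so that $\mathcal O_{W,p}$ is a domain and the dimension formula $\dim V(\mathfrak q)=\dim W-\operatorname{ht}(\mathfrak q)$ applies; this is harmless since only underlying topological spaces matter for the dimension claim, but it should be said.
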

\begin{proof}
  First we observe that $Z$ is a
  $\Q$-ample divisor on $X$ with class $\alpha \in
  \Cl(X)$. As we did in the proof of
  \Cref{coro:complete_inter_criterion}, up to replacing $M$ by
  $\frac{1}{k} M$ for some $k\in\Z_{> 0}$, we can assume without loss
  of generality that the sheaf $\mathscr O(Z)$ (see
  \cite[Prop.~4.0.27]{cox2011toric}) is a very ample line bundle and
  $Z$ is a very ample divisor.
 
  Let $\phi : X \xhookrightarrow{} \mathbb P^N $ be the closed
  immersion of $X$ into an $N$-dimensional projective space defined by
  the global sections of $\mathscr O(Z)$.
  By \cite[Thm.~II.7.1]{hartshorne},
  there is a hyperplane $H$ in $\mathbb P^N$ such that
  $\phi(Z) = \phi(X) \cap H$.
  Hence, $\phi(Y \cap Z) = \phi(Y) \cap \phi(Z) = \phi(Y)
  \cap H$. 
  By the projective dimension theorem
  \cite[Thm.~I.7.2]{hartshorne},
  $$\dim(\phi(Y)) \geq \dim(\phi(Y \cap Z)) = \dim(\phi(Y) \cap H) \geq \dim(\phi(Y)) - 1.$$
  Finally, our lemma follows from the fact that closed immersions
  preserve dimensions and 
  so $\dim(\phi(Y))=\dim(Y)$ and $\dim(\phi(Y \cap Z))=\dim(Y\cap Z)$.
\end{proof}

\begin{theorem}\label{thm:Qample_subsets}
  Let $r \leq n$ be an integer, $\alpha_1,\ldots, \alpha_r\in\Cl(X)$ be $\Q$-ample
  divisor classes, and
  $(f_1,\ldots, f_r)\in S_{\alpha_1}\times\dots\times S_{\alpha_r}$ be
  a sequence of arbitrary polynomials. Then the closed subscheme of $X$
  defined by $(f_1,\ldots, f_r)$ is nonempty. Moreover, if
  $(f_1,\ldots, f_r)$ defines a complete intersection in $X$, then any
  subsystem of $(f_1,\ldots, f_r)$ also defines a complete
  intersection.
\end{theorem}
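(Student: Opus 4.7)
The strategy is to apply \Cref{lem:intQample} iteratively. Recall that the lemma says: if $Y$ has dimension at least $1$ and $Z = V(f)$ is defined by a homogeneous polynomial whose class is $\mathbb{Q}$-ample, then $Y\cap Z$ is nonempty and has dimension in $\{\dim(Y)-1,\dim(Y)\}$. Since all of $\alpha_1,\dots,\alpha_r$ are $\mathbb{Q}$-ample, the lemma applies at every step.

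For the first claim (nonemptiness), set $Y_0 := X$ and $Y_i := Y_{i-1}\cap V(f_i)$ for $i\in\conSet{1,r}$. I would prove by induction on $i$ that $Y_i$ is nonempty and $\dim(Y_i)\geq n-i$. At step $i\in\conSet{1,r}$, the previous $Y_{i-1}$ has dimension at least $n-(i-1)\geq n-r+1\geq 1$ (because $r\leq n$), so \Cref{lem:intQample} applies and yields both nonemptiness of $Y_i$ and the inequality $\dim(Y_i)\geq \dim(Y_{i-1})-1\geq n-i$. At the end, $Y=Y_r$ is nonempty.

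For the complete-intersection claim, suppose $\dim(Y)=n-r$ and fix any subsystem indexed by $I\subseteq\conSet{1,r}$; by relabeling, I may assume $I=\conSet{1,s}$ and work with $Y_s:=V(f_1,\dots,f_s)$. The lower bound $\dim(Y_s)\geq n-s$ follows from the same iterated application of \Cref{lem:intQample} used for nonemptiness (applied just up to step $s\leq r\leq n$, for which the hypothesis $\dim(Y_{i-1})\geq 1$ is again secured by $i\leq s\leq n$). For the upper bound I continue the chain: for $j\in\conSet{s,r}$, set $Y_j':=V(f_1,\dots,f_j)$, so $Y_s'=Y_s$ and $Y_r'=Y$. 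Each $Y_j'$ contains the nonempty scheme $Y$, hence is nonempty, and for every $j$ with $\dim(Y_j')\geq 1$ one may apply \Cref{lem:intQample} to get $\dim(Y_{j+1}')\geq \dim(Y_j')-1$; if instead $\dim(Y_j')=0$, then $0\leq\dim(Y_{j+1}')\leq\dim(Y_j')=0$ (using $Y\subseteq Y_{j+1}'\subseteq Y_j'$), so the inequality $\dim(Y_{j+1}')\geq\dim(Y_j')-1$ still holds. Iterating from $j=s$ to $j=r$ gives $n-r=\dim(Y)\geq \dim(Y_s)-(r-s)$, i.e. $\dim(Y_s)\leq n-s$. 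Combining both bounds yields $\dim(Y_s)=n-s$, which is the desired complete-intersection property for the subsystem.

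The only subtle point is to keep track of when the dimension drops to zero so that \Cref{lem:intQample} cannot be invoked in its strong form; but this is handled cleanly by observing that $Y\subseteq Y_j'$ forces $Y_j'$ to remain nonempty throughout, so the one-sided inequality $\dim(Y_{j+1}')\geq\dim(Y_j')-1$ continues to hold in the degenerate case. No further ingredients beyond \Cref{lem:intQample} and elementary dimension-counting are needed.
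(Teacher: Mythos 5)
Your proof is correct and follows essentially the same route as the paper's: iterated application of \Cref{lem:intQample} to get the sandwich $\dim(Y_{i-1}) \geq \dim(Y_i) \geq \dim(Y_{i-1})-1$, yielding nonemptiness and the lower bound $\dim(Y_i)\geq n-i$, then a reordering of the polynomials to reduce an arbitrary subsystem to an initial segment. Your explicit treatment of the degenerate case $\dim(Y_j')=0$ (where the lemma's hypothesis fails but the containment $Y\subseteq Y_{j+1}'\subseteq Y_j'$ saves the inequality) is a point the paper's proof passes over silently, and is a welcome bit of extra care.
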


\begin{proof}
  Let $Y_i$ be the closed subscheme of $X$ associated to $(f_1,\dots,f_i)$.
  By Lemma~\ref{lem:intQample},
  $\dim(Y_{i-1}) \geq \dim(Y_i) \geq \dim(Y_{i-1}) - 1$ for all
  $i\in\conSet{1,r}$. By induction, we have that $n \geq \dim(Y_i) \geq n - i$ for all
  $i\in\conSet{1,r}$. In particular, if $\dim(Y_i) = n - i$, then
  $n - i + 1 = \dim(Y_i) + 1 \geq \dim(Y_{i-1}) \geq n - i - 1$, so
  $\dim(Y_{i-1}) = n - i - 1$.
  Proceeding inductively, this shows that, if $\dim(Y_r) = n - r$,
  that is, if $Y_r$ is a complete intersection, then
  $\dim(Y_i) = n - i$, for every $i$. The rest of the proof follows
  from the fact that the relative order of the polynomials
  $(f_1,\dots,f_r)$ does not change $Y_r$ (nor its dimension), so the
  same argument holds up to reordering the polynomials.
\end{proof}

\begin{remark*}
  We emphasize that the assumptions of
  \Cref{thm:Qample_subsets} implicitly
  require $X$ to be projective. Indeed, a
  toric variety $X$ defined by a complete fan is projective if and only if there
  exists a very ample line bundle on $X$, which is equivalent to the existence of a
  $\Q$-ample divisor class.
\end{remark*}

\section{Polytopal algebras}\label{sec:polytopalAlgebra}

Polytopal algebras provide us with 'nice' algebras to compute with
sparse polynomials: they are $\N$-graded and Cohen-Macaulay
\cite[Thm.~1]{hochster1972rings}. As we shall see in the proof of \Cref{thm:regSeqInPolytopalAlg}, in this
case any homogeneous
system defining a complete intersection is regular. This property can be
used for instance for Gr\"obner basis computations
\cite{faugere_complexity_2016,faugere2014sparse,bender2019algorithms}.
In this section, we characterize the degrees at which extremal-generic systems are regular.

Let $P \subset M_\R$ be a rational full-dimensional polytope, i.e. a polytope
whose vertices lie in $M\otimes \Q$. We let $\Sigma$ denote its normal
fan, and $\sigma\subset M_\R\times \R$ be the pointed cone $\R_{\geq 0}(P\times
\{1\})\subset \R^{n+1}$. We
consider the $\Z_{\geq 0}$-graded polytopal algebra
$$
\C[P_M] = \bigoplus_{i \geq 0} \bigoplus_{m \in ((i\cdot P)\times\{i\}) \cap
(M\times \Z)}
\C \cdot \chi^{m}.
$$
Said otherwise, monomials in $\C[P_M]_i$ can be identified with
lattice points in $i\cdot P$.  An important fact is that
$\Proj(\C[P_M])$ is isomorphic to the toric variety associated to the
normal fan of $P$: this is stated in \cite[Thm.~7.1.13]{cox2011toric} for
lattice polytopes, and it can be extended to rational polytopes by
using the Veronese embedding, see \cite[Sec.~3]{reid2002graded}.
In fact, up to
refining the lattice $M$, the polytopal algebra $\C[P_M]$ can be thought of as
the subalgebra of the Cox ring which is generated by homogeneous elements whose
degrees correspond to $\Q$-ample divisor classes with associated polytope equal to $P$ (up
to translation).

\subsection{Extremal-generic systems in $\C[P_M]$}

We start this section by proving that we can assume, without loss of
generality, that $P$ is a normal lattice polytope.  For this, let $M'$
be a larger lattice such that $[M':M]<\infty$ and such that $P$ is a
normal lattice polytope over $M'$; such an $M'$ exists
by~\cite[Thm.~2.2.12]{cox2011toric}. Then there is a natural group
action of $G = \Hom(M'/M, \C^*)$ on the torus $T = \Hom(M', \C^*)$
which extends to toric varieties defined over $M'$. The inclusion of
$M$ is $M'$ gives rise to surjective morphisms of toric varieties
defined over $M'$ onto their counterparts defined over $M$, which
present them as geometric quotient by $G$, see
e.g.~\cite[Prop.~1.3.18, Prop.~3.3.7,
Example~5.0.13]{cox2011toric}. This is the general framework of this
section: we consider a refinement of the lattice and we show that this
does not change the dimension of the associated subscheme.

We observe that $G$ acts on $\C[P_{M'}]$ via the morphism
$G\rightarrow \Aut(\C[P_{M'}])$ which sends $(g, \chi^{(m', i)})$ to
$g(m')\chi^{(m',i)}\in\C[P_{M'}]$. The ring of invariants $\C[P_{M'}]^G$ equals
the image of $\C[P_M]$ via the inclusion $\C[P_M]\hookrightarrow\C[P_{M'}]$,
and hence it is generated by monomials.

\begin{proposition}\label{prop:changelattice}
  Let $\pi:\Proj(\C[P_{M'}])\rightarrow \Proj(\C[P_{M}])$ be the surjective
  morphism dual to the inclusion morphism $\C[P_M]\hookrightarrow \C[P_{M'}]$.
  For any closed subscheme $Y\subset \Proj(\C[P_{M}])$, the
  scheme-theoretic inverse image $\pi^{-1}(Y)$ has the same dimension as $Y$.
\end{proposition}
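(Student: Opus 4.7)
The plan is to reduce the statement to the standard fact that finite surjective morphisms of Noetherian schemes preserve dimensions of closed subschemes under scheme-theoretic inverse image. First, I would establish that $\pi$ is a finite morphism. The key observation is that since $G = \Hom(M'/M, \C^*)$ is a finite group with ring of invariants $\C[P_{M'}]^G = \C[P_M]$, each homogeneous element $f \in \C[P_{M'}]$ satisfies the monic polynomial equation $\prod_{g \in G}(T - g \cdot f) = 0$ with coefficients in $\C[P_M]$, so the inclusion $\C[P_M] \hookrightarrow \C[P_{M'}]$ is integral. Combined with the fact that $\C[P_{M'}]$ is a finitely generated $\C$-algebra (hence a fortiori a finitely generated $\C[P_M]$-algebra), integrality forces $\C[P_{M'}]$ to be finite as a $\C[P_M]$-module. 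Passing to $\Proj$ (and restricting to any affine chart), the induced morphism $\pi$ is therefore finite. Its surjectivity was recalled in the paragraph preceding the statement, since $\pi$ realizes $\Proj(\C[P_M])$ as a geometric quotient by $G$.

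Next, I would apply the fact that for a finite surjective morphism $\pi : X' \to X$ of Noetherian schemes and a closed subscheme $Y \subset X$, one has $\dim(\pi^{-1}(Y)) = \dim(Y)$. The justification proceeds by forming the Cartesian square defining $\pi^{-1}(Y)$: base change preserves finiteness and surjectivity of morphisms, so $\pi^{-1}(Y) \to Y$ is again finite and surjective. Working on each affine chart, this corresponds to an integral extension of rings, and the going-up theorem gives equality of Krull dimensions (equivalently, finite morphisms have zero-dimensional fibers and so preserve dimension). Extending this equality from irreducible components of $Y$ to $Y$ itself is routine.

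I do not expect any serious obstacle: the entire content is to recognize that the refinement-of-lattice construction produces a finite quotient morphism, after which the result is a specialization of the standard dimension-preservation property of finite morphisms. The one point requiring some care is to confirm that the module-finiteness of $\C[P_{M'}]$ over $\C[P_M]$ does translate into finiteness of $\pi$ as a morphism of $\Proj$ schemes, which is handled by noting that $\C[P_{M'}]$ is generated in degrees bounded by $[M' : M]$ as a graded module over $\C[P_M]$, so finiteness holds on each standard affine open chart of $\Proj(\C[P_M])$.
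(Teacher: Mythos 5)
Your proof is correct, but it takes a genuinely different route from the paper's. You package the whole argument into the standard theory of finite morphisms: the monic polynomial $\prod_{g\in G}(T-g\cdot f)$ shows that $\C[P_M]=\C[P_{M'}]^G\hookrightarrow\C[P_{M'}]$ is integral, hence (being algebra-finite) module-finite, so $\pi$ is a finite surjective morphism; base change then makes $\pi^{-1}(Y)\to Y$ finite and surjective, and going-up plus incomparability give $\dim(\pi^{-1}(Y))=\dim(Y)$. The paper never invokes finiteness or integrality explicitly; it argues directly with chains of prime ideals, first proving a structural lemma --- that the minimal primes of $\mathfrak{p}\otimes\C[P_{M'}]$ over a prime $\mathfrak{p}\subset\C[P_M]$ form a single $G$-orbit, via invariant elements of the form $\prod_{g\in G}g\cdot x^k$ --- and then using distinctness of these orbits to lift a strict chain of primes downstairs to a strict chain upstairs. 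The two arguments rest on the same underlying commutative algebra, but yours is more modular and generalizes verbatim to any finite group quotient, while the paper's orbit lemma extracts finer information (the components of $\pi^{-1}$ of an irreducible subscheme form one $G$-orbit) than is strictly needed for the dimension count; in exchange, your version gets strictness of the lifted chain for free from incomparability rather than from the orbit decomposition. The one step in your write-up that deserves care --- descending module-finiteness of $\C[P_{M'}]$ over $\C[P_M]$ to the degree-zero parts of the localizations on the standard charts of $\Proj$ --- is standard, and your observation that the module generators live in bounded degree is exactly what makes it work.
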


\begin{proof}
 If $I\subset \C[P_M]$ is a homogeneous
  ideal defining a closed subscheme $Y$ of $\Proj(\C[P_M])$, then
  $I\otimes_{\C[P_M]}\C[P_{M'}]$ defines the scheme-theoretic inverse image
  $\pi^{-1}(Y)$~\cite[Def. after Example
7.12.1]{hartshorne}. The dimension of $Y$ is the largest integer $d$ such that
  there exists a chain of inclusions of saturated prime homogeneous ideals $I\subset\mathfrak
  p_0\subsetneq\cdots\subsetneq \mathfrak p_d\subset \C[P_M]$.

  Let $\mathfrak p\subset \C[P_M]$ be a prime ideal, and $\sqrt{\mathfrak p\otimes
  \C[P_M']} = \cap_{i\in I} \mathfrak q_i$ be a minimal prime decomposition of
  the radical. Since $\mathfrak p\otimes
  \C[P_M']$ is invariant under the $G$-action, the minimal primes are globally
  invariant under the $G$-action. Using an arbitrary numbering of the orbits,
  we write
  the orbit decomposition $\sqrt{\mathfrak p\otimes
  \C[P_M']} = \cap_{1\leq i\leq \ell} \mathfrak o_i$, where $\mathfrak o_i\subset \C[P_{M'}]^G$ is the
  intersection of the prime ideals in the $i$-th orbit. If $\ell\geq 2$, then
  pick $x\in \mathfrak o_1$, $y\in \mathfrak o_2\cap \cdots\cap \mathfrak
  o_\ell$, such that $y\notin\mathfrak o_1$ and $x\notin \mathfrak o_2\cap \cdots\cap \mathfrak
  o_\ell$. Then $\prod_{g\in G} g\cdot (xy)^k\in \mathfrak p\otimes\C[P_{M'}]^G$
  for some $k>0$ but $\prod_{g\in G} g\cdot x^k\in \C[P_{M'}]^G,\prod_{g\in G} g\cdot y^k\in \C[P_{M'}]^G$,
  $\prod_{g\in G} g\cdot x^k,\prod_{g\in G} g\cdot y^k\notin\mathfrak
  p\otimes\C[P_{M'}]$, which contradicts the primality of $\mathfrak p$.
  Therefore $\ell=1$, which means that the set of minimal prime ideals in $\C[P_{M'}]$
  containing $\mathfrak p\otimes\C[P_{M'}]$ is a $G$-orbit.

  To conclude the proof, we must check that for any ideal
  $I\subset\C[P_M]$, $\dim(I\otimes \C[P_M'])=\dim(I)$. To
  this end, we consider an inclusion $\mathfrak p_1\subsetneq\mathfrak p_2$ of
  prime ideals in $\C[P_M]$. Then $\sqrt{\mathfrak p_1\otimes\C[P_{M'}]}\subsetneq
  \sqrt{\mathfrak p_2\otimes\C[P_{M'}]}$. Let $\mathfrak q_1$ be a minimal prime of $\mathfrak p_1\otimes
  \C[P_{M'}]$. Then it must contain a minimal prime $\mathfrak q_2$ of $\mathfrak p_2\otimes
  \C[P_{M'}]$. By the argument above, these minimal primes are
  $G$-orbits, which must be distinct; therefore $\mathfrak q_1\neq \mathfrak
  q_2$. This shows that for any chain $I\subset \mathfrak p_1\subsetneq \mathfrak
  p_2\subset \C[P_M]$, there exists a chain of prime ideals $(I\otimes\C[P_{M'}])\subset \mathfrak q_1\subsetneq \mathfrak
  q_2\subset \C[P_{M'}]$. This argument can be generalized to larger chains of
  prime ideals, which implies that the
  dimension of the closed subscheme of $\Proj(\C[P_M])$ defined by $I$ equals
  the dimension of its scheme-theoretic inverse image by $\pi$.
\end{proof}

Observe that the aforementioned map $\C[P_M]\hookrightarrow\C[P_{M'}]$
sends monomials to monomials. This shows that extremal-generic
homogeneous systems in $\C[P_M]$ can be thought of as extremal-generic
systems in $\C[P_{M'}]$.  Hence, we obtain the following statement.

\begin{theorem}\label{thm:polytopalcomplinter}
  Let $d_1,\ldots, d_r\in\N$ be degrees, and $A_1,\ldots, A_r$ be
  monomial subsets in $\C[P_M]_{d_1},\ldots, \C[P_M]_{d_r}$ i.e.
  $A_i\subset (d_i\cdot P)\cap M$.
  Let $f_1,\ldots, f_r$ be a homogeneous extremal-generic system of respective degrees
  $d_1,\ldots, d_r$ in $\C[P_M]$ with support $(A_1,\ldots, A_r)$.
  For any $F\in\Faces(P)\cup \{P\}$, set $E_F = \{i\in\conSet{1, r} \mid A_i\cap
  (d_i\cdot F)\ne \emptyset\}$. We say
  that $E_F$ is essential if the set $\{(A_i\cap (d_i\cdot F)) \mid i \in E_F\}$ is essential. Then the
  dimension of the closed subscheme of $\Proj(\C[P_M])$ defined by the ideal
  $\ideal{f_1,\ldots, f_r}$ is 
  \[\max_{\substack{F\in\Faces(P)\cup\{P\}\\\text{s.t. }E_F\text{ is
          essential}}}(\dim(F)-\card{E_F}).\]
If none of the subsets $E_F$ is essential, then the subscheme is empty.
\end{theorem}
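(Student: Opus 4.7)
The plan is to reduce the statement to a direct application of Theorem~\ref{thm:dimensionGenericToricSystem}, via the identification of $\Proj(\C[P_M])$ with the toric variety associated to the normal fan of $P$, together with the standard bijection between faces of $P$ and cones of its normal fan.

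First I would invoke Proposition~\ref{prop:changelattice} to enlarge the lattice $M$ to some $M' \supseteq M$ of finite index so that $P$ becomes a normal lattice polytope, observing that this enlargement preserves the dimension of the subscheme under study and that monomial supports (hence the sets $E_F$ and their essentiality) are unaffected. Having made this reduction, $\C[P_M]$ is the homogeneous coordinate ring of the projective toric variety $X_P$ associated to the normal fan $\Sigma$ of $P$, and the ample $T$-divisor class $\alpha = [D_P] \in \Cl(X_P)$ satisfies $S_{d\alpha} \cong \C[P_M]_d$. Under the bijection of Proposition~\ref{prop:bij_pol_Sa} applied to the divisor $d_i D_P$, the monomials in $\C[P_M]_{d_i}$ correspond (up to a fixed lattice translation) to the lattice points of $d_i \cdot P$, exactly matching the identification $A_i \subset (d_i \cdot P) \cap M$ given in the statement.

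Next I would use the bijection between cones of $\Sigma$ and faces of $P$: each face $F$ corresponds to its normal cone $\sigma_F$, with $\dim(\sigma_F) = n - \dim(F)$. Unwinding Equation~\eqref{eq:defPolytopeDivisor} for $D = d_i D_P$ shows that the polytope $P_{D,\sigma_F}$ is the face of $P_{D,\{0\}}$ dual to $\sigma_F$, so that $A_{d_i D_P,\sigma_F}$ corresponds (up to lattice translation) to the lattice points of $d_i \cdot F$. Combined with Proposition~\ref{prop:map_compatible}, this identifies $A_i^{\sigma_F}$ with a lattice translate of $A_i \cap (d_i \cdot F)$ inside $\sigma_F^\perp \cap M$. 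In particular the two emptiness conditions agree, so $E_{\sigma_F} = E_F$; and since essentiality (Definition~\ref{def:essentialFam}) only depends on the dimensions of affine spans of Minkowski sums and on cardinalities, which are invariant under translation, the essentiality of $\{A_i^{\sigma_F} : i \in E_{\sigma_F}\}$ coincides with that of $\{A_i \cap (d_i \cdot F) : i \in E_F\}$.

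Finally, applying Theorem~\ref{thm:dimensionGenericToricSystem} and substituting $n - \dim(\sigma_F) = \dim(F)$ yields
$$
\dim(Y) \;=\; \max_{\substack{\sigma_F \in \Sigma \\ E_{\sigma_F}\text{ essential}}} \bigl(n - \dim(\sigma_F) - \card{E_{\sigma_F}}\bigr) \;=\; \max_{\substack{F \in \Faces(P)\cup\{P\} \\ E_F\text{ essential}}} \bigl(\dim(F) - \card{E_F}\bigr),
$$
and the case when no $E_F$ is essential (so $Y$ is empty) is likewise transferred from Theorem~\ref{thm:dimensionGenericToricSystem}. The main technical obstacle I anticipate is carefully tracking the translation ambiguity throughout: the identifications $S_{d_i\alpha} \cong \C[P_M]_{d_i}$ and $A_i^{\sigma_F} \leftrightarrow A_i \cap (d_i \cdot F)$ both depend on a choice of $T$-divisor representative for each $d_i \alpha$, but since essentiality and cardinality are translation-invariant, this ambiguity ultimately causes no trouble.
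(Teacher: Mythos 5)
Your proposal is correct and follows essentially the same route as the paper: reduce to a normal lattice polytope via Proposition~\ref{prop:changelattice}, identify $\C[P_{M'}]$ with the Veronese-type subring $\bigoplus_d S_{d\alpha}$ of the Cox ring of the normal-fan toric variety, and then apply Theorem~\ref{thm:dimensionGenericToricSystem} using the face--cone correspondence $\dim(\sigma_F) = n - \dim(F)$. You are in fact somewhat more explicit than the paper about matching $A_i^{\sigma_F}$ with a translate of $A_i \cap (d_i \cdot F)$ and about the translation-invariance of essentiality, which the paper leaves implicit in the phrase ``under this isomorphism.''
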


\begin{proof}
  By \Cref{prop:changelattice}, the dimension of the closed subscheme defined
  by $\langle f_1,\ldots, f_r\rangle$ is the same as the dimension of the
  closed subscheme of $\Proj(P_{M'})$ defined by $\langle f'_1, \ldots,
  f'_r\rangle$, where $f'_i$ is the image of $f_i$ by the canonical map
  $\C[P_M]\hookrightarrow\C[P_{M'}]$.
  Then we notice that this map sends
  monomials to monomials, and therefore $f'_1,\ldots, f'_r$ is an
  extremal-generic system. Next, let $\Sigma$ be the normal fan of $P$, and let
  $X_{\Sigma, M'}$ be the associated toric variety. We notice that since $P_{M'}$
  has vertices in $M'$, it equals $P_{D, \{0\}}$ for some nonzero Cartier divisor
  $D\in\Div(X_{\Sigma, M'})$~\cite[Eq.~4.2.7, Prop.~4.2.10]{cox2011toric}. Then there is a canonical
  embedding $\C[P_{M'}]\hookrightarrow S$ which sends $\C[P_M']_d$ bijectively
  on $S_{d\,\alpha}$, where $S$ is the Cox ring of
  the toric variety $X_{\Sigma, M'}$ and $\alpha$ is the class of $D$
  \cite[Thm.~5.4.8.(c).(3)]{cox2011toric}.
  This map provides us with an isomorphism $\Proj(\C[P_{M'}])\simeq X_{\Sigma, M'}$ of
  schemes over $\C$ \cite[Prop.~3.1.6]{cox2011toric}. Under this isomorphism, we apply \Cref{thm:dimensionGenericToricSystem}
  to the system $f'_1,\ldots, f'_r$ in $S$, which yields the desired
  result.
\end{proof}

\subsection{Regular sequences in $\C[P_M]$}

Now that we characterized the expected codimension of a system defined
by extremal-generic polynomials, we will identify the degrees at which regular
sequences occur in polytopal algebras; here the main extra ingredient is the fact that
polytopal algebras are Cohen-Macaulay by Hochster's theorem~\cite[Thm.~1]{hochster1972rings}.

  \begin{theorem} \label{thm:regSeqInPolytopalAlg}
    Let $r \leq n$, $d_1,\ldots, d_r\in\N, A_1,\ldots, A_r, E_F$ be as in
    \Cref{thm:polytopalcomplinter}, and $f_1, \ldots, f_r$ be an
    extremal-generic system with support $A_1,\ldots, A_r$. Let $Y$ be
    the closed subscheme of $\Proj(\C[P_M])$ defined by the ideal
    $\langle f_1,\ldots, f_r\rangle$. Then:
  \begin{enumerate}
  \item The subscheme $Y$ is not empty.
    \item The sequence $f_1,\ldots, f_r$ is regular in $\C[P_M]$ if
      and only if for any
      $F\in\Faces(P)\cup\{P\}$,
      $\card{E_F}\geq
      \dim(F)+r-n$.
  \end{enumerate}
  \end{theorem}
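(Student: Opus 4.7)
The plan is to combine the Cohen--Macaulay property of the polytopal algebra $\C[P_M]$ (Hochster's theorem, \cite[Thm.~1]{hochster1972rings}) with the dimension formula of \Cref{thm:polytopalcomplinter}. Since $\C[P_M]$ has Krull dimension $n+1$ and is Cohen--Macaulay, a homogeneous sequence $f_1,\ldots,f_r$ is regular if and only if $\ideal{f_1,\ldots,f_r}$ has height $r$, equivalently $\dim Y = n-r$. So after establishing non-emptiness, the theorem reduces to characterizing $\dim Y = n-r$ in terms of the $E_F$'s.

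For part (1), I would note that the class $\alpha\in\Cl(\Proj\C[P_M])$ corresponding to degree-$1$ elements of $\C[P_M]$ is $\Q$-ample, because $P$ is full-dimensional and rational; hence each $d_i\alpha$ is $\Q$-ample as well. Iterating \Cref{lem:intQample} $r$ times, starting from $\Proj\C[P_M]$ (dimension $n\geq r$), the intermediate intersections stay non-empty and their dimensions drop by at most one at each step, yielding simultaneously $Y\neq\emptyset$ and the lower bound $\dim Y \geq n-r$.

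For the $(\Leftarrow)$ direction of part (2), \Cref{thm:polytopalcomplinter} expresses $\dim Y$ as $\max_{F\colon E_F\text{ essential}}(\dim F - |E_F|)$. The hypothesis $|E_F|\geq \dim F + r - n$ forces every term of this maximum to be $\leq n-r$, so $\dim Y \leq n-r$; combined with the lower bound from part (1), $\dim Y = n-r$, so $f_1,\ldots,f_r$ is regular.

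The hard part will be the $(\Rightarrow)$ direction, because \Cref{thm:polytopalcomplinter} only provides information at faces for which $E_F$ is essential, whereas the required inequality must hold at \emph{every} face. To deal with a face $F$ for which $E_F$ is not essential, I would work inside the orbit closure $V(\sigma_F)\subseteq \Proj\C[P_M]$ of the cone $\sigma_F$ dual to $F$, a complete projective toric variety of dimension $\dim F$. Using \Cref{lem:disting_cases} together with the observation that $E_\tau \subseteq E_F$ whenever $\tau \supseteq \sigma_F$ (because the corresponding faces satisfy $F_\tau\subseteq F$, so $A_i^\tau\subseteq A_i^{\sigma_F}$), I would show that every $f_i$ with $i\notin E_F$ vanishes on all torus orbits contained in $V(\sigma_F)$, hence on $V(\sigma_F)$ itself. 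Therefore $V(\sigma_F)\cap Y$ is cut out inside $V(\sigma_F)$ by the $|E_F|$ polynomials $\{f_i\}_{i\in E_F}$, whose degrees restrict to $\Q$-ample classes on $V(\sigma_F)$ (restriction of ample is ample along closed immersions into projective varieties). Applying \Cref{lem:intQample} iteratively inside $V(\sigma_F)$ then yields $V(\sigma_F)\cap Y\neq\emptyset$ and $\dim(V(\sigma_F)\cap Y)\geq \dim F - |E_F|$. Since $V(\sigma_F)\cap Y \subseteq Y$ and $\dim Y = n-r$ by regularity, we conclude $|E_F|\geq \dim F + r - n$, uniformly in $F$ and without any essentiality assumption on $E_F$.
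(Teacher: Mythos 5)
Your proof is correct, and for the hard direction it takes a genuinely different route from the paper. The first three steps (Cohen--Macaulayness reducing regularity to $\dim Y = n-r$, non-emptiness and the lower bound $\dim Y\geq n-r$ via iterated applications of \Cref{lem:intQample}, and the ``if'' direction via the dimension formula of \Cref{thm:polytopalcomplinter}) coincide with the paper's argument. For the ``only if'' direction, however, the paper stays entirely combinatorial: it proves \Cref{lemma:combinatorial_criterion}, which shows that the condition ``for every face, either $E_F$ is not essential or $\card{E_F}\geq\dim(F)+r-n$'' already implies the inequality at \emph{every} face, by taking a minimal violating face and invoking the mixed-volume criterion of \Cref{lemma:polytope_essential} (which rests on \cite[Cor.~3.7]{bihan2019criteria}) to show that $E_{F'}$ there must be essential. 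You instead argue geometrically: you intersect $Y$ with the orbit closure $V(\sigma_F)$ and iterate $\Q$-ample hyperplane sections inside it to force $\dim(Y\cap V(\sigma_F))\geq\dim F-\card{E_F}$, then compare with $\dim Y=n-r$. Your argument is sound --- note that the $f_i$ with $i\notin E_F$ do vanish on $V(\sigma_F)$ because $A_i^\tau\subseteq A_i^{\sigma_F}=\emptyset$ for all $\tau\supseteq\sigma_F$, that the case $\card{E_F}>\dim F$ is vacuous since $r\leq n$, and that you do not actually need ``restriction of ample is ample'': \Cref{lem:intQample} is already stated for an arbitrary closed subscheme $Y$ of the ambient projective toric variety, so you can apply it directly with $Y=V(\sigma_F)$ and its successive intersections, staying in the ambient embedding. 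The trade-off is that your route reuses the geometric machinery of Section~\ref{sec:completeinter} and avoids the external mixed-volume input, whereas the paper's route isolates a purely combinatorial statement about point configurations in polytopes (\Cref{lemma:combinatorial_criterion}) that is of independent interest and is also what makes the weighted-homogeneous criterion of \Cref{thm:weighthomregseq} a purely arithmetic condition.
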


Specializing the theorem by using the full monomial supports for $A_1,\ldots,
A_r$ leads to the following statement:

  \begin{corollary}
    Let $r \leq n$, $d_1,\ldots, d_r\in\N$ be positive integers, and $f_1, \ldots, f_r$ be
    a system of generic homogeneous polynomials in $\C[P_M]$ of respective
    degrees $d_1,\ldots, d_r$. The sequence
    $f_1,\ldots, f_r$ is regular if and only if for every
    $F\in\Faces(P)\cup\{P\}$,
    $$\card{\{i\in\conSet{1, r} \mid (d_i\cdot F) \cap M \ne
      \emptyset\}}\geq \dim(F)+r-n.$$
  \end{corollary}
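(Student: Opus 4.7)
The approach combines Theorem~\ref{thm:polytopalcomplinter} (the dimension formula for extremal-generic systems in $\C[P_M]$) with Hochster's theorem~\cite{hochster1972rings}, which asserts that the polytopal algebra $\C[P_M]$ is Cohen-Macaulay of Krull dimension $n+1$. In this setting, a homogeneous sequence $f_1,\dots,f_r$ of positive-degree elements is regular if and only if $\dim \C[P_M]/\ideal{f_1,\dots,f_r} = n+1-r$, or equivalently $\dim Y = n-r$. Moreover, in a Cohen-Macaulay ring every subset of a regular sequence of positive-degree homogeneous elements is again a regular sequence; this permutation-invariance is the key leverage for part~(2).

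For part~(1), the plan is to view each $f_i$ as a section of the $\Q$-ample line bundle on $\Proj(\C[P_M])$ associated to the class $d_i[P]$ and apply Lemma~\ref{lem:intQample} inductively. Starting from $\Proj(\C[P_M])$ of dimension $n$ and intersecting with the $r\leq n$ hypersurfaces $\{f_i=0\}$, the lemma ensures that each intersection decreases the dimension by at most one, so $Y$ remains nonempty with $\dim Y\geq n-r$.

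For the ``if'' direction of part~(2), assume $\card{E_F}\geq \dim F+r-n$ for every $F$. Theorem~\ref{thm:polytopalcomplinter} then gives
\[
\dim Y = \max_{\substack{F\in\Faces(P)\cup\{P\}\\E_F\text{ is essential}}}(\dim F - \card{E_F}) \leq n-r.
\]
Combined with $\dim Y\geq n-r$ from part~(1), this forces $\dim Y = n-r$, and the Cohen-Macaulay characterization recalled above then makes $f_1,\dots,f_r$ into a regular sequence.

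For the ``only if'' direction, suppose $f_1,\dots,f_r$ is regular and fix an arbitrary face $F$. I set $I:=\conSet{1,r}\setminus E_F$, so that $\{f_i\}_{i\in I}$ is again a regular sequence by the subset property in Cohen-Macaulay rings, giving $\dim Y_I = n - \card{I} = n - r + \card{E_F}$. On the other hand, by construction no $f_i$ with $i\in I$ has a monomial with exponent in $d_i F$, so the family associated to the subsystem at $F$ is empty and thus vacuously essential; Theorem~\ref{thm:polytopalcomplinter} applied to $Y_I$ then yields $\dim Y_I \geq \dim F$. Comparing the two expressions for $\dim Y_I$ gives $\card{E_F}\geq \dim F + r - n$, as required. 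The main point where care is needed is this degenerate essentiality step: one must check that when all polynomials of the subsystem vanish on $O(F)$, the dimension formula still applies with the empty (vacuously essential) family, reproducing the expected bound $\dim Y_I\geq \dim F$ coming from $Y_I\supseteq\overline{O(F)}$.
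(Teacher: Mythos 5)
Your proof is correct, but it takes a different route from the paper. The paper's own proof of this corollary is a one-liner: a generic system with full supports $A_i = (d_i\cdot P)\cap M$ is in particular extremal-generic, so the statement is just \Cref{thm:regSeqInPolytopalAlg} specialized to $E_F = \{i : (d_i\cdot F)\cap M \neq\emptyset\}$. You instead re-derive that theorem in the full-support case, and the interesting divergence is in the ``only if'' direction: the paper gets from ``for every $F$, either $E_F$ is not essential or $\card{E_F}\geq\dim(F)+r-n$'' (which is what \Cref{thm:polytopalcomplinter} plus Cohen--Macaulayness directly gives) to the unconditional inequality via the purely combinatorial \Cref{lemma:combinatorial_criterion}, whose engine is the mixed-volume criterion of \cite{bihan2019criteria} (\Cref{lemma:polytope_essential}). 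You bypass that combinatorics entirely by using the algebraic fact that in the graded Cohen--Macaulay ring $\C[P_M]$ any subset of a homogeneous regular sequence is again regular, applying it to $I=\conSet{1,r}\setminus E_F$ and comparing $\dim(Y_I)=n-\card{I}$ with the lower bound $\dim(Y_I)\geq\dim(F)$ coming from the vacuously essential empty family at $F$ (equivalently, from $\overline{O(\sigma_F)}\subseteq Y_I$). This is precisely the alternative algebraic proof the authors sketch in the remark following \Cref{thm:regSeqInPolytopalAlg}; it is shorter and avoids the polytope combinatorics, at the cost of leaning on the Cohen--Macaulay permutation/subset property, whereas the paper's combinatorial lemma is a statement about supports alone and is reused independently of any ring-theoretic hypothesis. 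All the individual steps you invoke (Q-ampleness and \Cref{lem:intQample} for nonemptiness, the dimension formula of \Cref{thm:polytopalcomplinter} for generic full-support systems, the essentiality of the empty family) check out against the paper's conventions.
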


  \begin{proof}
    As the system is generic, it is in particular an extremal-generic
    system with supports $A_1,\ldots, A_r$, where
    $A_i = d_i \, P \cap M$.
  \end{proof}

  We postpone the proof of \Cref{thm:regSeqInPolytopalAlg} to the end of this section, after we
  establish two useful combinatorial lemmas.
  The following statement is a direct consequence of
  \cite[Cor.~3.7]{bihan2019criteria}:

  \begin{lemma}\label{lemma:polytope_essential}
  Let $P\in M_\R$ be a full dimensional polytope, and $A_1,\ldots, A_r$ be finite subsets of $P$,
  for $r\leq n$. Assume that for every $F\in\Faces(P)$,
  $\card{E_F}>
  \dim(F)+r-n$, where $E_F := \{i\in\conSet{1, r} \mid A_i\cap F\ne \emptyset\}$. Then the
  family $A_1,\ldots, A_r$ is essential.
\end{lemma}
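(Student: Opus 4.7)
The plan is to deduce the lemma via the contrapositive from the characterization of essentiality in \cite[Cor.~3.7]{bihan2019criteria}. Specifically, I would show that if $(A_1, \ldots, A_r)$ is not essential, then there is a face $F$ of $P$ witnessing $\card{E_F} \leq \dim(F) + r - n$, contradicting the hypothesis.

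First, I would extract a minimal obstructing subset $E \subseteq \conSet{1, r}$ with $d := \dim_\R(\sum_{i \in E} W_i) < |E| =: k$, where $W_i$ is the linear direction of $\AffineSpan_\R(A_i)$. By minimality of $E$, $d = k - 1$ and $W_i \subseteq V := \sum_{i \in E} W_i$ for each $i \in E$. Consequently, each $A_i$ for $i \in E$ lies in an affine translate of $V$, so under the quotient projection $\pi: M_\R \to M_\R/V$ it collapses to a single point $\bar a_i$ in the image polytope $\bar P := \pi(P)$, which has dimension $n - k + 1$. Let $\bar S := \{\bar a_i : i \in E\}$, a finite set of at most $k$ points in $\bar P$.

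The key task is then to find a face $F$ of $P$ with $\#\{i : A_i \cap F = \emptyset\} \geq \codim(F)$. A natural family of candidates are the fibered faces $F := P \cap \pi^{-1}(\bar G)$ indexed by faces $\bar G$ of $\bar P$: for such an $F$, an index $i \in E$ contributes to the count precisely when $\bar a_i \notin \bar G$, and $\dim F = \dim \bar G + f_{\bar G}$ where $f_{\bar G} \leq d$ is the generic fiber dimension of $\pi|_P$ over $\bar G$. The face condition to verify thus becomes $\dim \bar G + f_{\bar G} + k - |\bar G \cap \bar S| \geq n$. By choosing a linear functional on $M_\R/V$ whose maximum on $\bar P$ strictly exceeds all values it takes on $\bar S$, one produces a face $\bar G$ disjoint from $\bar S$; carefully trading the dimension of $\bar G$ against the fiber dimension of $\pi|_P$ yields the desired face $F$. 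In the most favorable case $\bar G$ is a facet of $\bar P$ with full-dimensional fiber, so that $F$ becomes a facet of $P$ missed by all the $A_i$ with $i\in E$.

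The central difficulty is precisely this trade-off: when $\bar G$ is taken to be a vertex of $\bar P$, the lift $F$ is a vertex-fiber face of dimension at most $d = k - 1$, which is strictly less than $n - k$ as soon as $k < (n+1)/2$. Overcoming this requires either finding a higher-dimensional face $\bar G$ of $\bar P$ disjoint from $\bar S$ with the right fiber behaviour, or accepting $\bar G \cap \bar S \neq \emptyset$ and compensating with additional indices outside $E$ whose $A_i$ miss $F$. This simultaneous balancing of $\dim \bar G$, $f_{\bar G}$ and $|\bar G \cap \bar S|$, which must be carried out uniformly across all polytopes $\bar P$ and all point configurations $\bar S$, is exactly the combinatorial content of \cite[Cor.~3.7]{bihan2019criteria} and is the main obstacle; the rest of the argument is a straightforward translation between the projection picture and the face lattice of $P$.
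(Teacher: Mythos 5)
Your proposal does not close the central step, and you say so yourself: the ``simultaneous balancing of $\dim \bar G$, $f_{\bar G}$ and $\lvert \bar G\cap\bar S\rvert$'' that would produce a face $F$ with $\#\{i : A_i\cap F=\emptyset\}\geq\codim(F)$ is left as ``the main obstacle''. That step carries all the content of the lemma, and it is genuinely delicate: already for $k=2$ one can place $\bar a_1,\bar a_2$ in a simplex $\bar P$ so that every facet of $\bar P$ meets $\bar S$, and the fibered faces $P\cap\pi^{-1}(\bar G)$ over lower-dimensional faces $\bar G$ may have fiber dimension $0$, so that $\codim(F)$ exceeds the at most $k$ indices of $E$ available to pay for it. Your candidate family of faces is therefore not obviously sufficient, and you give no mechanism for recruiting indices outside $E$. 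As written, the argument is incomplete.

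The gap comes from running the cited result backwards. \cite[Cor.~3.7]{bihan2019criteria} is a mixed-volume statement: for polytopes contained in $P$ it characterizes, by a face-counting condition, when their mixed volume equals the normalized volume of $P$. The paper applies it in the forward, easy-to-use direction: complete the family to $\Conv(A_1),\dots,\Conv(A_r),P,\dots,P$ with $n-r$ copies of $P$; the hypothesis $\card{E_F}>\dim(F)+r-n$ is exactly the corollary's condition for this completed family (the copies of $P$ meet every face), so its mixed volume equals $\vol(P)>0$; positivity of the mixed volume is equivalent to the dimension condition on all subfamilies \cite[Thm.~5.1.8]{schneider2014convex}, and restricting to the subfamily $\Conv(A_1),\dots,\Conv(A_r)$ gives essentiality of $A_1,\dots,A_r$ because $\AffineSpan(\Conv(A_i))=\AffineSpan(A_i)$. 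If you insist on the contrapositive/projection route, you must reprove the combinatorial content of the corollary from scratch, which is precisely the part you have not supplied.
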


\begin{proof}
  Consider the family 
  $\{\Conv(A_1),\ldots, \Conv(A_r), P, \ldots, P\}$ where $P$ occurs
  $\dim(M)-r = n - r$ times. By~\cite[Cor.~3.7]{bihan2019criteria}, the mixed
  volume of this family equals the normalized volume of $P$, which
  implies that the family is essential
  \cite[Thm.~5.1.8]{schneider2014convex}. Any subfamily of an
  essential family is essential, so
  $\{\Conv(A_1),\ldots, \Conv(A_r)\}$ is essential.  Finally, we
  notice that $\AffineSpan(\Conv(A_i))=\AffineSpan(A_i)$ so the family
  $A_1,\ldots, A_r$ is essential.
\end{proof}

\begin{lemma}\label{lemma:combinatorial_criterion}
  Let $r\leq n$ and $d_1,\ldots, d_r\in\N, A_1,\ldots, A_r, E_F$ be as
  in \Cref{thm:polytopalcomplinter}. If for all
  $F\in\Faces(P)\cup \{P\}$ either $E_F$ is not essential or
  $\card{E_F}\geq \dim(F)+r-n$, then
  for all $F\in\Faces(P)\cup \{P\}$,
  $\card{E_F}\geq \dim(F)+r-n$.
\end{lemma}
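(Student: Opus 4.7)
The plan is to prove the statement by induction on $\dim(F)$, with the empty face (and vertices) as the trivial base case, using \Cref{lemma:polytope_essential} applied to proper face polytopes as the main combinatorial tool. The key observation is that when $E_F$ is not essential, \Cref{lemma:polytope_essential} (in contrapositive form) yields a proper face of $F$ where the required inequality sharpens, and we can close the loop by induction.

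For the base case, if $\dim(F) \leq 0$, then $\dim(F) + r - n \leq r - n \leq 0 \leq \card{E_F}$ since $r \leq n$, and the inequality holds trivially.

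For the inductive step, suppose $\dim(F) = k \geq 1$ and that the conclusion is already known for all faces of smaller dimension. If $E_F$ is essential, the hypothesis of the lemma immediately gives $\card{E_F} \geq \dim(F) + r - n$, and we are done. So assume $E_F$ is not essential. First, if $\card{E_F} > k$, then $\card{E_F} \geq k + 1 > k + r - n$ since $r \leq n$. Otherwise $\card{E_F} \leq k$, and I would like to apply \Cref{lemma:polytope_essential} inside the face $F$. To do this, I rescale: define $\hat A_i := (A_i \cap (d_i F))/d_i \subset F$ for $i \in E_F$, regarded as a family of subsets of the full-dimensional polytope $F$ inside its $k$-dimensional affine hull. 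A direct computation of affine spans of Minkowski sums shows that scaling each $A_i \cap (d_i F)$ by $1/d_i$ does not alter the dimensions of $\mathrm{AffineSpan}_\R\left(\sum_{i \in E'} \hat A_i \right)$ for $E' \subseteq E_F$, so the family $\{\hat A_i\}_{i\in E_F}$ is also not essential.

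Now \Cref{lemma:polytope_essential}, applied to the polytope $F$ and to the $\card{E_F}$ subsets $\{\hat A_i\}_{i\in E_F}$ (noting $\card{E_F} \leq k = \dim(F)$ so the hypothesis $r' \leq n'$ of that lemma is met), gives by contraposition a proper face $F' \subsetneq F$ with
$$\card{\hat E_{F'}} \leq \dim(F') + \card{E_F} - \dim(F),$$
where $\hat E_{F'} = \{i \in E_F : \hat A_i \cap F' \neq \emptyset\}$. Observing that $\hat A_i \cap F' \neq \emptyset$ if and only if $A_i \cap (d_i F') \neq \emptyset$, we have $\hat E_{F'} = E_{F'}$. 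Since $F'$ is a proper face of $F$, it is also a face of $P$ of dimension strictly less than $k$, so the induction hypothesis applies: $\card{E_{F'}} \geq \dim(F') + r - n$. Combining the two inequalities yields $\card{E_F} \geq \dim(F) + r - n$. The case $F = P$ is handled by the same argument applied to the polytope $P$ itself.

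The main obstacle is the rescaling step: \Cref{lemma:polytope_essential} is stated for a single ambient polytope, whereas here the sets $A_i \cap (d_i F)$ live naturally in the different dilations $d_i F$. The scaling trick resolves this cleanly since essentiality depends only on the affine spans of Minkowski sums, which are invariant under positive dilations of the individual summands.
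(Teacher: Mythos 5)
Your proof is correct and follows essentially the same route as the paper's: both hinge on \Cref{lemma:polytope_essential} applied within a face of $P$, yours via induction on the dimension of the face and the paper's via a minimal violating face, which are two phrasings of the same extremality argument over the face lattice. Your explicit rescaling of the sets $A_i \cap (d_i\cdot F)$ by $1/d_i$ before invoking \Cref{lemma:polytope_essential} addresses a detail the paper leaves implicit, but does not change the substance of the argument.
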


\begin{proof}
We prove the statement by contraposition. Let $F\in \Faces(P)\cup \{P\}$ be
  such that $\card{E_F}< \dim(F)+r-n$. Our goal is to prove that there exists $F'\in
  \Faces(P)\cup \{P\}$ such that $\card{E_{F'}}< \dim(F')+r-n$ and $E_{F'}$ is essential.
Let $F'\in\Faces(F)\cap\{F\}$ be a face which is minimal among those
satisfying $\card{E_{F'}}< \dim(F')+r-n$. By assumption, such a face
exists. We also have that $\card{E_{F'}} - \dim(F') < r-n$.
Therefore, by minimality, for all $\widetilde F\in\Faces(F')$,
$\card{E_{\widetilde F}}\geq \dim(\widetilde F)+r-n$, and so
$\card{E_{\widetilde F}} > \dim(\widetilde F) + \card{E_{F'}} -
\dim(F')$.
Let $\AffineSpan_\R(F')$ be the smallest affine space in $M_\R$ containing $F'$.
Hence, $\dim(\AffineSpan_\R(F')) = \dim(F')$ and, as $r \leq n$,
$\card{E_{F'}} < \dim(F') = \dim(\AffineSpan_\R(F'))$.
Therefore, by \Cref{lemma:polytope_essential}, 
$E_{F'}$ is essential.
\end{proof}

\begin{proof}[Proof of \Cref{thm:regSeqInPolytopalAlg}]
  As $Y$ is defined by the intersection of $r \leq n$
  $\Q$-ample divisors, we observe that $Y$ cannot be empty and $\dim(Y)
  \geq n - r$; see the proof of \Cref{thm:Qample_subsets}.
    Since $\C[P_{M'}]$ is Cohen-Macaulay
    \cite[Thm.~1]{hochster1972rings}, by localizing at the unique homogeneous maximal
    ideal $\oplus_{i>0}\C[P_{M'}]$ and using
    \cite[Prop.~1.5.15 and Thm.~2.1.2.(c)]{bruns_cohen-macaulay_1998}, $f_1,\ldots,f_r$
    is regular if and only if $Y$ has codimension $r$.
    Hence, using the isomorphism in \cite[Thm.~7.1.13]{cox2011toric} and \Cref{thm:dimensionGenericToricSystem}, this happens if
    and only if for every face $F \in \Faces(P)\cup\{P\}$, either the
    family
    $E_F$ is not essential, or
    $\card{E_F}\geq \dim(F)+r-n$.
    By \Cref{lemma:combinatorial_criterion}, this last condition is equivalent to the fact that 
    for every face $F \in \Faces(P)\cup\{P\}$,
    $\card{E_F}\geq \dim(F)+r-n$. 
  \end{proof}

  \begin{remark*}
    We would like to point out that \Cref{thm:regSeqInPolytopalAlg}
    could also be proved algebraically along the following lines: over
    a $\N$-graded Noetherian commutative ring $R$, a homogeneous
    sequence $f_1,\ldots, f_r$ is regular if and only if for all
    $I\subseteq \conSet{1, r}$ the closed subscheme of $\Proj(R)$
    associated to the ideal $\langle f_i\rangle_{i\in I}$ has
    codimension $\card{I}$.
    This last condition can be characterised using
    \Cref{lem:genericallyGoodSubsys}, as we are dealing with
    $\Q$-ample divisors and so the aforementioned subschemes
    are not empty.
  \end{remark*}

  \subsection{Weighted homogeneous systems}

  In this subsection we study the consequences of
  \Cref{thm:regSeqInPolytopalAlg} in the context of classical
  homogeneous systems and weighted homogeneous ones. In this section, we will
  work with an affine lattice $M$, as this does not cause any problem in the
  context of polytopal algebras. We start our
  discussion with the former case.

\begin{corollary}
  Let $r\leq n$, $d_1,\ldots, d_r>0$ be integers, and $A_1,\ldots,
  A_r\subset\C[X_0,\ldots, X_n]$ be
  subsets of monomials of respective degrees $d_1,\ldots, d_r$, i.e.
  $A_i\subset \C[X_0,\ldots,X_n]_{d_i}$. Then an extremal-generic homogeneous
  system $f_1,\ldots,f_r\in\C[X_0,\ldots,X_n]$ with respective monomial
  supports $A_1,\ldots, A_r$ is regular if and only if for every subset
  $I\subset\conSet{0,n}$, 
  \[\card{\{j\in\conSet{1,r} : A_j\cap \C[X_i : i\in
  I]\ne\emptyset \}}\geq \card{I}-1 + r-n.\]
\end{corollary}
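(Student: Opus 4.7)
The plan is to derive this corollary as a direct specialization of Theorem~\ref{thm:regSeqInPolytopalAlg} to the case where $P$ is the standard simplex. Specifically, I would take $M = \Z^n$ and $P = \Delta_n := \Conv(0, e_1, \ldots, e_n)$. The polytopal algebra $\C[P_M]$ is then canonically isomorphic, as an $\N$-graded ring, to $\C[X_0, X_1, \ldots, X_n]$ with the standard total-degree grading, via the identification of a lattice point $(a_1, \ldots, a_n) \in (d\cdot \Delta_n)\cap \Z^n$ (at graded level $d$) with the monomial $X_0^{d - a_1 - \cdots - a_n} X_1^{a_1}\cdots X_n^{a_n}$. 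Under this identification, $\Proj(\C[P_M]) \simeq \PP^n$.

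Next, I would set up a bijection between $\Faces(\Delta_n)\cup\{\Delta_n\}$ and the subsets $I\subseteq\{0,1,\ldots,n\}$ by sending $I$ to the face $F_I := \Conv(\{v_i : i\in I\})$, where $v_0 := 0$ and $v_i := e_i$ for $i\geq 1$. This bijection satisfies $\dim(F_I) = |I|-1$. The key combinatorial observation is that, under the monomial identification above, the lattice points of $d\cdot F_I$ correspond bijectively to the degree-$d$ monomials in the subring $\C[X_i : i\in I]$: the coordinates indexed by $\{1,\ldots,n\}\setminus I$ must vanish, and the $X_0$-slack exponent is nonnegative only when $0\in I$ or the other exponents sum to at most $d$. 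Therefore the set $E_{F_I}$ from Theorem~\ref{thm:regSeqInPolytopalAlg} coincides with $\{j\in\conSet{1,r} : A_j\cap \C[X_i : i\in I]\ne\emptyset\}$.

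Applying the second item of Theorem~\ref{thm:regSeqInPolytopalAlg}, the sequence $f_1,\ldots,f_r$ is regular if and only if, for every face $F_I$ (including $\Delta_n = F_{\{0,\ldots,n\}}$), one has $|E_{F_I}| \geq \dim(F_I) + r - n = |I| - 1 + r - n$, which is exactly the stated inequality. The edge cases $I = \emptyset$ (vacuous since $d_j > 0$ forces $-1 + r - n \leq 0$) and $I = \{0,\ldots,n\}$ (automatic since the $A_j$ are nonempty) contribute no nontrivial constraint. There is no substantial obstacle in this specialization; the only items requiring care are the verification that the vertex $v_0 = 0$ corresponds to the variable $X_0$ under the monomial bijection, and the dimension count $\dim(F_I) = |I|-1$, both of which are routine.
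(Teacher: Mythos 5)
Your proposal is correct and follows essentially the same route as the paper: specialize Theorem~\ref{thm:regSeqInPolytopalAlg} to the standard simplex, identify faces with subsets $I\subseteq\conSet{0,n}$ of the variables via $\dim(F_I)=\card{I}-1$, and observe that lattice points of $d\cdot F_I$ are exactly the degree-$d$ monomials in $\C[X_i : i\in I]$. The only (cosmetic) difference is that the paper works with the symmetric simplex $\Conv(e_0,\ldots,e_n)$ inside the affine lattice $V\cap\Z^{n+1}$, whereas you dehomogenize to $\Conv(0,e_1,\ldots,e_n)\subset\Z^n$ and track the $X_0$-slack exponent explicitly; both are routine and yield the same bijection.
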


\begin{proof}
Consider the affine hyperplane
$V =\{(v_0,\ldots, v_n)\in\R^{n+1}\mid \sum_{0\leq i\leq n} v_i =
1\}\subset\R^{n+1}$ and let $M = V\cap \mathbb Z^{n+1}$ be the associated
affine lattice.
  Let $\Delta\subset M_\R$ denote the $n$-dimensional simplex which is the
  convex hull of the unit vectors in $\R^{n+1}$. 
  Then lattice points in faces of $d\cdot\Delta$ (for some $d>0$) correspond to
  homogeneous monomials of degree $d$ with respect to subsets of variables in $\C[X_0,\ldots, X_n]$.
  Applying \Cref{thm:regSeqInPolytopalAlg} with $M=V\cap \mathbb Z^{n+1}$ and $P_M
  = \Delta$ proves the corollary.
\end{proof}

Another interesting application of \Cref{thm:regSeqInPolytopalAlg} is
the context of weighted homogeneous systems: it provides us with a
characterization of the degrees for which there exist weighted
homogeneous regular sequences.
Given $\mathbf a = (a_0,\dots,a_n) \in (\mathbb Z_{> 0}^{n+1})$, 
let $\PP(a_0,\dots,a_n)$ denote the weighted projective space with
weights $(a_0,\dots,a_n)$. It is known that up to isomorphism we can
assume without loss of generality that, for every $i \in \conSet{0,n}$, the
greatest common divisor of $a_0,\dots,a_{i-1},a_{i+1},\dots,a_n$ is
$1$ \cite{delorme1975espaces}.
The weighted projective space is a normal toric variety.  To construct
its corresponding fan, consider the affine hyperplane
$V =\{(v_0,\ldots, v_n)\in\R^{n+1}\mid \sum_{0\leq i\leq n} a_i v_i =
1\}$ and set $M = V\cap \mathbb Z^{n+1}$. Then $M$ is an affine
$n$-dimensional lattice and $M\otimes \R =V$.
The dual of $M$ is
$N=\Z^{n+1}/(a_0,\ldots, a_n)\Z$. Set
$\Delta^{\mathbf a} := \{(v_0,\ldots, v_n)\in V\mid \forall i\in
\{0,\ldots, n\}, v_i\geq 0\}$. Then $\Delta^{\mathbf a}$ is a simplex,
and the data of its normal fan $\Sigma$ with respect to $N$ defines
$\PP(a_0,\ldots, a_n)$. Notice that the vertices of
$\Delta^{\mathbf a}$ need not be lattice points. A useful observation
is that $\C[\Delta^{\mathbf a}_M]$ is isomorphic (as a graded algebra)
to $\C[X_0,\ldots, X_n]$ equipped with the weighted grading
$\deg(X_0^{c_0}\cdots X_n^{c_n})= \sum_i a_i c_i$.
By \cite[Exercise 4.1.5]{cox2011toric}, the divisor class group
$\Cl(\PP(a_0,\dots,a_n))$ is isomorphic to $\Z$, and the class of a
divisor $\sum_{i=0}^n \lambda_i D_i$ (where $D_i$ is the
torus-invariant divisor associated to the ray
$\mathbb R_{\geq 0} \overline{e_i}\in N\otimes \R$, where $e_i$ is the $i$-th
canonical vector of $\Z^{n+1}$) is $\sum_{i=0}^n \lambda_i a_i$. For
$d \in \N$, the $d$-th slice $S_d$ of the Cox ring is generated by the monomials
$\{X^v : v \in \N^{n+1}, \sum_i a_i \, v_i = d\}$.

In what follows, we use \Cref{thm:regSeqInPolytopalAlg} to derive a
criterion that determines when there are regular sequences of degrees
$d_1,\dots,d_r$. By doing so, we solve an open question
asked in \cite[Remark before Sec.~3]{faugere_complexity_2016}.

\begin{theorem}\label{thm:weighthomregseq}
  Let $(a_0,\ldots, a_n)$ be positive weights, $r\leq n$ and $(d_1,\ldots,
  d_r)$ be positive integers.
  Then an extremal-generic sequence
  $f_1,\ldots, f_r\in\C[X_0,\ldots, X_n]$ of weighted homogeneous polynomials of respective weighted degrees
  $d_1,\ldots, d_r$ is regular if and only if for any subset $J\subset
  \conSet{0, n}$,
  the inequality $\card{\{ i\in \conSet{1, r} : d_i\in\sum_{j\in J}a_j\,\Z_{\geq 0}\}}\geq
  \card{J} -1 + r-n$ holds true.
\end{theorem}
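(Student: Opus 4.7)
The plan is to deduce the theorem from \Cref{thm:regSeqInPolytopalAlg} applied to the polytopal algebra $\C[\Delta^{\mathbf a}_M]$ associated to the rational simplex $\Delta^{\mathbf a}\subset M_\R$ with vertices $e_i/a_i$, for $i=0,\ldots,n$. As recalled in the discussion preceding the theorem, this polytopal algebra is isomorphic as a graded $\C$-algebra to $\C[X_0,\ldots,X_n]$ with the weighted grading $\deg(X_0^{c_0}\cdots X_n^{c_n}) = \sum_i a_i c_i$. Under this isomorphism, an extremal-generic sequence of weighted homogeneous polynomials of weighted degrees $d_1,\ldots,d_r$ in $\C[X_0,\ldots,X_n]$ corresponds to an extremal-generic sequence in $\C[\Delta^{\mathbf a}_M]$ whose supports $\A_i$ are the full monomial sets $(d_i\cdot\Delta^{\mathbf a})\cap M$, because the isomorphism is a bijection on monomials and preserves Newton polytopes.

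Since $\Delta^{\mathbf a}$ is a simplex, its nonempty faces (including $\Delta^{\mathbf a}$ itself) are in bijection with nonempty subsets $J\subseteq\conSet{0,n}$: to such a $J$ we associate the face $F_J:=\Conv(\{e_j/a_j : j\in J\})$, which has dimension $\card{J}-1$. Under the identification between monomials of degree $d$ in $\C[\Delta^{\mathbf a}_M]$ and lattice points of $d\cdot\Delta^{\mathbf a}$, a monomial $X^v$ of weighted degree $d$ lies in $d\cdot F_J$ if and only if $v_j=0$ for every $j\notin J$, i.e., if and only if $X^v\in\C[X_j : j\in J]$. Such a monomial exists if and only if $d$ lies in the numerical semigroup $\sum_{j\in J} a_j\,\Z_{\geq 0}$. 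Hence, in the notation of \Cref{thm:regSeqInPolytopalAlg}, one has
$$E_{F_J}=\{i\in\conSet{1,r} : d_i\in\textstyle\sum_{j\in J}a_j\,\Z_{\geq 0}\}.$$

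By \Cref{thm:regSeqInPolytopalAlg}, the sequence is regular if and only if $\card{E_{F_J}}\geq (\card{J}-1)+r-n$ for every nonempty $J\subseteq\conSet{0,n}$. This matches the condition in the statement, except that the statement also includes the case $J=\emptyset$. In this case $\sum_{j\in\emptyset}a_j\,\Z_{\geq 0}=\{0\}$; since all $d_i$ are positive, $\card{E_\emptyset}=0$ and the condition becomes $0\geq r-n-1$, which is implied by $r\leq n$. Hence including $J=\emptyset$ does not change the content of the statement. The main obstacle in this plan is the careful bookkeeping in the second step --- namely, verifying that lattice points of $d\cdot F_J$ correspond precisely to monomials of weighted degree $d$ in the variables indexed by $J$ --- but this is routine once the polytopal algebra framework is in place, and the bulk of the work is already packaged in \Cref{thm:regSeqInPolytopalAlg}.
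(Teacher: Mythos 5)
Your proposal is correct and follows essentially the same route as the paper: both apply \Cref{thm:regSeqInPolytopalAlg} to the polytopal algebra of the rational simplex $\Delta^{\mathbf a}$, identify the faces $F_J$ (of dimension $\card{J}-1$) with subsets $J\subseteq\conSet{0,n}$, and observe that $d_i\cdot F_J$ contains a lattice point if and only if $d_i\in\sum_{j\in J}a_j\,\Z_{\geq 0}$. Your explicit treatment of the vacuous case $J=\emptyset$ is a small additional bookkeeping point the paper leaves implicit, but the argument is the same.
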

\begin{proof}
  First, we notice that faces of $\Delta^{\bm a}$ 
  correspond bijectively to subsets of $\conSet{0, n}$: for $J\subset\conSet{0,
  n}$ we define the face $$F_J:=\{(v_0,\ldots, v_n)\in \R_{\geq 0}^{n+1} :
  \sum_{j\in\conSet{0, n}} a_j v_j = 1\text{ and } v_j = 0\text{ for }j\notin
  J\}.$$ Note that $\dim(F_J) = \card{J}-1$.
  The proof follows straightforwardly from \Cref{thm:regSeqInPolytopalAlg} and the fact that
  $d_i\in\sum_{j\in J}a_j\,\Z_{\geq 0}$ if and only if the face
  $d_i\cdot F_J$ contains a lattice point.
\end{proof}

\section{Theoretical complexity}\label{sec:complexity}

In this section we prove that characterizing the degrees leading
generically to complete intersections and/or regular sequences is a
computationally hard problem in general, in contrast to what happens
for generic homogeneous and multihomogeneous systems.

In the standard
homogeneous polynomial algebra $\C[x_0,\dots,x_n]$, any sequence $r\leq n$ of
generic dense homogeneous polynomials is regular, and for $r=n$ the generic
degree equals the Bézout bound. The
situation is more complicated in the multi-homogeneous case: determining a partition of the variables which minimizes the multi-homogeneous Bézout bound is an
NP-hard question \cite{malajovich2007computing}.
Still, determining if a generic
multi-homogeneous system with respect to a given a sequence of multi-degrees defines a complete intersection in the associated
multi-projective space can be achieved
by checking whether a permanent associated to the multi-degrees vanishes
\cite{mclennan2002expected}. This can be done in polynomial time by using Barvinok's algorithm~\cite{barvinok1997computing}.

The aim of this section is to argue that we cannot expect such polynomial-time
algorithms in the general case.
We first observe that the existence of complete intersections --- or
equivalently, regular sequences --- in the weighted projective space
is an NP-hard problem. Indeed, there is a polynomial reduction from
the knapsack problem, which is a known NP-complete
problem~\cite[Sec. 16.6, Eq. (27)]{schrijver1998theory}. A version of
this problem, see \cite{aardal2002hard}, states that given
$a_0,\dots,a_n,b \in \Z_{>0}$, deciding if there exist
$x_0,\dots,x_n \in \Z_{>0}$ such that $\sum a_i x_i = b$ is
NP-complete. This problem can be restated in our context: consider the
weighted projective space $\PP(a_0,\dots,a_n)$ and let $f$ be a
generic form of degree $b$. Then the dimension of the subscheme
defined by $f$ is $n-1$ if and only if there is
$(x_0,\dots,x_n) \in \Z_{>0}$ such that $\sum a_i x_i = b$.

One could wonder if determining the existence of these regular
sequences remains a hard problem even if we are given all the
monomials of the respective degrees. We will show that this is the
case by considering the problem of deciding if, given a tuple of
monomials, we can find a (classical) regular sequence supported on
them. We will reduce the NP-complete \emph{hitting set problem}
\cite{Karp1972} to this problem.
Let $R$ be a set of subsets of $\conSet{0, n}$.  A hitting set of $R$
is a subset $q \subseteq \conSet{0, n}$ such that $q$ intersects every
set in $R$.
The hitting set problem asks, given $R$ and
$k \in \N$, if there is a hitting set $q$ of $R$ of cardinality at most $k$.
Notice that we can assume without loss of generality that all subsets in $R$
are nonempty subsets of $\conSet{0, n}$.

\begin{proposition}
  Let $R$ be a set of nonempty subsets of $\conSet{0, n}$ and
  let $\mathcal A\subset\C[X_0,\ldots, X_n]$ be a monomial set containing for each set $r \in R$ a
  unique monomial $X^v$ of degree $n+1$ such that $v_i = 0$ if and only if
  $i\notin r$.
  Let $Y$ be the subscheme of $\PP^n$ corresponding to a generic
  homogeneous system of $n+1$ polynomials of degree $n+1$, all with the
  same monomial
  support~$\mathcal A$.
  Then the codimension of $Y$ is the minimum over the cardinalities of the hitting sets of~$R$.
\end{proposition}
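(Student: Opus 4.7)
The plan is to apply Theorem~\ref{thm:dimensionGenericToricSystem} directly to the toric variety $\PP^n$, whose fan has rays $\rho_0,\ldots,\rho_n$ indexed by the variables $X_0,\ldots,X_n$ and whose cones are in bijection with proper subsets $J \subsetneq \conSet{0,n}$ via $\sigma_J = \sum_{i\in J}\R_{\geq 0}\rho_i$, so that $\dim(\sigma_J) = |J|$ and $\dim(O(\sigma_J)) = n-|J|$. A generic system is in particular extremal-generic (by Lemma~\ref{lem:genericity}), so the theorem applies.

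First I would determine, cone by cone, which monomials of $\mathcal A$ survive the restriction $f\mapsto f^{\sigma_J}$. For the monomial $X^v\in\mathcal A$ indexed by $r\in R$ (so $v_i\neq 0 \iff i\in r$), the analysis around Lemma~\ref{lem:disting_cases} gives $(X^v)^{\sigma_J}\neq 0$ if and only if no variable appearing in $X^v$ is forced to vanish on $O(\sigma_J)$, that is, iff $r\cap J = \emptyset$. Since all $n+1$ polynomials share the common support $\mathcal A$, the set $\mathcal A^{\sigma_J}$ is the same for each of them, and exactly two cases occur: either $J$ is not a hitting set of $R$, some $r\in R$ avoids $J$, and $E_{\sigma_J} = \conSet{1,n+1}$; or $J$ is a hitting set, every $r\in R$ meets $J$, and $E_{\sigma_J} = \emptyset$.

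The main step is the essentiality analysis. When $E_{\sigma_J}=\emptyset$, the family is essential vacuously. When $E_{\sigma_J} = \conSet{1,n+1}$, all $A_i$ equal the common set $\mathcal A^{\sigma_J}\subset M\cap\sigma_J^\perp$; a quick computation shows that the affine span of any Minkowski sum of $k$ copies of a fixed set in a lattice coincides with the affine span of the set itself, so Definition~\ref{def:essentialFam} applied to $E = E_{\sigma_J}$ reduces to $\dim_\R \AffineSpan(\mathcal A^{\sigma_J}) \geq n+1$, which is impossible inside a lattice of rank $n - |J| \leq n$. Hence $E_{\sigma_J}$ is essential precisely when $J$ is a hitting set.

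Plugging back into Theorem~\ref{thm:dimensionGenericToricSystem} yields
\[
  \dim(Y) \;=\; \max_{\substack{J\subsetneq\conSet{0,n}\\ J\text{ hitting set}}}\bigl(n - |J|\bigr) \;=\; n - \min\bigl\{|J| : J\subsetneq\conSet{0,n}\text{ is a hitting set}\bigr\},
\]
with the convention that $Y$ is empty (and so $\codim(Y) = n+1$) when no proper $J$ is a hitting set. Since $\conSet{0,n}$ itself is always a hitting set of cardinality $n+1$, this last case is precisely when the minimum hitting set has cardinality $n+1$, and in every case $\codim(Y)$ equals the minimum hitting set size. The one delicate step is the essentiality collapse induced by repeating the same support $n+1$ times; everything else is a direct translation through the orbit-cone correspondence.
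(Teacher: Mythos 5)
Your proof is correct and follows essentially the same route as the paper's: both reduce to the main dimension theorem, observe that a family of $n+1$ identical supports in an at-most-$n$-dimensional lattice can only be essential when it is empty, and identify the cones/faces with empty restricted support as exactly the hitting sets. The only difference is cosmetic — you work with the cones $\sigma_J$ of the fan of $\PP^n$ via Theorem~\ref{thm:dimensionGenericToricSystem}, while the paper works dually with the faces of the simplex via Theorem~\ref{thm:polytopalcomplinter} (your $J$ is the paper's $B_F$) — and you additionally handle the empty-$Y$ edge case explicitly, which is a small improvement.
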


\begin{proof}
  First we observe that if $q$ is a hitting set of $R$ then the
  polynomial system vanishes on the closed subscheme of $\PP^n$ associated to $\ideal{X_i : i \in q}$. Hence, the size
  of the minimal hitting set of $R$ is an upper bound for the
  codimension of the projective variety defined by the system.
  Notice also that $\C[X_0,\ldots, X_n]$ equals the polytopal algebra $\C[P]$, where $P$ is the simplex
  $\{(v_0,\ldots, v_n)\in\R_{\geq 0}^{n+1} \mid v_0+\dots+v_n = 1\}$, whose faces correspond to subsets of $\conSet{0, n}$.
  Following the notation in \Cref{thm:polytopalcomplinter}, as all the
  $n+1$ polynomials have the same Newton polytope belonging to $\R^n$,
  it follows that, for each face $F$, either $
  \mathcal A \cap ((n+1)\cdot F)$ is empty or
  $E_F$ is not essential. Then, by \Cref{thm:polytopalcomplinter}, the codimension of $Y$ is 
  $$
  \max_{\substack{
      F\in\Faces(P) \\
  \mathcal A \cap ( (n+1)\cdot F) = \emptyset}} \left( \dim(F) \right).
  $$
  By the orbit-cone correspondence, each face $F$ is associated
  to a projective space $\PP^{\dim(F)}$ given by the vanishing of a
  subset of variables $B_F \subseteq \{x_0,\dots,x_n\}$ of size
  $\card{ B_F } = n - \dim(F)$.
  Therefore, $\mathcal A \cap ( (n+1)\cdot F) = \emptyset$ if and only if
  every monomial in $\mathcal A$ involves
  at least one variable in $B_F$.
  So, the indices in the variables $B_F$ form a hitting set of $R$ of
  size $n - \dim(F)$.
  Hence, the maximal dimension of a face $F$ such that $(n+1) \cdot F$
  does not intersect $\mathcal A$
  corresponds to the minimal cardinality of a hitting set $B_F$.
\end{proof}

\section*{Acknowledgements}
The first author was partially funded by the ERC under the European’s
Horizon 2020 research and innovation programme (grant agreement
787840) and a public grant from the Fondation Mathématique Jacques
Hadamard.

\bibliographystyle{elsarticle-num}
\bibliography{biblio}

\end{document}